\documentclass{article}

\usepackage[preprint,nonatbib]{neurips_2019}



\usepackage{booktabs} 
\usepackage[ruled]{algorithm2e} 
\usepackage{amsmath,amsfonts,amssymb,graphicx,indentfirst,amsthm,bm}
\usepackage{subcaption}
\usepackage[utf8]{inputenc}
\usepackage[normalem]{ulem}
\usepackage{dsfont}
\newtheorem{definition}{Definition}
\newtheorem*{remark}{Remark}

\newenvironment{customthm}[1]
  {\innercustomthm}
  {\endinnercustomthm}

\newenvironment{customlemma}[1]
  {\innercustomlemma}
  {\endinnercustomlemma}

\newcommand{\newThreshNotation}{\tilde{\beta}^{(\epsilon)}_r}

\newcommand{\newThreshNotationEpsn}{\tilde{\beta}^{(\epsilon_n)}_r}

\newcommand{\tbeta}{{\tilde{\beta}_{x_1}}}
\newcommand{\dtbeta}{{\tilde{\beta}_{x_0, x_1}}}
\newcommand{\obeta}{{\beta^*_{R}}}
\DeclareMathOperator*{\argmax}{argmax}

\usepackage{ourCommonMacros}
\usepackage{xcolor}
\usepackage{tikz,tkz-tab}

\usepackage[utf8]{inputenc} 
\usepackage[T1]{fontenc}    
\usepackage{hyperref}       
\usepackage{url}            
\usepackage{booktabs}       
\usepackage{amsfonts}       
\usepackage{nicefrac}       
\usepackage{microtype}      

\title{Robust Stackelberg buyers in repeated auctions}

%

  \author{
  Cl\'ement Calauz\`enes\\
  Criteo AI Lab\\
 \texttt{c.calauzenes@criteo.com} 
  \And
  Thomas Nedelec \\
 Criteo AI Lab, ENS Paris Saclay \\
 \texttt{nedelec@cmla.ens-cachan.fr}
 \And
Vianney Perchet \\
ENS Paris Saclay, Criteo AI Lab,  \\
 \texttt{vianney.perchet@normalesup.org}
 \And
 Noureddine El Karoui \\
 UC, Berkeley, Criteo AI Lab \\
 \texttt{nkaroui@berkeley.edu}
  }

\begin{document}

\maketitle

\begin{abstract}
We consider the practical and classical setting where the seller is using an exploration stage to learn the value distributions of the bidders before running a revenue-maximizing auction in a exploitation phase. In this two-stage process, we exhibit practical, simple and robust strategies with large utility uplifts for the bidders. We quantify precisely the seller revenue against non-discounted buyers, complementing recent studies that had focused on impatient/heavily discounted buyers. We also prove the robustness of these shading strategies to sample approximation error of the seller, to bidder's approximation error of the competition and to possible change of the mechanisms.   
\end{abstract}

\section*{Introduction}
Repeated auctions play an important role in modern economics as they are widely used in practice to sell e.g.  electrical power or digital goods such as ad placements on big online platforms. Understanding the precise interactions between the buyers and the sellers in these auctions is key to assess the balance of power on big online platforms. In practice, most of the online auctioneers are using tools at the intersection of classical auction literature \cite{Myerson81} and statistical learning theory. They take advantage of the enormous amount of data they gather on the behavior of the buyers to learn optimal auctions and maximize revenue for the platforms. 

Myerson showed how to design an incentive-compatible revenue-maximizing auction once the seller knows the value distributions of the buyers \cite{Myerson81}. If the seller has perfect knowledge of these distributions, she can define the allocation and payment rules maximizing her expected revenue.

How does the seller learn these value distributions in practice to create her optimal revenue-maximizing auction ? A very rich line of work \cite{OstSch11,cole2014sample,medina2014learning,morgenstern2015pseudo,dhangwatnotai2015revenue,huang2018making} has assumed that the seller has access to a finite sample of bidders' valuations, coming from past bids in truthful auctions. The game they consider is divided in two stages. The first round consists in several truthful auctions such as a second price auction without reserve or with random reserve where the bidders bid truthfully providing the seller with draws from their value distribution. Then, if the bidders are not strategic in the first stage, the seller can learn an approximation of the revenue-maximizing auction based on these samples and run a revenue-maximizing auction in the second stage. 

A first way to prevent the bidders from being strategic is to adapt the mechanism (e.g. the reserve price shown) based on the competition of a bidder rather than based on the bidder themselves \cite{ashlagi2016sequential,kanoria2017dynamic,epasto2018incentive}. A limitation of this type of approach is the need to rely on (partial) symmetry of the bidders: any bidder is assumed to have some competitors with (almost) the same value distribution as them. In particular and as noticed in \cite{epasto2018incentive}, it cannot handle the existence of any dominant buyer, i.e., a buyer with higher values than the other bidders. This is therefore a somewhat limited setting as revenue-optimizing mechanisms are especially needed when buyers are heterogenous.

On big online platforms, the same dominant bidders are actually repeatedly interacting with a seller,  billions of times a day in the case of online advertising. This setting has been considered from either the seller's point of view in  \cite{amin2013learning,amin2014repeated, CesGenMan13, mohri2015revenue,golrezaei2018dynamic} or the bidders' standpoint \cite{kanoria2017dynamic, tang2016manipulate, nedelec2018thresholding, nedelec2019learning}. All these works can be seen as a special instance of a Stackelberg game \cite{differential_games_book_CUP_2000} where the bidders know the rules of the mechanism and have the choice of the bid distribution they will disclose to the seller. The main takeaway from these works is that if one of the seller or bidders is extremely dominant in term of \emph{patience} -- i.e. longer time horizon of revenue optimization -- then this player can get the best revenue/utility to hope for: the payment of a \emph{revenue-maximizing auction} if the seller is strongly dominant, the utility of a \emph{2$^\text{nd}$-price auction with no reserve price} if the bidders are strongly dominant and are all strategic. Yet, these payments have been exhibited in extreme asymptotic cases. In the line of work following from \cite{amin2013learning, mohri2015revenue}, the bidder is assumed to be finitely patient while the seller is infinitely patient. In this strongly unbalanced setting, the seller is able to begin with exploration stages long enough to force the bidder to be truthful, allowing the seller to play the revenue-maximizing auction in the (longer) exploitation phase. On the contrary, if the bidders are infinitely patient and the seller has to update the mechanism in finite time, \cite{tang2016manipulate, nedelec2018thresholding} exhibited optimal strategies that bidders can enforce. The remaining crucial question is: what happens in between these extreme cases, in more realistic conditions?

Our work aims at providing answers to this question by studying the robustness of the bidders' strategic behavior to more realistic conditions:
\begin{enumerate}
\item Are these strategic behaviors robust to strategic sellers implementing two-stage process (exploration/exploitation) such as the ones described in \cite{amin2013learning, golrezaei2018dynamic}? How is the value shared between seller and bidders in non-asymptotic ``patience" conditions?
\item Are the possible strategic behaviors robust to the seller optimizing the mechanism with a finite number of samples?
\item How does the lack of knowledge of the competition (i.e. other bidders's value distributions) negatively impact the strategic bidder's utility?
\end{enumerate}

\section{Framework and contributions}
As it is classical in auction theory \cite{krishna2009auction}, we assume the valuation of a bidder $v \in \mathbb{R}$ is drawn from a specific distribution $F$ (the distribution can be different from one bidder to the other);  a bidding strategy is a mapping $\beta$ from $\mathbb{R}_+$ into  $\mathbb{R}_+$ that provides the actual bid $B=\beta(v)$ when the value is $v$. As a consequence, the distribution of bids $F_{B}$ is  the push-forward of $F$ by $\beta$. We assume from now on that the support of $F$ is $[0, b) \subseteq [0, +\infty]$. If $b = +\infty$, we additionally assume that $F$ verifies $1-F(x) = o(x^{-1})$ to avoid problems with the definition of the optimal reserve price. Unless otherwise noted $F$ is assumed to be regular.

The seller has an estimate of the bidders' bid distribution, typically coming from access to past bids. Then, she uses an approximation of a revenue-maximizing auction based on her estimate of the bid distributions. We will consider that this is a lazy 2$^\text{nd}$-price auction \cite{paes2016field} whose reserve price is optimized to maximize the \emph{seller's revenue}. For the lazy 2$^\text{nd}$-price auction, the optimal reserve price happens to be the \emph{monopoly price} $r^*_{\beta}$ which maximizes the \emph{monopoly revenue} 
$$
R_{B}(r) = r(1-F_{B}(r))
$$ extracted by the seller on the current bidder. We can formalize this as a two-step process \cite{amin2013learning,cole2014sample}: 
\paragraph{General two-step process with commitment:} A general two-step process with commitment is a tuple of the form $P= (G, H,r, F)$ that is defined as follows: 
\begin{description}
\item[1 -- exploration] The seller runs a lazy 2$^\text{nd}$-price auction. The current bidder faces a competition $G$. In this step, the potential randomized or deterministic reserve price is denoted by the distribution $H$.
\item[2 -- exploitation] The seller runs a lazy 2$^\text{nd}$-price auction with reserve price $r$, possibly personalized. The current bidder faces the same competition $G$ as in the first step.
\end{description}

To make explicit the dependency on the bidder's strategy $\beta$ on the design of the auction of the second step when optimizing the reserve price based on the observation of the first period, we use the notation $P^{\beta} = (G, H, r^*_{\beta}, F)$.
The tradeoff between exploration and exploitation from the seller standpoint was introduced in \cite{amin2013learning} and refined in \cite{mohri2015revenue, golrezaei2018dynamic}. They introduce a parameter $\alpha$, $0\leq \alpha<1$, to define this trade-off, assuming the ratio of length between the first and the second stage is equal to $\alpha/1-\alpha$. \cite{amin2013learning} proved that if bidders are non-discounted buyers, there must exist a good strategy for them in this mechanism, forcing the seller to suffer a regret linear in the number of auctions. We are interested in solving the Stackelberg game faced by bidders when they know the seller is using this classical mechanism to learn their value distribution. Bidders are the leaders in this framework since they know the mechanism used by the seller and can choose their strategy accordingly.

We assume that the strategic bidder commits to the same strategy $\beta$ (inducing the same distribution of bids $F_{B}$) in both phases. This assumption accounts for the fact that in practice sellers regularly update their priors based on recent past bids, forcing the bidders to commit in the long-term to the bid distribution they want the seller to use as prior. Otherwise, the seller might discover new aspects of the buyer's bid distribution and change their mechanism accordingly more than once as time evolves.

In this framework, the objective of the bidder is to choose a strategy $\beta$ to maximize a weighted sum of her utility in both phases:
\begin{align}
U_\alpha(\beta) = \alpha U_1(\beta) + (1-\alpha) U_2(\beta)
\end{align}
where $U_i$ is the expected utility of the bidder in stage $i$. The $\alpha \in [0,1]$ quantifies the length of each stage \cite{amin2013learning}. More precisely, $U_1$ is the expected utility of a lazy second price without reserve/with random reserve and $U_2$ is the expected utility of a lazy second price with monopoly reserves corresponding to $F_B$.

\cite{tang2016manipulate, nedelec2018thresholding} exhibited strategies for the case where the bidders only optimize for the utility of the second phase, i.e. $\alpha=0$. However, in practice, sellers use an exploration stage to learn the value distribution. Taking care of this exploration phase is of great importance for the bidders since optimizing only the second stage can lead to large loss of utility during the first stage. 

In Section \ref{sec:two-stage-process}, we extend the approach of \cite{tang2016manipulate, nedelec2018thresholding} to the two-stage game showing that the strategies which are optimal when the bidder is only optimizing her utility in the second stage are suboptimal when she takes into consideration the exploration stage of the seller. We find in Section \ref{sec:welfareSharing} the optimal strategy in this framework and study the behavior of this strategy as a function of the length of the exploration stage. To study whether the strategies are robust to the presence of other strategic bidders in the game, we also prove the existence of a Nash equilibrium under certain conditions and we compute the utility of the bidders and the revenue of the seller at this Nash equilibrium when it exists. 

Unlike \cite{amin2013learning, golrezaei2018dynamic}, we consider the objective function of the bidders to be the expected utility of the two stages instead of the utility computed on a finite number of auctions. Indeed, as there exists a high number of repeated auctions, billions a day in the case of online advertising, optimizing directly the expected utility makes sense from a bidder's standpoint. We extend our results to finite sample sizes in Section \ref{sec:empirical_estimation}. We finally show the strategies are robust to an estimation of the bidders' competition in Section \ref{sec:robust_competition} and to a change of mechanism in Section \ref{sec:change_mechanism}. It shows their robustness to most of the difficulties bidders practically face and provides concrete solutions to solve these problems.

\section{Understanding the strategic reaction of the players}
\label{sec:two-stage-process}
In order to get a good understanding of how the welfare is shared between seller and bidders when $\alpha$ moves from $0$ to $1$, we need to study the strategies of the different players. First, on the seller's side, she chooses the distribution of reserve prices $H$ faced by the strategic bidder in the first phase. At this point, we assume the seller does not have any knowledge about the bidders (except the support of the value distribution). Hence, she mostly has two choices for $H$: either no reserve price or a uniform distribution \cite{amin2014repeated}. In the second stage, the seller is assumed welfare-benevolent, i.e. if she has the choice between two reserve prices equivalent in terms of payment, she chooses the lowest one. Hence, in the second stage, the seller chooses $r^*_\beta = \inf \argmax_r R_B(r)$. 

From a bidder's standpoint, \cite[Th. 6.2]{tang2016manipulate} exhibits the best response for the extreme case $\alpha = 0$. Unfortunately, this best response induces a complicated optimization problem for the seller: $R_{B}$ is non-convex with several local optima and the global optimum is reached at a discontinuity of $R_{B}$ (see Fig. \ref{fig:optimal_revenues}). This is especially problematic if sellers are known to optimize reserve prices conditionally on some available context using parametric models such as Deep Neural Networks \cite{dutting2017optimal} whose fit is optimized via first-order optimization and hence would regularly fail to find the global optimum. This is undesirable for the bidders: in any Stackelberg game, the leader's advantage comes from being able to predict the follower's strategy.

To address this issue, \cite{tang2016manipulate,nedelec2018thresholding} proposed a \emph{thresholding strategy} that is the best response in the restricted class of strategies that ensure $R_{B}$ to be concave as long as $R_{X}$ is so, when $\alpha = 0$. We show this strategy can be extended to the two-stage process with a slight modification. The requirement is to ensure $R_{B_i}$ to be \emph{quasi-}concave as long as $R_{X}$ is so. Then, we have the following result.
\begin{theorem}
Given a two-step process $(G, H, r_{\beta}^*, F)$ with $r^*_\beta = \inf \argmax_r R_B(r)$ and such that $F$ is quasi-regular\footnote{We say a distribution is quasi-regular if the associated revenue curve $R_{X_i}$ is quasi-concave}, $\forall \alpha\in[0,1], \,\exists 0 \leq x_0 \leq x_1$ such that the best quasi-regular response (maximizing $U_\alpha(\beta)$ with $F_B$ regular) is 
\vspace{-0.5em}
\begin{align}\label{eq:defDoubleThresh}
\dtbeta(x)
=
\indicator{x \leq x_0}x
+
\indicator{x_0 < x \leq x_1}\frac{R}{1-F(x)}
+
\indicator{x > x_1}x
~~~~~\text{where}~~~~~ R = x_1 (1 - F(x_1))
\end{align}
Moreover, we have $x_1 = \sup \{x: x(1-F(x)) \geq R\}$ and $x_0 \leq \overline{x}_0 = \inf \{x: x(1-F(x)) \geq R\}$.
\label{dfn:thresholded_beta}
\end{theorem}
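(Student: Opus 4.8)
The plan is to reduce this to a finite‑dimensional problem: first show that an optimal quasi‑regular response may be taken of the double‑threshold shape \eqref{eq:defDoubleThresh}, then establish existence inside the resulting two‑parameter family.

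\emph{Step 1: closed forms and the source of the trade‑off.} In a lazy second‑price auction whose effective threshold (the maximum of the top competing bid and the reserve) has c.d.f.\ $\Gamma$, a bidder of value $v$ bidding $\beta(v)$ obtains $(v-\beta(v))\Gamma(\beta(v))+\int_0^{\beta(v)}\Gamma(t)\,dt$, so $U_1(\beta)=\mathbb{E}_{v\sim F}\big[(v-\beta(v))\Gamma^{(1)}(\beta(v))+\int_0^{\beta(v)}\Gamma^{(1)}(t)\,dt\big]$ with $\Gamma^{(1)}$ built from $G,H$, and $U_2(\beta)$ is the same with $\Gamma^{(2)}$ built from $G$ and the reserve $r^*_\beta=\inf\argmax_r R_B(r)$. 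Two facts follow. (a) For any \emph{fixed} threshold law the integrand is maximized pointwise at $\beta(v)=v$, the loss from a deviation being the nonnegative quantity $\int_{v\wedge\beta(v)}^{v\vee\beta(v)}|\Gamma(\beta(v))-\Gamma(t)|\,dt$. (b) The only reason to deviate from truthfulness is to drag the \emph{endogenous} reserve $r^*_\beta$ below its truthful value $\argmax_r R_X(r)$, while --- since $F$ is only quasi‑regular --- keeping $R_B$ quasi‑concave so that $r^*_\beta$ remains the well‑behaved $\inf\argmax$. Hence the bidder's problem is exactly: balance the $\alpha$‑weighted truthfulness penalty against the phase‑2 gain from a lowered monopoly reserve.

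\emph{Step 2: reduction to the double‑threshold shape (the crux).} Given any admissible $\beta$, put $R=\max_r R_B(r)$; I would build a strategy of shape \eqref{eq:defDoubleThresh} with at least the same $U_\alpha$. The key remark is that ``$R_B$ has maximum $R$'' means that at any point of deviation $R_B(\beta(x))=\beta(x)(1-F(x))\le R$, i.e.\ $\beta(x)\le R/(1-F(x))$ (the iso‑revenue curve); and by (a), among deviations lying on or below it, equality $\beta(x)=R/(1-F(x))$ dominates all strict inequalities without changing $R_B\le R$ nor $r^*_\beta$. Monotonicity of $\beta$ then forces the deviation set to be a single interval $(x_0,x_1)$: above $x_1:=\sup\{x:R_X(x)\ge R\}$ the strategy must be truthful (it cannot continue on the iso‑revenue curve, which exceeds the diagonal beyond $x_1$, and later return to the diagonal without $\beta$ decreasing; and any other non‑truthful behaviour there either pushes $R_B$ above $R$ or is a pointless penalty), forcing $\beta(x)=x$ and $R=x_1(1-F(x_1))$ by continuity; below the interval the strategy is truthful up to the first value $x_0$ at which it can jump onto the iso‑revenue branch, which needs $x_0(1-F(x_0))\le R$, i.e.\ $x_0\le\overline{x}_0:=\inf\{x:R_X(x)\ge R\}$, and this upward jump is precisely what flattens $R_B$ at level $R$, so that $r^*_\beta=R/(1-F(x_0))$. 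I expect this to be the main obstacle, because $r^*_\beta$ is a discontinuous functional of $\beta$ (an infimum of an argmax): the ``raise to the curve'' dominance and the ``the jump creates the flat top'' identity cannot be checked pointwise but require controlling the whole curve $R_B$ and its set of maximizers under a local change of $\beta$ --- in particular, ruling out a smooth transition onto the iso‑revenue branch in favour of the jump.

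\emph{Step 3: constraints and existence.} Assembling Step 2 yields a strategy $\dtbeta$ of the form \eqref{eq:defDoubleThresh} satisfying $R=x_1(1-F(x_1))$, $x_1=\sup\{x:x(1-F(x))\ge R\}$ and $x_0\le\overline{x}_0=\inf\{x:x(1-F(x))\ge R\}$ --- exactly the stated conditions --- with $U_\alpha(\dtbeta)\ge U_\alpha(\beta)$; conversely each such $\dtbeta$ is admissible, since its revenue curve rises along $R_X$ up to $x_0$, rises linearly to $R$ across the jump, is flat at $R$, then coincides with the decreasing branch of $R_X$ --- increasing, then constant, then decreasing, hence quasi‑concave, with $r^*_{\dtbeta}=R/(1-F(x_0))$. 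Finally I would show that the supremum of $U_\alpha$ over the family $\{\dtbeta:\argmax_r R_X(r)\le x_1<b,\ 0\le x_0\le\overline{x}_0(x_1)\}$ is attained: $(x_0,x_1)\mapsto\dtbeta$ is $L^1(F)$‑continuous, $U_\alpha$ is continuous along it, the parameter set is closed, and when $b=+\infty$ the tail hypothesis $1-F(x)=o(x^{-1})$ makes $R\to0$ and $U_\alpha(\dtbeta)\to U_\alpha(\mathrm{id})$ as $x_1\to+\infty$, ruling out escape of mass; so a maximizer exists over the family, hence over all quasi‑regular responses.
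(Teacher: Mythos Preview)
Your overall architecture matches the paper's: improve an arbitrary quasi-regular $\beta$ to a double-threshold strategy with at least the same $U_\alpha$, then optimize over the two-parameter family. The existence argument in Step~3 is fine (and actually more explicit than the paper's). The gap is in Step~2.

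The claim ``by (a), among deviations lying on or below the iso-revenue curve, equality $\beta(x)=R/(1-F(x))$ dominates'' is false on the part of the support where $R/(1-F(x))>x$, i.e.\ where $R_X(x)<R$. There, raising $\beta(x)$ to $R/(1-F(x))$ moves \emph{away} from the diagonal, and (a) says this \emph{hurts} both stages, not helps. So your improvement step breaks precisely on the interval $(0,\bar x_0)$ where the double-threshold strategy may overbid; you have not argued why overbidding there is ever advantageous, nor why an arbitrary $\beta$ can be improved into the double-threshold class with the same $R$. The companion claim that this raise ``does not change $r^*_\beta$'' is likewise unjustified: raising $\beta$ relocates the plateau of $R_B$ and hence its $\inf\argmax$.

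The paper's fix is to choose the target level differently, $R':=\sup_x\,(x\wedge\beta(x))(1-F(x))$ (possibly strictly smaller than your $R=\max R_B$), and to move $\beta(x)$ \emph{towards} $x$ but stop at the $R'$-iso-revenue curve if it is crossed: set $\omega(x)=R'/(1-F(x))$ whenever $R'$ lies between $\beta(x)(1-F(x))$ and $x(1-F(x))$, and $\omega(x)=x$ otherwise. This is always a move towards the diagonal, so the Lemma-4 argument gives $U_\alpha^{(P_\beta)}(\omega)\ge U_\alpha^{(P_\beta)}(\beta)$; separately $r^*_\omega\le r^*_\beta$, so the phase-2 reserve only improves. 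That $\{x:\omega(x)\neq x\}$ is an interval is not a consequence of monotonicity of $\beta$ as you suggest, but of the quasi-concavity of \emph{both} $R_X$ and $R_B$: the non-truthful set is a union of two superlevel sets at the common level $R'$, and a short convexity lemma shows such a union is convex. These two devices---the min-with-diagonal construction of $\omega$ and the convexity lemma for the interval structure---are what your Step~2 is missing.
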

The proof is in Appendix \ref{ssec:best_quasi_regular_response}.
\begin{remark}
The thresholding strategy from \cite{tang2016manipulate,nedelec2018thresholding} is a special case of $\dtbeta$ in the case when $x_0 = 0$ and we will denote it $\tbeta$ for simplicity. As we will see in Section \ref{sec:optimal_strategies}, when $\alpha$ is small enough (but not 0 yet), $x_0 = 0$ is optimal. It thus makes sense to study both these strategies.
\end{remark}
This class of strategies solves the main drawback of the best response from \cite{tang2016manipulate}. If the value distribution $F_i$ is quasi-regular, then the bid distribution $F_{B}$ is also quasi-regular, offering a quasi-concave maximization problem (thus ``predictable" for the buyer) to the seller.
Figure \ref{fig:optimal_revenues} illustrates the different strategies as well as the corresponding optimization problems and the virtual values associated to the push-forward bid distributions. Understanding the strategic answer of the bidders helps avoiding worst-case scenario reasoning when studying the revenue of the sellers against strategic bidders. We now quantify precisely how the welfare is shared between strategic bidders and seller.
\begin{figure}[tbp]
  \centering
  \includegraphics[width=0.41\textwidth]{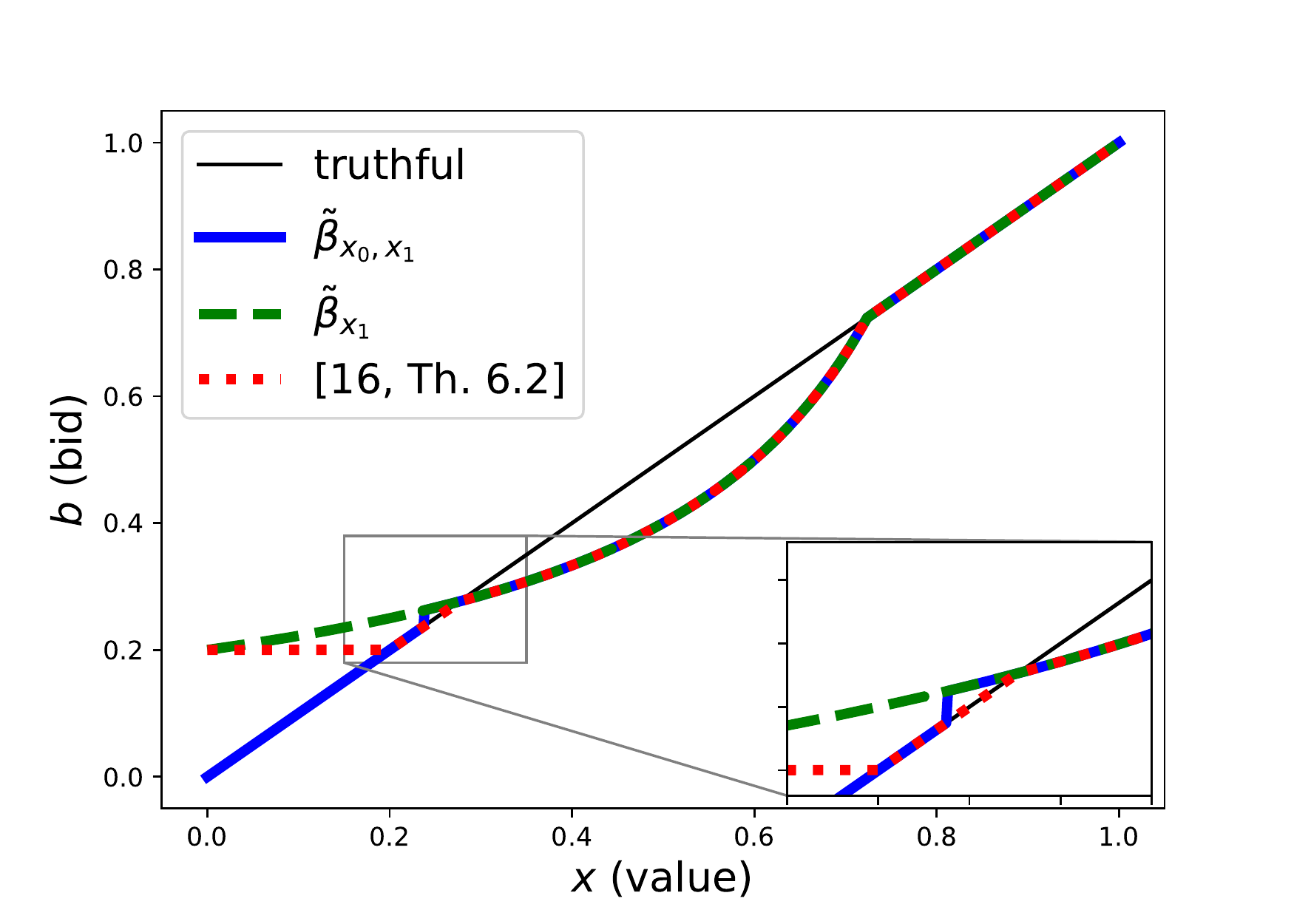}
  \includegraphics[width=0.41\textwidth]{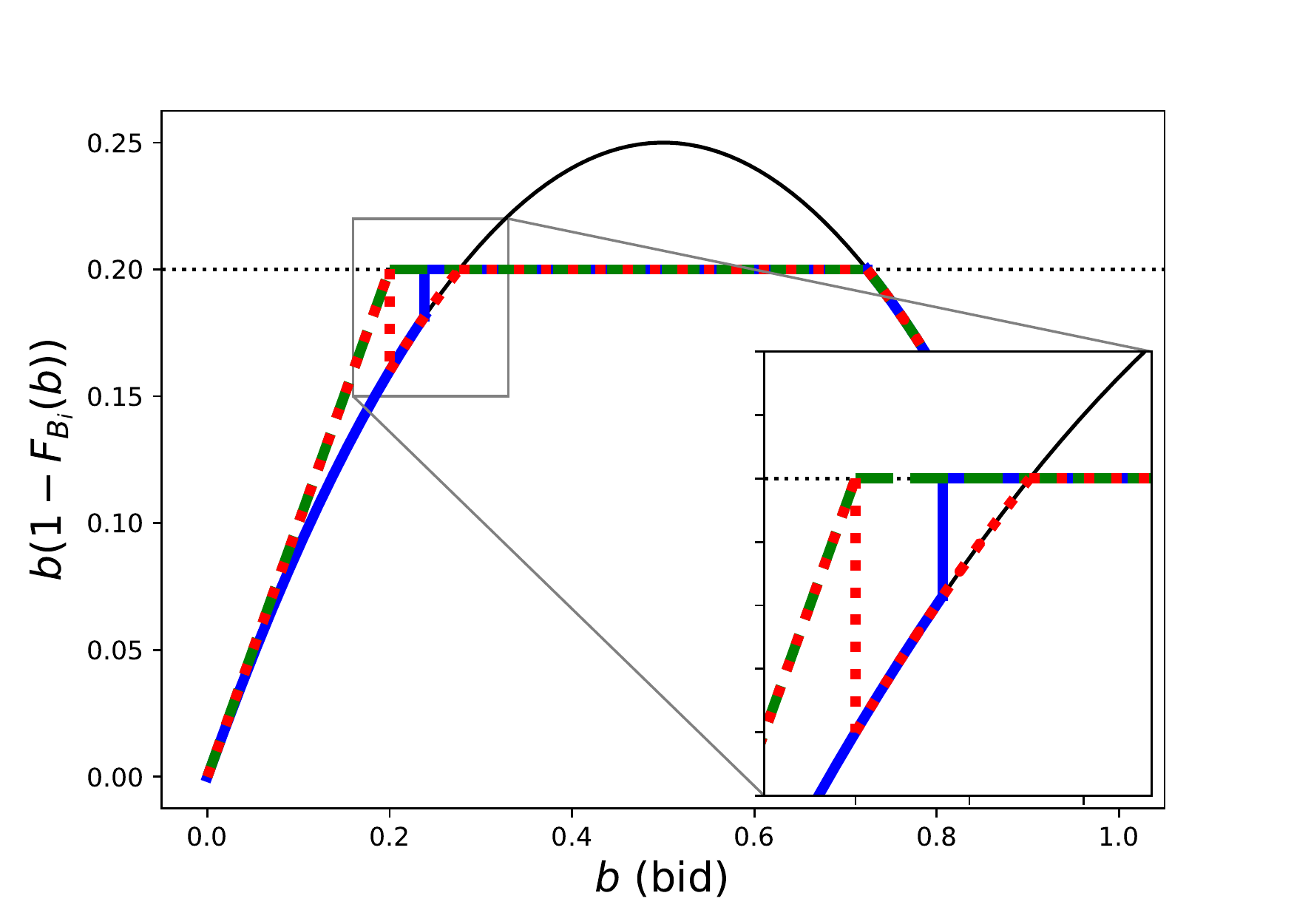}
  \caption{Left: illustration of strategies. Right: function optimized by the seller $b(1-F_{B}(b))$ depending on the strategy for values distribution $F$ being $\mathcal{U}(0,1)$. $R$ is set at 0.2, $x_1$ is set such that $x_1(1-F(x_1)) = R$ (right root) and $x_0 \simeq 0.27$ for $\dtbeta$.}
  \label{fig:optimal_revenues}
\end{figure}

\section{Welfare sharing between seller and buyers}\label{sec:welfareSharing}
This section presents how the welfare is shared in the two-stage process with strategic bidders. First, we show how to numerically compute the best response and illustrate the variation of utility and payment with $\alpha$. After remarking and showing that the \emph{thresholded} strategy \cite{nedelec2018thresholding} is optimal for greater values of $\alpha$ than 0, we focus on this strategy and show the existence of a Nash equilibrium. It proves that even in the case of multiple strategic bidders, there exists strategies that enable bidders to take advantage of the learning stage of the seller.
\subsection{Welfare sharing with best quasi-regular response}
\label{sec:optimal_strategies}
We first show how to obtain numerically the best response $\dtbeta$ by solving the maximization problem of $U_\alpha$ restricted to the class of strategies described in Th. \ref{dfn:thresholded_beta}. This allows us to show how the bidder's utility and payment vary with $\alpha$.
\begin{theorem}
Given a two-stage process $(G, H, r_{\beta}^*, F)$ with $r^*_\beta = \inf \argmax_r R_B(r)$, the best response is of the form of $\tilde{\beta}_{x_0^*, x_1^*}$ and the utility $U_\alpha(\dtbeta)$ has the following derivatives:
\begin{align}
    \frac{\partial U_\alpha(\dtbeta\!)}{\partial x_0} \!=\! \alpha(1-F(x_0))G(x_0)H(x_0)
    -
    x_0 f(x_0) G_\alpha\!\!\left(\frac{x_1(1-F(x_1))}{(1-F(x_0))}\right)~~~~~~~~~~~\\
    \frac{\partial U_\alpha(\dtbeta\!)}{\partial x_1} 
    \!= \!
    f(x_1)\psi(x_1)
    \!\left(\!G_\alpha(x_1) - \lE_X\!\!\left(\!\indicator{x_0 \leq X \leq x_1}\frac{X}{(1-F(X))}g_\alpha\!\!\left(\!\frac{x_1(1-F(x_1))}{(1-F(X))}\!\right)\!\!\right)\!\!\right)
\end{align}
where $G_\alpha(x) = \alpha G(x)H(x) + (1-\alpha)G(x)$ and $\psi(x)=x-\frac{1-F(x)}{f(x)}$.
\label{thm:double_thresholded_derivatives}
\end{theorem}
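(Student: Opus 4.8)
The plan is to compute $U_\alpha(\dtbeta) = \alpha U_1(\dtbeta) + (1-\alpha) U_2(\dtbeta)$ as an explicit integral in $x_0$ and $x_1$, then differentiate under the integral sign. First I would recall from Theorem~\ref{dfn:thresholded_beta} that the candidate best response has the three-piece form with the middle hyperbolic piece tied to the constant $R = x_1(1-F(x_1))$, so that the only free parameters are $x_0$ and $x_1$. The key reduction is to observe that the induced bid distribution $F_B$ is quasi-regular, hence the seller's monopoly reserve in phase~2 satisfies $r_\beta^* = x_1$ (the thresholding construction is designed so that the revenue curve $R_B$ is flat-topped at level $R$ on the shaded interval and the \emph{smallest} argmax is exactly $x_1$, using the welfare-benevolent tie-break and the sup/inf characterization $x_1 = \sup\{x : x(1-F(x)) \geq R\}$). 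This pins down the phase-2 mechanism as a function of $x_1$ alone.

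Next I would write the utility in each phase as an expectation over the value $v \sim F$ and the competing bid $\max G$. In phase~1 the reserve distribution is $H$, so the interim allocation probability at value $v$ is (roughly) $G(\beta(v))H(\beta(v))$ and the bidder pays the max of the competition and the drawn reserve when she wins; in phase~2 the reserve is the deterministic $x_1$, with allocation $G(\max(v, x_1))$ on the relevant range. Combining the two with weights $\alpha, 1-\alpha$ produces the aggregate ``allocation'' function $G_\alpha(x) = \alpha G(x)H(x) + (1-\alpha)G(x)$ and its density $g_\alpha$, which is precisely the object appearing in the statement. On the lowest piece $x \le x_0$ the bidder bids truthfully, on $(x_0,x_1]$ she bids $R/(1-F(x))$, and on $x > x_1$ she bids truthfully again; substituting $\beta(v)$ into the interim utility and integrating against $f$ gives $U_\alpha$ as a function of $(x_0,x_1)$ with $R = x_1(1-F(x_1))$ appearing both explicitly and through the upper limit of the shaded region.

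Then I would differentiate. For $\partial/\partial x_0$: moving $x_0$ up converts a sliver of ``truthful'' mass near $x_0$ into ``shaded'' mass; the first term $\alpha(1-F(x_0))G(x_0)H(x_0)$ is the marginal gain in phase~1 utility from shading that sliver, while the second term $x_0 f(x_0)\,G_\alpha\!\big(x_1(1-F(x_1))/(1-F(x_0))\big)$ is the marginal loss coming from the fact that the shaded bid at value $x_0$ jumps to $R/(1-F(x_0)) = x_1(1-F(x_1))/(1-F(x_0))$, changing the win probability and payment there — this is a boundary/Leibniz term. For $\partial/\partial x_1$: here one must account for both the direct dependence (the shaded bids on $(x_0,x_1]$ all scale with $R = x_1(1-F(x_1))$, whose derivative is $f(x_1)\psi(x_1)$ with $\psi(x)=x-(1-F(x))/f(x)$ the virtual value) and the moving endpoint $x_1$ itself; careful bookkeeping should show the endpoint contribution and part of the integral combine into the factor $G_\alpha(x_1)$, while the remaining integral term is the expectation $\lE_X(\indicator{x_0\le X\le x_1}\,\tfrac{X}{1-F(X)}\,g_\alpha(\cdots))$ measuring how the whole shaded block's win probabilities shift as $R$ moves. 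The common prefactor $f(x_1)\psi(x_1)$ factors out because every $x_1$-dependence funnels through $R$.

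The main obstacle I anticipate is the $\partial/\partial x_1$ computation: $x_1$ enters in three coupled ways — as the upper limit of integration of the shaded piece, as the phase-2 reserve price, and through the constant $R$ that rescales every shaded bid — so the naive Leibniz expansion produces several terms that must be shown to telescope into the clean two-term form stated. I would handle this by substituting $R$ explicitly and differentiating $U_\alpha$ with respect to $R$ first (holding the geometry fixed), then using $dR/dx_1 = f(x_1)\psi(x_1)$ and checking that the endpoint terms at $x = x_1$ cancel because the bid is continuous there ($R/(1-F(x_1)) = x_1$), which is exactly the gluing condition built into $\dtbeta$. A secondary check is that when $x_0 = 0$ and $\alpha$ is small these formulas reduce to the known derivatives for $\tbeta$ from \cite{nedelec2018thresholding}, and that setting the $x_0$-derivative to zero is consistent with the claimed bound $x_0 \le \overline{x}_0$.
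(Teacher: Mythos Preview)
There is a genuine gap: you misidentify the phase-2 reserve price. You claim the welfare-benevolent seller picks $r_\beta^*=x_1$ because ``the smallest argmax is exactly $x_1$'', but $x_1=\sup\{x:x(1-F(x))\ge R\}$ is the \emph{right} endpoint of the flat top of $R_B$, hence the \emph{largest} argmax in bid space. The flat plateau of $R_B$ at level $R$ runs from $\beta(x_0^+)=R/(1-F(x_0))$ to $x_1$, so the welfare-benevolent choice is $r_\beta^*=R/(1-F(x_0))=x_1(1-F(x_1))/(1-F(x_0))$, i.e.\ reserve \emph{value} $x_0$, not $x_1$. This is not cosmetic: with your reserve, values $X\in(x_0,x_1]$ would fail the phase-2 reserve and the middle block would contribute only $\alpha G(\beta(X))H(\beta(X))$, not $G_\alpha(\beta(X))$; you would never arrive at the stated formulas. (Your phase-2 allocation expression $G(\max(v,x_1))$ is also off for a lazy second-price auction: it is $G(\beta(v))\indicator{\beta(v)\ge r}$.) A minor additional slip: $dR/dx_1=(1-F(x_1))-x_1 f(x_1)=-f(x_1)\psi(x_1)$, with a minus sign.

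The paper sidesteps all of this by using the integrated Myerson lemma rather than computing interim utilities directly. Because $\psi_B(\beta(x))=\psi(x)\indicator{x\le x_0}+0\cdot\indicator{x_0<x\le x_1}+\psi(x)\indicator{x>x_1}$ and the reserve value is $x_0$, the utility is immediately
\[
U_\alpha(\dtbeta)=\lE\!\left[\indicator{X\le x_0}(X-\psi(X))\,\alpha G(X)H(X)\right]
+\lE\!\left[\indicator{x_0<X\le x_1}\,X\,G_\alpha\!\left(\tfrac{x_1(1-F(x_1))}{1-F(X)}\right)\right]
+\lE\!\left[\indicator{X>x_1}(X-\psi(X))\,G_\alpha(X)\right],
\]
and the two derivative formulas follow from one Leibniz pass each: the $x_0$-boundary terms give $(x_0-\psi(x_0))\alpha G(x_0)H(x_0)f(x_0)=\alpha(1-F(x_0))G(x_0)H(x_0)$ minus $x_0 f(x_0)G_\alpha(R/(1-F(x_0)))$; for $x_1$ the two boundary terms combine to $f(x_1)\psi(x_1)G_\alpha(x_1)$ and the chain-rule term through $R$ produces the expectation with $g_\alpha$. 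Once you fix the reserve price (and hence which region is ``active'' in phase 2), your Leibniz plan works and is essentially the same computation.
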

The proof is in Appendix \ref{ssec:welfare_sharing_best_response}. We now use these results to compute numerically the best response in the following two-stage process: $P_1 = (G\!=\!U_{[0,1]}, H\!=\!0, r^*_{\tilde{\beta}_{x_0^*, x_1^*}}, F\!=\!U_{[0,1]})$ (uniformly distributed reserve in  stage 1)

\begin{figure}[tbp]
  \centering
 
    \includegraphics[width=0.32\textwidth]{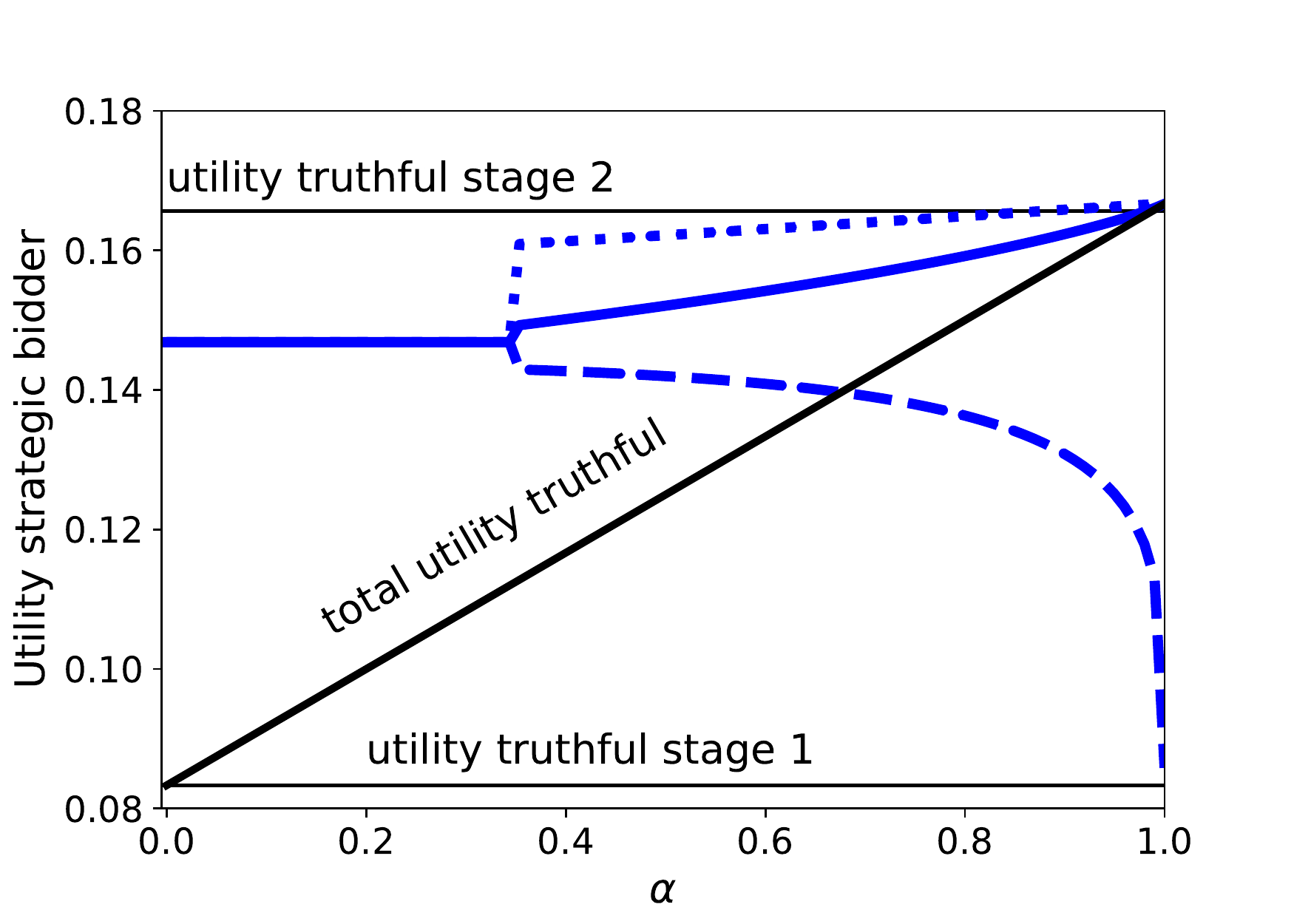}
     \includegraphics[width=0.32\textwidth]{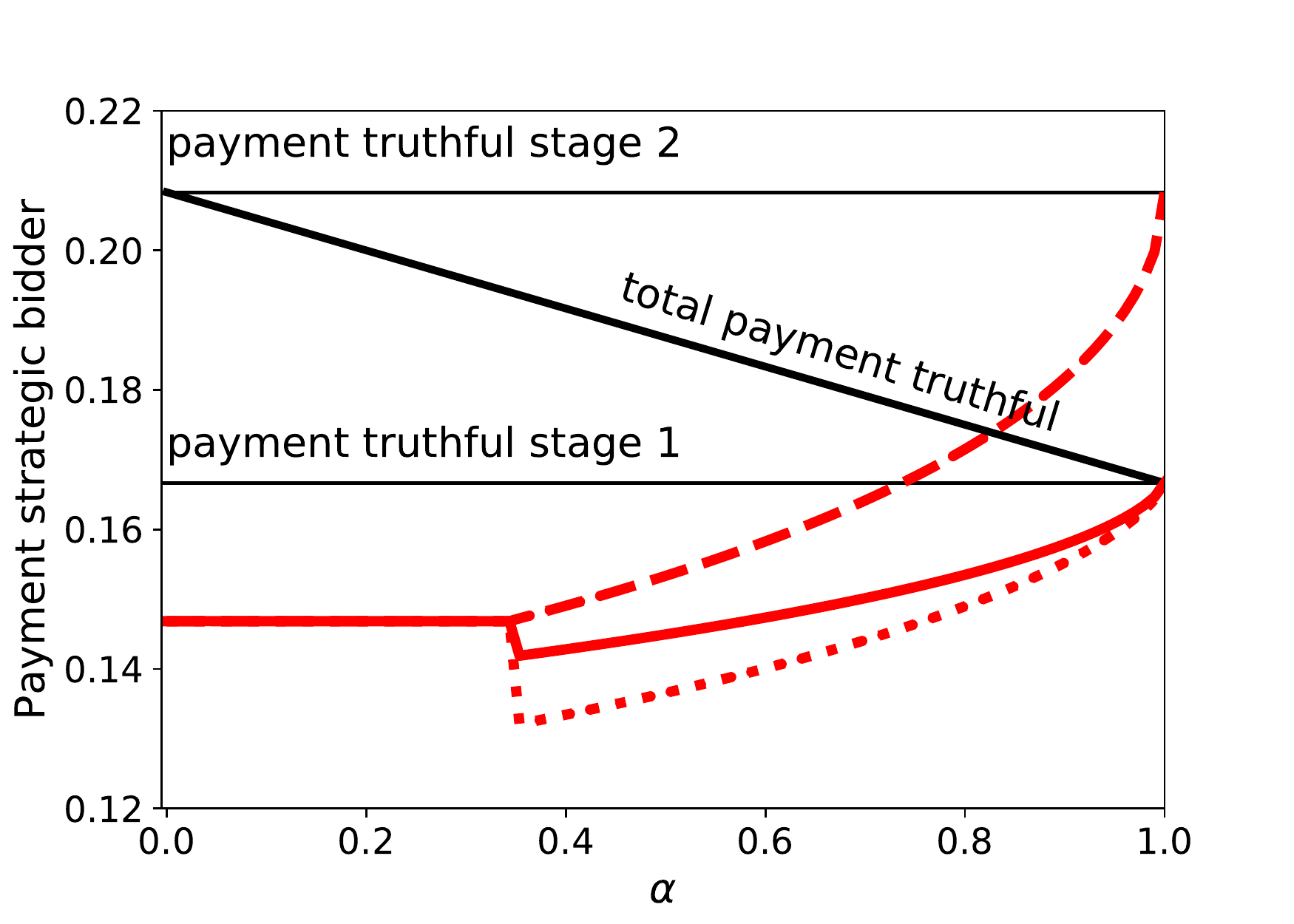}
  \includegraphics[width=0.32\textwidth]{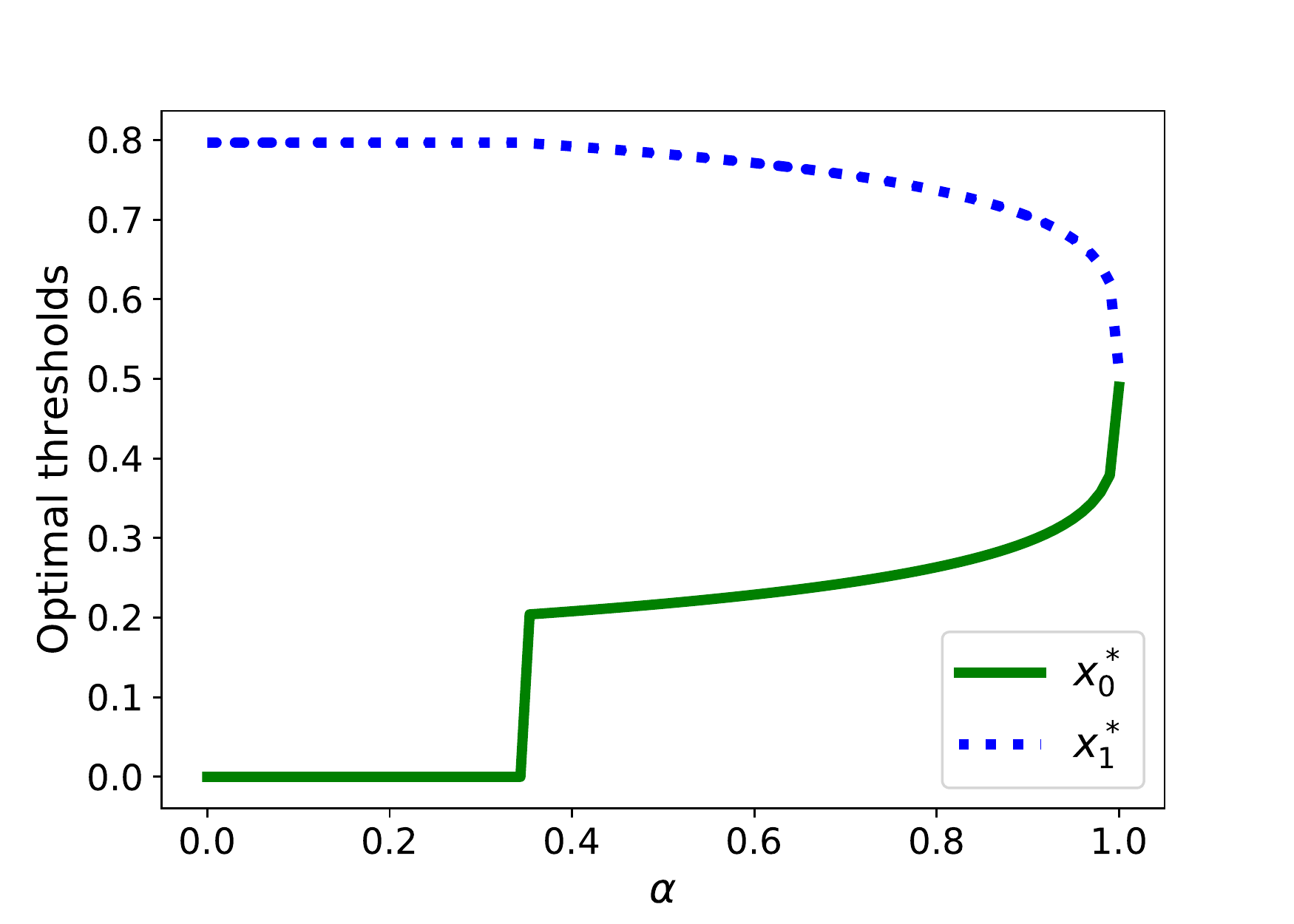}
  \caption{Welfare sharing with random reserve in stage 1: Left -- Utility of strategic bidder in the two-stage game as a function of $\alpha$. Middle -- Payment of strategic bidder in the two-stage game as a function of $\alpha$. The dotted lines corresponds to phase 1, the dashed ones to phase 2 and the plain line for the full game. The blue color is used for the bidder's utility and the red for her payment. The black color is used for the baseline of truthful bidding. Right: evolution of $x_0^*, x_1^*$ with $\alpha$.}
  \label{fig:welfare_vs_alpha}
\end{figure}
The results are presented in Fig. \ref{fig:welfare_vs_alpha}.
First, we recover the results from the literature corresponding to the extremes values of $\alpha$. For $\alpha = 1^-$, the payment and utility in the second stage are the ones of a 2$^\text{nd}$-price with monopoly price of value distribution. As $\alpha$ decreases, the total payment over the two stages quickly decreases (and utility increases) towards values that are even more favorable to the bidder than the ones of a 2$^\text{nd}$-price with no reserve price. This is consistent with the observations from \cite{nedelec2018thresholding} (case $\alpha=0$): if only one bidder is strategic, the payment of this strategic bidder can decrease even further than the one of a 2$^\text{nd}$-price with no reserve price. 
In Appendix \ref{ssec:plot_with_random_reserve}, we provide the same study with random reserve and show that the presence of exploration of the reserve prices in the first period (difference between left and right) does not impact much the payment/utility of the second period. It slightly decreases the advantage of the bidder, but not significantly.

Finally, we observe two regimes for $x_0^*$ (Fig. \ref{fig:welfare_vs_alpha}, right): for $\alpha < 0.8$ it is 0 and for $\alpha > 0.8$ it is the value that saturates the inequality constraint described by Th. \ref{dfn:thresholded_beta} between $x_0$ and $x_1$. This illustrates why we cannot only use first-order optimality condition: the optimal value is often reached on the edge of the feasible domain. We formalize this latter observation and shows that when $\alpha$ is small enough, the reserve value $x_0^*$ is exactly 0.
\begin{lemma}
Given a two-step process $(G, H, r_{\beta}^*, F)$ with $r^*_\beta = \inf \argmax_r R_B(r)$. If $\exists y \in [0,b], \forall x<y, \frac{G(x)H(x)}{f(x)} < x$ and $\frac{G(x)H(x)}{f(x)}$ is bounded on $[0,x^*]$ with $x^*=\inf\argmax_x R_X(x)$, then $\exists \alpha_0 > 0$ such that $\forall \alpha < \alpha_0$, 
\begin{align*}
(x^*_0,x^*_1) \in \argmax_{x_0, x_1}U_\alpha(\dtbeta) 
~~~ \Rightarrow ~~~ 
x_0^* = 0
\end{align*}
\label{thm:x0_to_0}
\end{lemma}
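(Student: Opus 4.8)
The plan is to pin down the sign of $\partial U_\alpha/\partial x_0$ given by Theorem~\ref{thm:double_thresholded_derivatives}, after first establishing a crucial $\alpha$-uniform fact: the flattening level $R^*=x_1^*(1-F(x_1^*))$ produced by the optimal upper threshold stays bounded away from $0$. Fix a maximizer $(x_0^*,x_1^*)$ and write $\overline x_0=\overline x_0(x_1^*)=\inf\{x:x(1-F(x))\ge R^*\}$ for the associated feasibility bound from Theorem~\ref{dfn:thresholded_beta}; recall $\overline x_0\le x^*$ since $R^*\le\max R_X=R_X(x^*)$ and $R_X$ is nondecreasing on $[0,x^*]$ by quasi-concavity. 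It then suffices to show that for $\alpha$ small enough $x_0\mapsto U_\alpha(\tilde\beta_{x_0,x_1^*})$ is strictly decreasing on $[0,\overline x_0]$, which forces $x_0^*=0$.

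\emph{Step 1: $R^*\ge R_{\min}>0$, uniformly in $\alpha\le\tfrac12$.} The identity map is a quasi-regular response ($\tilde\beta_{x,x}=\mathrm{id}$), so the optimal value is at least $U_\alpha(\mathrm{id})\ge(1-\alpha)U_2(\mathrm{id})\ge\tfrac12U_2(\mathrm{id})=:c_0>0$. On the other hand, as $x_1\to b$ one has $R=x_1(1-F(x_1))\to0$, hence also $\overline x_0(x_1)\to0$, so $\tilde\beta_{x_0,x_1}(v)\le\max(\overline x_0(x_1),\tfrac{R}{1-F(v)})\to0$ for every fixed $v<b$ while $\mathbb{P}(v>x_1)\to0$. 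Since the bidder's surplus in each phase is bounded by $\mathbb{E}\big[v\,G(\tilde\beta_{x_0,x_1}(v))\big]$ (she wins only when her bid clears the competing bid, and $G(0)=0$ as the competition is continuous), dominated convergence gives $U_\alpha(\tilde\beta_{x_0,x_1})\to0$ uniformly over $\alpha\in[0,1]$ and $x_0\in[0,\overline x_0(x_1)]$. Hence some $\overline x_1<b$ satisfies $U_\alpha(\tilde\beta_{x_0,x_1})<c_0$ for all $x_1>\overline x_1$, so any maximizer has $x_1^*\le\overline x_1$; since $R_X$ is nonincreasing on $[x^*,b)$ and $x^*\le x_1^*\le\overline x_1$, we get $R^*=R_X(x_1^*)\ge R_X(\overline x_1)=:R_{\min}>0$.

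\emph{Step 2: sign of the derivative.} Specialize the first identity of Theorem~\ref{thm:double_thresholded_derivatives} to $x_1=x_1^*$ and set $z=z(x_0)=\tfrac{R^*}{1-F(x_0)}$. Then $z\ge R^*\ge R_{\min}$, so monotonicity of $G_\alpha$ gives $G_\alpha(z)\ge(1-\alpha)G(z)\ge(1-\alpha)G(R_{\min})$ and
\[
\frac{\partial U_\alpha(\tilde\beta_{x_0,x_1^*})}{\partial x_0}\ \le\ \alpha\,(1-F(x_0))\,G(x_0)H(x_0)\ -\ (1-\alpha)\,x_0 f(x_0)\,G(R_{\min}).
\]
For $x_0<y$ the hypothesis $G(x_0)H(x_0)<x_0f(x_0)$ bounds the first term by $\alpha x_0f(x_0)$, so the right-hand side is $\le x_0f(x_0)\big(\alpha-(1-\alpha)G(R_{\min})\big)$; for $y\le x_0\le\overline x_0\ (\le x^*)$, writing $M=\sup_{[0,x^*]}GH/f<\infty$, the first term is $\le\alpha Mf(x_0)$ and the right-hand side is $\le f(x_0)\big(\alpha M-(1-\alpha)\,y\,G(R_{\min})\big)$. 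Since $f>0$ on the support, both bounds are strictly negative as soon as $\tfrac{\alpha}{1-\alpha}<\min\{G(R_{\min}),\,yG(R_{\min})/M\}$; let $\alpha_0>0$ be that threshold (capped at $\tfrac12$ so Step~1 applies). For $\alpha<\alpha_0$, $x_0\mapsto U_\alpha(\tilde\beta_{x_0,x_1^*})$ is then strictly decreasing on $[0,\overline x_0]$, whence $x_0^*=0$.

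The delicate point is Step~1. The derivative bound of Step~2 is worthless near $x_0=0$ unless $G(z(x_0))$ is bounded below by a positive constant, and $z(x_0)\ge R^*$ only helps once $R^*$ itself is bounded below; ruling out the degenerate regime $x_1^*\to b$ (equivalently $R^*\to0$) uniformly in $\alpha$ is exactly where the non-degeneracy of the setting enters (continuity of the competition so that vanishing bids never win, and $U_2(\mathrm{id})>0$). Everything else is the routine split of $[0,\overline x_0]$ into $(0,y)$, where the stronger hypothesis $GH/f<x$ is used, and $[y,\overline x_0]\subseteq[y,x^*]$, where only boundedness of $GH/f$ is used.
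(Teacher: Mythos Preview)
Your proof is correct and follows essentially the same route as the paper's: first bound $x_1^*$ away from $b$ (equivalently $R^*$ away from $0$) by comparing the utility of $\tilde\beta_{x_0,x_1}$ as $x_1\to b$ to a truthful baseline, then use Theorem~\ref{thm:double_thresholded_derivatives} and this bound to control the sign of $\partial U_\alpha/\partial x_0$, splitting into the ranges $x_0<y$ and $y\le x_0\le x^*$. The paper packages Step~1 as a separate lemma (Lemma~\ref{lem:x_1_bounded}) and bounds $G_\alpha$ from below via $\inf_{\gamma}G_\gamma(\overline{x}_1(1-F(\overline{x}_1)))$, whereas you restrict to $\alpha\le\tfrac12$ and use $G_\alpha(z)\ge(1-\alpha)G(R_{\min})$; these are cosmetic variations of the same argument.
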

\vspace{-1em}
The proof can be found in Appendix \ref{ssec:welfare_sharing_best_response}.  $\frac{G(x)H(x)}{f(x)}$ being bounded on $[0,x^*]$ is satisfied if $F$ is MHR (monotonous hazard rate). $\frac{G(x)H(x)}{f(x)}<x$ is the key assumption. Intuitively, it can be interpreted as \emph{"if the bidder has enough mass under the competition, she can overbid at almost no cost on small values to decrease the reserve value to 0."}
This theorem shows that the \emph{thresholding} strategies proposed in \cite{tang2016manipulate,nedelec2018thresholding} are actually optimal for $\alpha$ small enough and not just for $\alpha = 0$. In the following subsections, we carry out a formal analysis for such strategies $\tilde{\beta}_{0, x_1} = \tilde{\beta}_{x_1}$ as we can more easily characterize the best response and prove Nash equilibria.

\subsection{Phase transition for the classical thresholding strategies in the two-stage game} 
\label{subsec:PhaseTransitionThresholding}
We suppose in this subsection that the seller commits to thresholding, i.e. $x_0=0$ with the notation above. The seller uses monopoly reserves in the second phase. 
\begin{theorem}\label{thm:TwoStepProcessCommitment}
We call ${\mathcal G}_\alpha=\alpha G_1+(1-\alpha) G_2$, $G_1$ and $G_2$ being the distributions of the competition faced by the bidder in the two phases, possibly including reserve prices. We assume $\psi^{-1}(0)>0$.
If the buyer uses the thresholding strategy $\tbeta$ of Theorem \ref{dfn:thresholded_beta}  and commits to it, we have for their utility, if the seller is welfare benevolent~:
the utility has (in general) a discontinuity at $x_1=\psi^{-1}(0)$. For $x_1<\psi^{-1}(0)$, we have $U(0)\geq U(x_1)$. For $x_1>\psi^{-1}(0)$, the first order condition are the same as in \cite{nedelec2018thresholding,tang2016manipulate} where the distribution of the competition is now ${\mathcal G}_\alpha=\alpha G_1+(1-\alpha) G_2$. Call $x_1^*(\alpha)$ the unique solution of this problem. Hence, the optimal threshold is $\argmax_{x_1\in \{0,x_1^*(\alpha)\}} U(x_1)$. An optimal threshold at $x_1=0$ corresponds to bidding truthfully. 

\end{theorem}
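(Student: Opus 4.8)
The plan is to compute the bidder's combined utility $U_\alpha(\tbeta)$ as a function of the single threshold $x_1$, keeping $x_0=0$, and then to analyze this one-dimensional function. First I would write down the utility explicitly. Under the thresholding strategy, the bid is $\tbeta(x)=x$ for $x\le x_1$ and $\tbeta(x)=R/(1-F(x))$ for $x>x_1$, with $R=x_1(1-F(x_1))$. In each phase $i$, the bidder faces competition $G_i$ (which may already absorb the reserve distribution $H$), wins when her bid exceeds the competing bid, and pays the second-highest bid. Using the standard change of variables, the expected payment/utility in phase $i$ can be written as an integral against $G_i$ of the allocation and virtual-value terms exactly as in \cite{nedelec2018thresholding,tang2016manipulate}; because the objective is linear in the pair of phase utilities, $U_\alpha(\tbeta)=\alpha U_1(\tbeta)+(1-\alpha)U_2(\tbeta)$ is an integral against the mixture ${\mathcal G}_\alpha=\alpha G_1+(1-\alpha)G_2$. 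This is the key structural observation: the two-stage problem with a committed threshold collapses to the single-stage problem of \cite{nedelec2018thresholding,tang2016manipulate} with $G$ replaced by ${\mathcal G}_\alpha$.

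Next I would locate the discontinuity. The monopoly reserve used by the (welfare-benevolent) seller in phase~2 is $r^*_{\tbeta}=\inf\argmax_r R_B(r)$, and by Theorem \ref{dfn:thresholded_beta} the push-forward revenue curve $R_B$ is quasi-concave with its flat maximal plateau determined by $R$. As $x_1$ crosses $\psi^{-1}(0)$ — the point where the virtual value of $F$ changes sign, equivalently where the monopoly price of the value distribution sits — the location of the argmax of $R_B$ jumps (from the reserve being effectively at $0$, i.e. no binding reserve, to the reserve being set at the plateau), which produces a jump in the bidder's phase-2 utility and hence in $U_\alpha$. For $x_1<\psi^{-1}(0)$ I would argue $U_\alpha(0)\ge U_\alpha(x_1)$: on this range raising the threshold only distorts bids downward on values that were already being allocated without extracting any compensating reserve reduction, so truthful bidding ($x_1=0$) dominates — this is the monotonicity half of the claim and follows by sign-checking the derivative $\partial U_\alpha/\partial x_1$ from Theorem \ref{thm:double_thresholded_derivatives} specialized to $x_0=0$ (the bracketed factor $G_\alpha(x_1)-\mathbb{E}_X[\cdots]$ against $f(x_1)\psi(x_1)$, with $\psi(x_1)<0$ on this range). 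For $x_1>\psi^{-1}(0)$, the first-order condition $\partial U_\alpha/\partial x_1=0$ reduces — after cancelling the common factor $f(x_1)\psi(x_1)\ne0$ — to exactly the equation solved in \cite{nedelec2018thresholding,tang2016manipulate} with ${\mathcal G}_\alpha$ in place of $G$; invoking their uniqueness argument (strict monotonicity of the bracketed expression in $x_1$ under regularity of $F$) gives a unique root $x_1^*(\alpha)$. Comparing the two candidate values $0$ and $x_1^*(\alpha)$ across the discontinuity yields $\argmax_{x_1\in\{0,x_1^*(\alpha)\}}U_\alpha(x_1)$, and noting $\tbeta$ with $x_1=0$ is the identity map completes the statement.

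The main obstacle I expect is handling the discontinuity at $x_1=\psi^{-1}(0)$ carefully: one must verify that on $(\psi^{-1}(0),\infty)$ the monopoly reserve genuinely tracks the plateau of $R_B$ (so that $U_\alpha$ is the smooth branch matching the \cite{nedelec2018thresholding,tang2016manipulate} analysis), that on $[0,\psi^{-1}(0))$ the reserve is inert, and that the one-sided limits are ordered the way the theorem asserts — in particular that $U_\alpha$ on the left branch never exceeds $U_\alpha(0)$, which requires the sign analysis of the derivative rather than a soft argument. A secondary technical point is checking that the assumption $\psi^{-1}(0)>0$ is what guarantees the discontinuity is actually present (otherwise the left branch is empty and only the first-order branch matters); everything else is a specialization of already-established single-stage computations, so I would not grind through those integrals again but cite Theorems \ref{dfn:thresholded_beta} and \ref{thm:double_thresholded_derivatives} and the prior work.
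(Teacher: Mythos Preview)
There is a genuine gap in how you handle the branch $x_1<\psi^{-1}(0)$. Your key structural claim---that the two-stage utility collapses to a single integral against ${\mathcal G}_\alpha$---is only correct when $x_1\geq\psi^{-1}(0)$. The phase-2 reserve $r^*_\beta$ is endogenous: for $x_1\geq\psi^{-1}(0)$ the induced virtual value $\psi_B(\beta(x))$ is nonnegative everywhere, so the welfare-benevolent seller drives the reserve value to $0$ and the two phases do combine against ${\mathcal G}_\alpha$ as you say. But for $x_1<\psi^{-1}(0)$ one has $\psi_B(\beta(x))=\psi(x)<0$ on $(x_1,\psi^{-1}(0))$, the seller's monopoly revenue is increasing there, and she sets the phase-2 reserve at $\psi^{-1}(0)$ \emph{regardless of} $x_1$. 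In that regime the phase-2 utility carries an indicator $\indicator{X\geq\psi^{-1}(0)}$ and does not merge with phase~1 into a ${\mathcal G}_\alpha$ integral. Consequently the derivative formula of Theorem~\ref{thm:double_thresholded_derivatives}---derived under the constraint of Theorem~\ref{dfn:thresholded_beta} that $x_1$ is the \emph{upper} root of $t(1-F(t))=R$, hence $x_1\geq\psi^{-1}(0)$---is simply not valid on the left branch, so your sign-check of $f(x_1)\psi(x_1)[\cdots]$ cannot be invoked there. (Incidentally, you also have the strategy reversed: $\tbeta(x)=R/(1-F(x))$ on $(0,x_1]$ and $\tbeta(x)=x$ for $x>x_1$; see \eqref{eq:defDoubleThresh} with $x_0=0$. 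The deviation on $[0,x_1]$ is an overbid, not a downward distortion.)

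The paper avoids any derivative on the left branch. Once you observe that for every $x_1\in[0,\psi^{-1}(0))$ the phase-2 reserve is the fixed value $\psi^{-1}(0)$, both phases become second-price auctions whose reserves are unaffected by the bidder's choice of $x_1$ within that range; the dominant-strategy property (equivalently Lemma~\ref{lem:variation_from_truthful}) then gives $U(0)\geq U(x_1)$ immediately, since $\tbeta$ with $x_1>0$ deviates from truthful without any compensating reserve movement. On the right branch your plan matches the paper's: the ${\mathcal G}_\alpha$ collapse is correct there, and the first-order condition together with its uniqueness is exactly Lemma~\ref{lemma:atMostOneNonTrivialSoln} applied with competition ${\mathcal G}_\alpha$.
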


\begin{lemma}[Phase transition in $\alpha$]\label{lemma:phaseTransition}
Assume the setup of Theorem \ref{thm:TwoStepProcessCommitment}. Suppose that $G_1=G_2$, i.e. the distribution of the competition is the same in the two phases and there is no reserve price in the first phase.  Then there is a critical value $\alpha_c$ for which, if $\alpha<\alpha_c$, it is preferable for the bidder to threshold and if $\alpha>\alpha_c$, it is preferable for the bidder to bid truthfully. 	
\end{lemma}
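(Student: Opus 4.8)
The plan is to reduce the lemma to the sign behaviour of a single convex function of $\alpha$. By Theorem~\ref{thm:TwoStepProcessCommitment}, under the stated hypotheses the buyer only has to choose between truthful bidding ($x_1=0$) and the thresholding strategy $\tilde{\beta}_{x_1^*(\alpha)}$ with $x_1^*(\alpha)>\psi^{-1}(0)$. For a fixed threshold $t\ge 0$ I would consider $h_t(\alpha)=U_\alpha(\tilde{\beta}_t)=\alpha\,U_1(\tilde{\beta}_t)+(1-\alpha)\,U_2(\tilde{\beta}_t)$, which is \emph{affine} in $\alpha$ because, once $t$ is fixed, the second-phase monopoly reserve induced by $\tilde{\beta}_t$ depends on $t$ only, so $U_1(\tilde{\beta}_t)$ and $U_2(\tilde{\beta}_t)$ are constants. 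Setting $V(\alpha)=\sup_{t>\psi^{-1}(0)}h_t(\alpha)=U_\alpha(\tilde{\beta}_{x_1^*(\alpha)})$ and $D(\alpha)=V(\alpha)-h_0(\alpha)$, the buyer strictly prefers to threshold when $D(\alpha)>0$ and strictly prefers truthful bidding when $D(\alpha)<0$; so it suffices to show that $D$ changes sign exactly once on $[0,1]$, from positive to negative.

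The first thing to record is that $D$ is convex and continuous on $[0,1]$. Convexity is immediate: $V$ is a pointwise supremum of affine functions of $\alpha$, hence convex, and subtracting the affine $h_0$ preserves this; finiteness follows from boundedness of the buyer's utility. For continuity I would use that $V$, as a supremum of continuous functions, is lower semicontinuous, combine this with the standard continuity of a finite convex function on the interior $(0,1)$, and handle the endpoints via $\limsup_{\alpha\to 0^+}V(\alpha)\le V(0)$ (and symmetrically at $1$), which is a consequence of convexity; hence $D$ is continuous on $[0,1]$.

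It then remains to evaluate the two endpoints and conclude. At $\alpha=0$, $D(0)=\max_{t>\psi^{-1}(0)}U_2(\tilde{\beta}_t)-U_2(\tilde{\beta}_0)>0$ is exactly the single-stage statement of \cite{tang2016manipulate,nedelec2018thresholding}, the hypothesis $\psi^{-1}(0)>0$ being precisely what guarantees a strictly profitable thresholding. At $\alpha=1$, phase~1 is a lazy second-price auction with no reserve, so truthful bidding is weakly dominant; and since for every $t>\psi^{-1}(0)>0$ the map $\tilde{\beta}_t$ deviates from the identity on a positive-length subset of $[0,t)$ (it overbids below the lower root of $R_X=R$ and underbids above it), this deviation strictly lowers the phase-1 utility as soon as the competition $G$ charges that region, giving $U_1(\tilde{\beta}_t)<U_1(\tilde{\beta}_0)$ for all admissible $t$ and hence $D(1)<0$. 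Finally, a continuous convex $D$ on $[0,1]$ with $D(0)>0>D(1)$ has a unique zero $\alpha_c\in(0,1)$ with $D>0$ on $[0,\alpha_c)$ and $D<0$ on $(\alpha_c,1]$: for any zero $\alpha_c$ and any $\alpha\in(\alpha_c,1]$, writing $\alpha$ as a convex combination of $\alpha_c$ and $1$ and invoking convexity together with $D(1)<0$ forces $D(\alpha)<0$, which makes the zero unique and, with $D(0)>0$ and continuity, forces $D>0$ strictly to its left. This $\alpha_c$ is the asserted critical value. The step I expect to be the main obstacle is the endpoint $\alpha=1$: getting $D(1)<0$ strictly (rather than merely $\le 0$, which would let the ``truthful'' regime be empty) requires the mild nondegeneracy that $G$ places mass on the interval where $\tilde{\beta}_t$ differs from truthful; everything else — convexity, continuity up to the boundary, and the convexity-based single-crossing argument — is soft, and the $\alpha=0$ endpoint is handed to us by \cite{tang2016manipulate,nedelec2018thresholding}.
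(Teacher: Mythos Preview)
Your argument is correct, but it works harder than the paper's because you overlook the specific simplification the hypothesis $G_1=G_2$ buys. Under that hypothesis (with no reserve in phase~1) the mixture $\mathcal{G}_\alpha=\alpha G_1+(1-\alpha)G_2$ is equal to $G$ for every $\alpha$, so by the utility formula in Theorem~\ref{thm:TwoStepProcessCommitment} the thresholding utility $U_\alpha(\tilde{\beta}_t)$ for $t>\psi^{-1}(0)$ does not depend on $\alpha$ at all; equivalently, every bid under $\tilde{\beta}_t$ clears the phase-2 reserve, so phases~1 and~2 are identical auctions and $U_1(\tilde{\beta}_t)=U_2(\tilde{\beta}_t)$. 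Thus your $V(\alpha)$ is actually a \emph{constant} $U^*(G)$, not merely convex, and $D(\alpha)=U^*(G)-h_0(\alpha)$ is affine and strictly decreasing (since $h_0(\alpha)$ interpolates linearly between the phase-2 monopoly-reserve utility and the strictly larger phase-1 no-reserve utility). The single crossing is then immediate from the endpoint signs, with no need for the convexity/semicontinuity machinery or the convex single-crossing lemma. Your route is more general---it would survive $G_1\neq G_2$ provided the endpoint signs hold---at the price of that extra machinery; and both your proof and the paper's rely on the same mild nondegeneracy at $\alpha=1$ (that overbidding on a set where $G$ has mass strictly hurts phase-1 utility) to get $D(1)<0$ rather than merely $\le 0$.
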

The proofs of the theorem and the lemma can be found in Section \ref{sec:ProofsSec3Thresholding} and we provide more details on the utility there. 
A graphical illustration of these results can be found in the Appendix, Figure \ref{fig:2StageNoReservesCommitment}.

\subsection{Existence of a Nash equilibrium in the two-stage game} 
\label{subsec:ExistenceNashEquilibriumIn2Stage}
Most of the literature on strategic buyers in repeated auctions have focused on the posted price setting \cite{amin2013learning,medina2014learning}, though see \cite{golrezaei2018dynamic} for a  recent extension to second price auction. To complement this line of work, we exhibit now the existence of a Nash equilibrium between strategic bidders in the two-stage game. Our approach here is the following. All players have essentially two strategies, according to Theorem \ref{thm:TwoStepProcessCommitment}: they can either threshold optimally above the monopoly price or bid truthfully. We consider the first order conditions for a Nash equilibrium if all players threshold and we then study the best response of the remaining player. 

\begin{theorem}\label{thm:ExistenceNashEqui2Stage}
We consider the symmetric case where all bidders have the same value distribution $F$ and it is regular. We assume for simplicity that the distribution is supported on [0,1]. Suppose this distribution has density that is continuous at 0 and 1 with $f(1)\neq 0$ and  $\psi$ crosses 0 exactly once and hence is positive beyond that crossing point. 

In the case where $\alpha=0$, there is a unique symmetric Nash equilibrium. The revenue of the bidders in this case is the same as in a second price auction with no reserve. 

For any fixed $\alpha>0$, there exists $\alpha_{c,\text{thresh}}$ (possibly 0 or $\infty$) such that if $\alpha\leq \alpha_{c,\text{tresh}}$ there exists a symmetric Nash equilibrium in the class of thresholded strategies. It is the same as in the case $\alpha=0$ and hence the revenue of the bidders  is again the same as in a second price auction with no reserve. 

There exists $\alpha_{c,\text{truthful}}$ (possibly 0 or $\infty$) such that if $\alpha\geq \alpha_{c,\text{truthful}}$ there exists a symmetric Nash equilibrium in the class of thresholded strategies and it corresponds to bidding truthfully. 	
\end{theorem}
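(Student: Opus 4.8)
The plan is to turn the search for symmetric equilibria into a one‑dimensional problem on the threshold. By Theorem~\ref{dfn:thresholded_beta} together with Lemma~\ref{thm:x0_to_0} (which forces $x_0=0$ in the relevant range of $\alpha$), the best quasi‑regular response to any fixed competition is a single‑threshold strategy $\tilde\beta_{x_1}$, and by Theorem~\ref{thm:TwoStepProcessCommitment} that response is concretely the better of bidding truthfully and thresholding at the unique stationary threshold $x_1^{*}(\alpha)$ of $U_\alpha$. Hence a symmetric equilibrium in this class is a common $x_1$ for which $\tilde\beta_{x_1}$ is a best response when all opponents play $\tilde\beta_{x_1}$. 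I would first spell out the induced configuration: when the other $n-1$ bidders threshold at $x_1\ge\psi^{-1}(0)$, the flat piece of $R_{B}$ on $[R,x_1]$ with $R=x_1(1-F(x_1))$, together with welfare‑benevolence, forces the seller's phase‑2 monopoly reserve to be exactly $R$; this reserve never binds because every competing bid is $\ge R$; and the competition faced in \emph{each} phase is the law $G_{(2),x_1}$ of the maximum of $n-1$ i.i.d.\ copies of $\tilde\beta_{x_1}(v)$, which on $[R,x_1]$ has the closed form $b\mapsto(1-R/b)^{n-1}$.

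\textbf{The case $\alpha=0$.} Substituting $G_\alpha=G_{(2),x_1}$ and its density into the first‑order condition of Theorem~\ref{thm:double_thresholded_derivatives} (with $x_0=0$) and imposing the fixed‑point condition that $\tilde\beta_{x_1}$ be a best response against opponents who also threshold at $x_1$, the expectation collapses — because $g_{(2),x_1}$ evaluated at a bid $\tilde\beta_{x_1}(X)$ with $X\le x_1$ equals $(n-1)F(X)^{n-2}(1-F(X))^2/R$ — leaving the scalar equation
\begin{align*}
\Phi(x_1):=x_1\,(1-F(x_1))\,F(x_1)^{n-1}-(n-1)\int_0^{x_1}t\,(1-F(t))\,F(t)^{n-2}f(t)\,dt=0 .
\end{align*}
The crucial remark is the cancellation $\Phi'(x_1)=R_X'(x_1)F(x_1)^{n-1}=-f(x_1)\psi(x_1)F(x_1)^{n-1}$. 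Since $\psi(0)<0$, $\psi$ crosses $0$ once at $\psi^{-1}(0)>0$ (which lies in $(0,1)$ because $f(1)\neq0$) and is positive afterwards, $\Phi$ increases from $\Phi(0)=0$ on $[0,\psi^{-1}(0)]$, then strictly decreases with $\Phi(1^-)<0$; hence $\Phi$ has a unique root $x_1^{*}\in(\psi^{-1}(0),1)$, and the sign of $\partial U_0/\partial x_1$ (which equals $f\psi/R_X$ times a term of the same sign as $\Phi$ for $x_1>\psi^{-1}(0)$) confirms $x_1^{*}$ is a maximiser. Thresholds below $\psi^{-1}(0)$ are never equilibria (they leave the reserve at $\psi^{-1}(0)$, so a bidder strictly gains by pushing past $\psi^{-1}(0)$), and truthful bidding is not one either (with all bidders truthful the reserve is $\psi^{-1}(0)>0$ and a lone thresholder strictly benefits, the $\alpha=0$ result of \cite{tang2016manipulate,nedelec2018thresholding}); so $\tilde\beta_{x_1^{*}}$ is the unique symmetric equilibrium. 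At $x_1^{*}$ the allocation is efficient and the reserve non‑binding, so the seller's revenue is $\mathbb{E}[\tilde\beta_{x_1}(v_{(2):n})]$ at $x_1=x_1^{*}$; a direct integration gives $\mathbb{E}[\tilde\beta_{x_1}(v_{(2):n})]-\mathbb{E}[v_{(2):n}]=n\,\Phi(x_1)$, which vanishes at $x_1^{*}$, so the revenue — and, by efficiency, the bidders' utility — equals that of a second‑price auction with no reserve.

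\textbf{The two regimes in $\alpha$.} For the thresholded equilibrium, at the candidate profile the phase‑2 reserve $R^{*}=x_1^{*}(1-F(x_1^{*}))$ is non‑binding and the competition equals $G_{(2),x_1^{*}}$ in both phases, so $U_\alpha(\tilde\beta_{x_1^{*}})=U_0(\tilde\beta_{x_1^{*}})$ and Lemma~\ref{lemma:phaseTransition} applies verbatim with $G_1=G_2=G_{(2),x_1^{*}}$; one checks that its optimal threshold is still $x_1^{*}$ (the $\alpha=0$ stationarity above is inherited by deviations $x_1\ge x_1^{*}$, where the reserve stays non‑binding, while deviations $x_1<x_1^{*}$ only raise and bind the reserve). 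Thus its critical value, call it $\alpha_{c,\text{thresh}}$, is precisely the $\alpha$ below which the bidder prefers thresholding: for $\alpha\le\alpha_{c,\text{thresh}}$ the profile $\tilde\beta_{x_1^{*}}$ is a symmetric equilibrium, identical to the $\alpha=0$ one, with the same revenue. Symmetrically, with all bidders truthful the seller posts the fixed reserve $\psi^{-1}(0)$ and the competition is the truthful second‑highest‑of‑$(n-1)$ law $G_{(2)}$ in both phases; Lemma~\ref{lemma:phaseTransition} then yields a critical $\alpha_{c,\text{truthful}}$ above which the phase‑1 loss from a thresholding deviation outweighs the phase‑2 reserve‑reduction gain, so for $\alpha\ge\alpha_{c,\text{truthful}}$ truthful bidding is a symmetric equilibrium.

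\textbf{Main obstacle.} The $\alpha=0$ part is clean once the cancellation in $\Phi'$ is spotted; the delicate part is the $\alpha>0$ bookkeeping of the seller's \emph{personalised} best‑response reserve under unilateral deviations. A deviation to a smaller threshold pushes the monopoly reserve above the opponents' bid floor, makes it bind, and changes both the competition distribution and the shape of the first‑order condition, so one must carefully separate the non‑binding regime (where the $\alpha=0$ stationarity survives) from the binding regime, verify monotonicity of the utility gap there, and only then invoke Theorem~\ref{thm:TwoStepProcessCommitment} and Lemma~\ref{lemma:phaseTransition} to produce the well‑defined thresholds $\alpha_{c,\text{thresh}}$ and $\alpha_{c,\text{truthful}}$.
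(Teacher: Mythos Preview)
Your route is the paper's: derive the scalar fixed-point equation for the symmetric threshold (your $\Phi=0$ is the paper's $\mathfrak R(r)=\Exp{\psi_X(X)F^{K-1}(X)\indicator{X\le r}}=0$ up to sign, via the same integration by parts), establish a unique root past $\psi^{-1}(0)$, prove revenue equivalence, and then invoke Lemma~\ref{lemma:phaseTransition} at the two candidate profiles (all threshold at $x_1^{*}$; all truthful) to produce $\alpha_{c,\text{thresh}}$ and $\alpha_{c,\text{truthful}}$. Your revenue-equivalence identity $\mathbb E[\tilde\beta_{x_1}(v_{(2):n})]-\mathbb E[v_{(2):n}]=n\,\Phi(x_1)$ is a clean direct alternative to the paper's virtual-value computation in Theorem~\ref{thm:revBuyerInThreshStrategy}.

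One step is underargued. The relation $\partial U_0/\partial x_1=(f\psi/R_X)\,\Phi$ that you use to ``confirm $x_1^{*}$ is a maximiser'' is the derivative \emph{along the diagonal}, i.e.\ with the opponents' threshold moving together with the deviator's; it does not by itself control the sign of the best-response derivative when opponents are frozen at $x_1^{*}$ and only the deviator moves. To close the Nash argument you need what the paper proves separately (Lemmas~\ref{lemma:atMostOneNonTrivialSoln}--\ref{lemma:uniqueSolnInOneStrategicCase}): against any fixed competition $G$ satisfying the stated regularity, the one-strategic best-response equation has a unique root, so the diagonal root $x_1^{*}$ is automatically \emph{the} best response when $G=G_{(2),x_1^{*}}$. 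Your ``main obstacle'' discussion of reserve-binding regimes for $\alpha>0$ is more cautious than the paper, which simply observes that with $G_1=G_2$ the first-order conditions for thresholds above $\psi^{-1}(0)$ are $\alpha$-independent (Theorem~\ref{thm:TwoStepProcessCommitment}) and then reuses the $\alpha=0$ best-response uniqueness verbatim.
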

The proof for the case $\alpha=0$ is in Section \ref{sec:proofNashAlphaZero}. The remainder of the proof is in Section \ref{sec:ProofsSec3Thresholding}. The presence of other strategic bidders does not prevent bidders from taking advantage of the learning stage of the seller.  

\section{Robustness of bidding strategies to the information structure of the game}
We study the robustness of the bidding strategies to several variants of the two-stage game: 1) the seller is using an approximation of the bid distribution to compute the reserve price of the second stage, 2) the bidder does not have an estimation of the competition to compute the optimal strategy for the two-stage game and 3) the seller replaces the lazy second price with monopoly reserve with another type of auction. Their simplicity and robustness make them relevant in real-world interactions.
\subsection{Robustness to sample approximation of the seller and ERM/optimization algorithms}

\label{sec:empirical_estimation}
In practice, the seller needs to estimate the distribution of the buyer and hence does not have a perfect knowledge of the bid distribution $F_{B_i}$. The buyer needs to find a robust shading method, making sure that the seller has an incentive to lower her reserve price, even if she misestimates the bid distribution.

We focus on the specific problem of empirical risk minimization (ERM)\cite{morgenstern2015pseudo,dhangwatnotai2015revenue,huang2018making}. In Appendix \ref{app:subsec:proofERMRobustness}, we give some results about the impact of empirically estimating the value distribution $F$ by the empirical cumulative distribution function $\hat{F}_n$ on setting the reserve price. However because our approximations are formulated in terms of hazard rate, applying those results would yield quite poor approximation results in the context of setting the monopoly price through ERM. This is due to the fact that estimating a density pointwise in supremum norm is a somewhat difficult problem in general, associated with poor rates of convergence \cite{TsybakovBook2009}.  We now show how to take advantage of the characteristics of ERM to obtain better results than would have been possible by only considering the approximation error made on the virtual value. 


\begin{theorem}\label{thm:epsAndMonopolyPriceByERM}
Suppose the buyer has a continuous and increasing value distribution $F$, supported on $[0,b]$, $b\leq \infty$, with the property that if $r\geq y\geq x$, $F(y)-F(x)\geq \gamma_F (y-x)$, where $\gamma_F>0$. Suppose that $\sup_{t\geq r}t(1-F(t))=r(1-F(r))$. Suppose the buyer uses the strategy $\newThreshNotation$ defined as
$$
\newThreshNotation(x)=\left(\frac{(r-\epsilon)(1-F(r))}{1-F(x)}+\epsilon\right)\indicator{x\leq r}+x\indicator{x>r}\;,
$$ Assume she samples $n$ values $\{x_i\}_{i=1}^n$ i.i.d according to the distribution $F$ and bids accordingly in second price auctions. Call $x_{(n)}=\max_{1\leq i \leq n}x_i$. In this case the (population) reserve value $x^*$ is equal to 0. Assume that the seller uses empirical risk minimization to determine the monopoly price in a (lazy) second price auction, using these $n$ samples. Call $\hat{x}_n^*$ the reserve value determined by the seller using ERM. We have, if $C_n(\delta)=n^{-1/2}\sqrt{\log(2/\delta)/2}$ and $\epsilon>x_{(n)} C_n(\delta)/F(r)$ with probability at least $1-\delta_1$, 
$$
\hat{x}_n^*<\frac{2rC_n(\delta)}{\epsilon\gamma_F} \text{ with probability at least } 1-(\delta+\delta_1) \;.
$$

In particular, if $\epsilon$ is replaced by a sequence $\epsilon_n$ such that $n^{1/2}\epsilon_n min(1,1/x_{(n)}) \rightarrow \infty$ in probability, 
$\hat{x}_n^*$ goes to 0 in probability like $n^{-1/2}\max(1,x_{(n)})/\epsilon_n$. 
\end{theorem}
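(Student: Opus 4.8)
The plan is to run the whole argument in \emph{bid space} and then pull the conclusion back through $\newThreshNotation^{-1}$. Recall that $\hat x_n^*$ is the value above which the buyer wins under the seller's ERM reserve, i.e. $\hat x_n^*=\newThreshNotation^{-1}(\hat\rho_n)$ where $\hat\rho_n$ is the empirical monopoly price in bid space. Set $A=(r-\epsilon)(1-F(r))$ and let $b_{\min}=\newThreshNotation(0)=A+\epsilon$ be the left endpoint of the support of the push-forward bid distribution $F_B=F\circ\newThreshNotation^{-1}$. Since $\newThreshNotation$ is continuous and strictly increasing, $1-F_B(\newThreshNotation(x))=1-F(x)$, so the monopoly-revenue curve of the bids is $R_B(\newThreshNotation(x))=A+\epsilon(1-F(x))$ on $\{x\le r\}$ and $R_B(\rho)=\rho(1-F(\rho))$ on $\{\rho>r\}$. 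The hypothesis $\sup_{t\ge r}t(1-F(t))=r(1-F(r))$ then makes $R_B$ maximal at $\rho=b_{\min}$, with $R_B(b_{\min})=b_{\min}$ and $\newThreshNotation^{-1}(b_{\min})=0$ --- this is exactly the claim that the population reserve value is $0$. The two quantitative facts I will use are the revenue gaps $R_B(b_{\min})-R_B(\rho)=\epsilon F(\newThreshNotation^{-1}(\rho))$ for $\rho\in[b_{\min},r]$ and $R_B(b_{\min})-R_B(\rho)\ge\epsilon F(r)$ for $\rho\ge r$; both follow by direct substitution together with $F(0)=0$ and the tail hypothesis.

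Next I would apply the DKW inequality to the i.i.d.\ sample of bids $\{\newThreshNotation(x_i)\}$ (a fixed monotone push-forward of $\{x_i\}$): with probability at least $1-\delta$, $\sup_\rho|\hat F_{B,n}(\rho)-F_B(\rho)|\le C_n(\delta)$, hence $\hat R_{B,n}(\rho)\le R_B(\rho)+\rho C_n(\delta)$ for every $\rho$, while the empirical revenue at the smallest observed bid differs from $R_B(b_{\min})=b_{\min}$ by at most a term bounded by $rC_n(\delta)$ (an $O(1/n)$/tie-breaking slack --- this is the only place the constant becomes $2$ instead of $1$). Writing $\hat\rho_n$ for the ERM bid-reserve, optimality of $\hat\rho_n$ with these bounds yields $R_B(b_{\min})-R_B(\hat\rho_n)\le 2rC_n(\delta)$ \emph{once we know $\hat\rho_n\le r$}. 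To get $\hat\rho_n\le r$: the ERM reserve never exceeds the largest observed bid, so if $\hat\rho_n\ge r$ then $\hat\rho_n\le x_{(n)}$, and by the second revenue gap $\epsilon F(r)\le R_B(b_{\min})-R_B(\hat\rho_n)\le x_{(n)}C_n(\delta)$, contradicting the hypothesis $\epsilon>x_{(n)}C_n(\delta)/F(r)$, which holds with probability at least $1-\delta_1$. Therefore $\hat\rho_n<r$, whence $\hat x_n^*=\newThreshNotation^{-1}(\hat\rho_n)<r$ and the exact gap identity gives $\epsilon F(\hat x_n^*)=R_B(b_{\min})-R_B(\hat\rho_n)\le 2rC_n(\delta)$. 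Finally, $\hat x_n^*\le r$ lets me invoke the one-sided density hypothesis, $F(\hat x_n^*)=F(\hat x_n^*)-F(0)\ge\gamma_F\hat x_n^*$, so $\epsilon\gamma_F\hat x_n^*\le 2rC_n(\delta)$, i.e. $\hat x_n^*<2rC_n(\delta)/(\epsilon\gamma_F)$; a union bound over the DKW event and the $\epsilon$-event gives probability at least $1-(\delta+\delta_1)$.

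For the in-probability statement I would take $\delta=\delta_n\to0$ sufficiently slowly: the assumption $n^{1/2}\epsilon_n\min(1,1/x_{(n)})\to\infty$ in probability is exactly what forces both $\epsilon_n>x_{(n)}C_n(\delta_n)/F(r)$ and $2rC_n(\delta_n)/(\epsilon_n\gamma_F)\to0$ with probability tending to $1$, so the bound derived above applies, and combined with the trivial $\hat x_n^*\le x_{(n)}$ it gives the rate $n^{-1/2}\max(1,x_{(n)})/\epsilon_n$. The main obstacle is the step concerning the ``truthful tail'': one must rule out the ERM reserve escaping into the region $\rho\ge r$, where the bid-revenue curve is the honest $t(1-F(t))$ rather than the nearly flat $A+\epsilon(1-F(\cdot))$, and the hypothesis $\epsilon>x_{(n)}C_n(\delta)/F(r)$ is calibrated precisely so that the $\epsilon F(r)$ revenue premium of reserving at $b_{\min}$ beats the $x_{(n)}C_n(\delta)$ sampling fluctuation at large reserves. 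The remainder is bookkeeping, the one subtlety being to keep in mind that it is the value-domain image $\hat x_n^*=\newThreshNotation^{-1}(\hat\rho_n)$, not the bid-space monopoly price $\hat\rho_n$ (which stays near $b_{\min}>0$), that the theorem asserts vanishes.
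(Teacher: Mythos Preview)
Your argument is correct and is essentially the same as the paper's proof: the paper also splits the analysis into the region $[0,r]$ (where the population revenue gap is exactly $\epsilon(F(y)-F(0))$, combined with DKW and the density lower bound $F(y)\ge\gamma_F y$) and the region $[r,\infty)$ (where the tail hypothesis and the condition $\epsilon>x_{(n)}C_n(\delta)/F(r)$ rule out the ERM reserve). The only cosmetic difference is that the paper first transfers the problem to value coordinates via the identity $\argmax_t t(1-\hat B_n(t))=\argmax_u \beta_{r,\epsilon}(u)(1-\hat F_n(u))$ and then argues there, while you stay in bid coordinates and pull back through $\newThreshNotation^{-1}$ at the end; your exact-gap formulation $R_B(b_{\min})-R_B(\rho)=\epsilon F(\newThreshNotation^{-1}(\rho))$ is slightly cleaner than the paper's introduction of an auxiliary threshold parameter $t>2r/\gamma_F$, but the content is identical. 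One minor simplification available to you: since $\hat F_n(0)=0$ almost surely, the empirical revenue at $b_{\min}$ equals $b_{\min}=R_B(b_{\min})$ exactly, so the ``$O(1/n)$ tie-breaking slack'' you mention is not actually needed (the paper uses this directly).
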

Informally speaking, our theorem says that using the strategy $\newThreshNotationEpsn$  with $\epsilon_n$ slightly larger than $n^{-1/2}$ will yield a reserve value arbitrarily close to 0. Hence the population results we derived in earlier sections apply to the sample version of the problem. We give examples and discuss our assumptions in Appendix \ref{app:subsec:proofERMRobustness} where we prove the theorem.
%

\subsection{Robustness to the knowledge of the competition distribution}
\label{sec:robust_competition}
Another common situation is that bidders do not know in advance the competition $G$ they are facing. They need to estimate it from past interactions with the seller and other bidders. First, we show that there exists some strategies that do not need a precise estimation of the competition. Then, we look at some worst-case scenario where the goal is to find the strategy with the highest utility in the worst case of G. We are optimizing the bidding strategy in the class of thresholded strategies. 

\begin{theorem}[Thresholding at the monopoly price]\label{thm:settingReserveValToZeroImprovesPerformance}
Consider the one-strategic setting in a lazy second price auction with $F_{X_i}$ the value distribution of the strategic bidder $i$ with a seller computing the reserve prices to maximize her revenue. Consider $\beta_{tr}$ the truthful strategy. Consider the case of $\alpha = 1$.
Then there exists $\beta$  which does not depend on G and   : 
\textbf{1)} $U_i(\beta)\geq U_i(\beta_{tr})$, $U_i$ being the utility of the strategic bidder.
\textbf{2)} $R_i(\beta)\geq R_i(\beta_{tr})$, $R_i$ being the payment of bidder $i$ to the seller. 
Then,  $\tilde{\beta}_{\psi^{-1}(0)}^{(\epsilon)}$ fulfill these conditions for $\epsilon\geq 0$ small enough.

For $\epsilon=0$, we call this strategy the thresholded strategy at the monopoly price.
In the two-stage game, there is a critical value $\alpha_c$ for which, if $\alpha<\alpha_c$, it is preferable for the bidder to threshold at the monopoly price and if $\alpha>\alpha_c$, it is preferable for the bidder to bid truthfully. 	
\end{theorem}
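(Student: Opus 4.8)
The plan is to exhibit the strategy explicitly — namely $\tilde{\beta}_{\psi^{-1}(0)}^{(\epsilon)}$ with $\epsilon\geq 0$ small — and verify the three claimed properties one at a time. First I would recall from Corollary \ref{coro:impactOfMisestimationOnShadingThStrat} (or a direct computation from the definition of $\newThreshNotation$) that under this strategy the perceived virtual value of the bid distribution satisfies $\psi_{B,F}(\tilde\beta(x)) = \epsilon$ for $x \leq r := \psi^{-1}(0)$ and $\psi_{B,F}(\tilde\beta(x)) = \psi_F(x)$ for $x > r$. Since $\psi$ crosses $0$ at $r$ and is positive beyond, the perceived virtual value is nonnegative everywhere, so the seller's monopoly revenue curve $R_B$ is nondecreasing up to the lowest bid and the optimal (welfare-benevolent) reserve is the bottom of the support, i.e. $r^*_{\tilde\beta} = 0$ — this is exactly the ``population reserve value $x^*=0$'' statement reused from Theorem \ref{thm:epsAndMonopolyPriceByERM}. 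Crucially, this conclusion used nothing about $G$, which gives the ``does not depend on $G$'' claim: the strategy is fixed in advance and forces reserve $0$ regardless of the competition.

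Next I would establish properties \textbf{1)} and \textbf{2)}. With $\alpha=1$ and reserve $0$ in effect, the strategic bidder is playing a second-price auction with no reserve against competition $G$, but while bidding $\tilde\beta(x) \geq x$ on low values and truthfully on high values. For the utility bound $U_i(\tilde\beta) \geq U_i(\beta_{tr})$, I would compare the payment/allocation pointwise: on $\{x > r\}$ the bid is unchanged and the reserve went from $r^*_{\beta_{tr}}$ (the monopoly price, which is $\psi^{-1}(0)=r>0$ by assumption, since under truthful bidding the seller sets the monopoly reserve) down to $0$, which can only help the bidder; on $\{x \leq r\}$ the bidder overbids by a controlled amount but — for $\epsilon$ small enough — the extra expected payment incurred from occasionally winning at an inflated price against $G$ is dominated by the gain from the reserve dropping to $0$ (this is where $\epsilon$ small is used, and where one invokes that $\tilde\beta(x) \to \frac{r(1-F(r))}{1-F(x)}$ stays bounded for $x \leq r < b$). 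For property \textbf{2)}, the seller's revenue from bidder $i$: under truthful bidding the seller collects the monopoly revenue $r(1-F(r)) = \max_t R_X(t)$; under $\tilde\beta$ with reserve $0$ the seller collects the expected second-highest of $(\tilde\beta(X), \text{competition})$, and I would argue this is at least $r(1-F(r))$ because whenever the bidder's value exceeds $r$ she bids truthfully and the competition contributes, while the mass moved onto the plateau $\frac{R}{1-F(x)}$ above $x$ only adds revenue; alternatively, monotonicity of the construction in Theorem \ref{dfn:thresholded_beta} already encodes $R_B \geq R_X$ pointwise on the relevant range. The $\epsilon \to 0$ limit then gives the ``thresholded strategy at the monopoly price''.

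Finally, for the two-stage statement I would import Lemma \ref{lemma:phaseTransition} essentially verbatim: thresholding at $x_1 = \psi^{-1}(0)$ is the boundary case of the thresholding family, its phase II utility strictly exceeds that of truthful bidding (by part \textbf{1)} applied with $\alpha=1$, which is the phase-II subgame), while its phase I utility is weakly worse than truthful (overbidding on low values against $G_1$ with no reserve costs the bidder a nonnegative amount). Writing $U_\alpha = \alpha U_1 + (1-\alpha)U_2$ as an affine function of $\alpha$ comparing the two candidate strategies, the difference $U_\alpha(\tilde\beta_{\psi^{-1}(0)}) - U_\alpha(\beta_{tr})$ is affine in $\alpha$, positive at $\alpha=0$ and negative (or the comparison reverses) as $\alpha\to 1$ since exploration cost dominates; hence a unique crossing $\alpha_c \in (0,1)$, giving the claimed threshold behavior.

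The main obstacle I anticipate is property \textbf{2)}, the revenue comparison $R_i(\tilde\beta)\geq R_i(\beta_{tr})$: the bidder has simultaneously lowered the reserve to $0$ (which hurts the seller) and inflated low bids (which helps the seller), and one must show the net effect is nonnegative \emph{for all} competition distributions $G$ and for the relevant range of $\epsilon$. This requires a careful accounting of the seller's revenue as an integral over the joint law of $(\tilde\beta(X), \text{competition})$ and is the one place where the interplay between $\epsilon$, $r=\psi^{-1}(0)$, and the shape of $F$ near $0$ genuinely matters; the utility bound \textbf{1)} and the two-stage phase transition are comparatively routine given Lemma \ref{lemma:phaseTransition} and the reduction to reserve $0$.
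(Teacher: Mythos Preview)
Your high-level plan is right --- compute the virtual value, show the reller's reserve drops to $0$, compare utilities/payments, then invoke the phase-transition argument --- but the execution of properties \textbf{1)} and \textbf{2)} has a real gap, and your diagnosis of which one is hard is inverted.

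The key tool you are missing is the integrated Myerson lemma (Appendix~\ref{appendix_lemma1}), which writes both the payment and the utility of bidder~$i$ as expectations involving $\psi_B$:
\[
R_i(\beta)=\Exp{\psi_B(\beta(X))\,G(\beta(X))\,\indicator{X\geq x_\beta}},\qquad
U_i(\beta)=\Exp{(X-\psi_B(\beta(X)))\,G(\beta(X))\,\indicator{X\geq x_\beta}}.
\]
With the threshold strategy, $\psi_B(\tilde\beta(x))=\epsilon$ on $[0,r]$ and $\psi_B(\tilde\beta(x))=\psi_X(x)$ on $(r,\infty)$, and the reserve value is $0$. On $(r,\infty)$ the two strategies coincide (same bid, same virtual value, both clear reserve), so the difference lives entirely on $[0,r]$, where truthful bidding contributes nothing (the reserve blocks) and the threshold strategy contributes
\[
R_i(\tilde\beta)-R_i(\beta_{tr})=\Exp{\epsilon\, G(\tilde\beta(X))\indicator{X\leq r}}\geq 0,\qquad
U_i(\tilde\beta)-U_i(\beta_{tr})=\Exp{(X-\epsilon)\, G(\tilde\beta(X))\indicator{X\leq r}}.
\]
At $\epsilon=0$ the utility gain is $\Exp{X\,G(\tilde\beta(X))\indicator{X\leq r}}\geq 0$, and by continuity in $\epsilon$ it remains nonnegative for $\epsilon$ small. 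So property \textbf{2)} is the \emph{trivial} direction, not the hard one.

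Your direct pointwise argument for \textbf{1)} does not close without this lemma: on $[0,r]$ the bidder overbids, and for a fixed realization $x$ with competition $Y\in(x,\tilde\beta(x))$ she wins and pays $Y>x$, yielding negative utility. You cannot dismiss this as ``dominated by the gain from the reserve dropping to $0$'' because on $[0,r]$ the truthful utility was already zero --- there is no gain to dominate. It is only after integrating over $X$ via Myerson's integration by parts that the net effect collapses to $\Exp{(X-\epsilon)G(\tilde\beta(X))\indicator{X\leq r}}$. Your argument for \textbf{2)} is also confused: $R_i(\beta_{tr})$ is the bidder's expected payment in the auction (which depends on $G$), not the monopoly revenue $r(1-F(r))$; the monopoly-revenue-curve comparison $R_B$ vs.\ $R_X$ is a different object entirely.

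Your two-stage argument is fine and matches the paper: it is exactly the proof of Lemma~\ref{lemma:phaseTransition} specialized to $x_1=\psi^{-1}(0)$.
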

The formal proof in the general case is in Appendix \ref{proof_Section6}. The nice and crucial property of thresholding at the monopoly price is that it does not depend on a specific knowledge of the competition. 
\paragraph{Numerical applications in the case $\alpha=0$:} In the case of two bidders with uniform value distribution, the strategic bidder utility increases from 0.083 to 0.132, (a 57\% increase). In the case of two bidders with exponential distribution,with parameters $\mu =  0.25$ and $\sigma = 1$, the utility of the strategic bidders goes from  0.791  to 1.025 (a  29.5\% increase). 
This theorem shows that even if a fine-tuned knowledge of the competition helps  bidders increase their utility, there exists a strategy where bidders do not need to know the competition and can still significantly increase their utility. Due to lack of space, we provide more results on the impact of knowledge of the competition in Appendix \ref{proof_Section6}.
\subsection{Robustness to a change of mechanism}
\label{sec:change_mechanism}
We finally show that the thresholded strategies are robust to certain changes of mechanism. This makes sense in the framework of Stackelberg games with no commitment where the player performs a certain optimization given the objective function of the other player whereas this other player can decide to change her optimization problem. In this section, we study the impact on the utility of the second stage if the seller decides to change from a lazy second price with monopoly reserve to another type of revenue-maximizing auction. 

\textbf{Myerson auction.} 
In the one-strategic setting, in the symmetric case where all bidders have initially the same value distribution, we show that the utility gain for the strategic bidder of thresholding at the monopoly price is higher in the Myerson auction than in the lazy second price auction.
\begin{lemma}
\label{lemma_Myerson}
Consider the case where the distribution of the competition is fixed. Assume all bidders have the same value distribution $F_X$, and that $F_X$ is regular. Assume that bidder $i$ is strategic and that the $K-1$ other bidders bid truthfully. Let us denote by $\beta_{truth}$ the truthful strategy and $\beta_{thresh}$ the thresholded strategy at the monopoly price. The utility of bidder $i$ in the Myerson auction $U_i^{Myerson}$ and in the lazy second price auction $U_i^{Lazy}$ satisfy
\begin{equation*} 
U_i^{\text{Myerson}}(\beta_{\text{thresh}}) - U_i^{\text{Myerson}}(\beta_{\text{truth}}) \geq U_i^{\text{Lazy}}(\beta_{\text{thresh}}) - U_i^{\text{Lazy}}(\beta_{\text{truth}}).
\end{equation*}
\end{lemma}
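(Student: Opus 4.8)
The plan is to compare the two auctions directly by writing each utility as an integral over the value $v$ of the strategic bidder, conditioning on the realized competition, and then showing that thresholding at the monopoly price buys the strategic bidder a larger allocation/payment swing in the Myerson auction than in the lazy second price. First I would fix the (exogenous) competition and recall the two allocation rules: in the lazy second price with monopoly reserve, bidder $i$ wins iff her bid exceeds both her own reserve and the highest competing bid, and pays the max of the two; in the Myerson auction the allocation is to the highest nonnegative virtual bid, so bidder $i$ wins iff $\psi_{B_i}$ of her bid exceeds $0$ and exceeds the (virtual) competing bids, with the corresponding ironed-virtual payment. Since all bidders share the same regular $F_X$ and the $K-1$ others are truthful, the competing virtual values are just $\psi_X$ of the competitors' values, which keeps everything in a single common virtual-value scale.

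Next I would exploit the key structural fact established in the thresholding analysis (the Remark and Theorem~\ref{dfn:thresholded_beta}, and the ERM discussion): under $\beta_{\text{thresh}}=\tilde\beta_{\psi^{-1}(0)}$ the push-forward bid distribution $F_{B_i}$ has virtual value identically $0$ on the thresholded region $x\le \psi^{-1}(0)$ and equal to $\psi_X(x)$ for $x>\psi^{-1}(0)$. Thus in the Myerson auction, thresholding turns a whole mass of values that used to have negative virtual value (and hence lost against the reserve, contributing zero utility) into values with virtual value exactly $0$: these now win whenever all competitors are below $\psi^{-1}(0)$ as well, and crucially they win at price $0$ (virtual-value $0$ against the reserve), yielding strictly positive utility $v-0=v$ on that event. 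I would write
\begin{align*}
U_i^{\text{Myerson}}(\beta_{\text{thresh}}) - U_i^{\text{Myerson}}(\beta_{\text{truth}})
&= \int (\text{gain on }\{v\le \psi^{-1}(0)\}) \,dF_X(v) \;-\; \int (\text{loss on }\{v>\psi^{-1}(0)\})\,dF_X(v),
\end{align*}
and do the same decomposition for $U_i^{\text{Lazy}}$. On $\{v>\psi^{-1}(0)\}$ the bid is unchanged (truthful) in both strategies, so the "loss" term is actually only the change in the winning region/payment coming from the competitors now occasionally being in the thresholded zone; I would argue these high-value terms are identical (or at least the Myerson one is no worse) across the two auctions because above the monopoly price the two mechanisms coincide on the relevant comparisons. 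The real content is the "gain" term on $\{v\le\psi^{-1}(0)\}$: in the lazy auction the strategic bidder can only win there by outbidding the actual competing bid with a shaded bid $\tilde\beta(v)\ge v$, paying that competing bid; in Myerson she wins on the larger event "all virtual competitors $\le 0$" and pays the threshold-induced price, which is lower. Monotonicity/coupling of these two winning events and a pointwise comparison of the payments then gives the inequality.

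The main obstacle I expect is handling the payments and winning regions carefully when a competitor's value also falls in the thresholded band $[0,\psi^{-1}(0)]$ — there the lazy auction and the Myerson auction treat ties in virtual value and the reserve differently, and one must be sure the bookkeeping doesn't secretly reverse the inequality. To control this I would (i) use the standard payment-identity/envelope formula, writing each utility as $\mathbb{E}[\text{(allocation)}\cdot(\text{something})]$ so that the comparison reduces to comparing interim allocation probabilities as functions of $v$, and (ii) invoke the welfare-benevolence/lowest-reserve convention already in force (the seller picks $r^*_\beta=\inf\argmax$) so that the $v=\psi^{-1}(0)$ boundary and the virtual-value-$0$ band are resolved in the bidder's favor consistently in both mechanisms. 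A clean alternative, if the direct allocation comparison gets messy, is to note that for a fixed bid distribution the Myerson auction is revenue-optimal and welfare-distorting only through the reserve, so thresholding — which precisely neutralizes the reserve on the low band — recovers strictly more of the first-best surplus for the bidder in Myerson than in the lazy format, where the reserve is a blunter all-or-nothing instrument; formalizing that surplus-recovery statement as an inequality between the two "gain" integrals would complete the proof.
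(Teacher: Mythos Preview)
Your plan is essentially the paper's approach: split according to whether $X_i$ lies above or below the monopoly price $\psi^{-1}(0)$ and compare the low-region gains in the two mechanisms. Two places where you over-complicate, though, and the first one is a real confusion that would derail the write-up if left in.

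First, there is no ``loss'' term on $\{v>\psi^{-1}(0)\}$ and no issue of ``competitors occasionally being in the thresholded zone'': only bidder $i$ is strategic, the other $K-1$ bidders are truthful throughout, so their virtual values are $\psi_X(X_j)$ regardless of what bidder $i$ does. Above the monopoly price $\beta_{\text{thresh}}$ is itself truthful, so in both mechanisms the allocation and payment on that region are \emph{identical} under the two strategies. The paper simply writes $U_i^{\text{Myerson}}(\beta_{\text{thresh}})=U_i^{\text{Myerson}}(\beta_{\text{truth}})+(\text{low-region term})$, with no residual to bound.

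Second, once the high region drops out, the comparison of the two gain integrals is a one-line pointwise inequality, not a coupling or envelope argument. Working with the representation $U=\Exp{(X_i-\psi_{B_i}(B_i))\cdot(\text{allocation indicator})}$ (so the ``price paid'' issue disappears), on $\{X_i\le\psi^{-1}(0)\}$ the Myerson gain is $X_i\,\indicator{\max_{j\neq i}X_j\le\psi^{-1}(0)}$ while the lazy gain is $X_i\,\indicator{\max_{j\neq i}X_j\le\tilde\beta_{\psi^{-1}(0)}(X_i)}$. Since $\tilde\beta_{\psi^{-1}(0)}(x)\le\psi^{-1}(0)$ for $x\le\psi^{-1}(0)$, integrating over the (truthful) competition gives $F^{K-1}(\psi^{-1}(0))\ge F^{K-1}\!\big(\tilde\beta_{\psi^{-1}(0)}(X_i)\big)$ pointwise in $X_i$, and integrating against $X_i f(X_i)\,dX_i$ finishes.
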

The proof is given in Appendix \ref{appendix_myerson}. Numerics with $K=2$ bidders with $\mathcal{U}[0,1]$ value distribution: The utility is 1/12 in the truthful case in both lazy second price and Myerson since these auctions are identical in the symmetric case. The utility of the thresholded strategy at monopoly price is $7/48$ in the Myerson auction, i.e. $75\%$ more than the utility with the truthful strategy. We note that the gain is larger than for a second price auction with monopoly reserve where the extra utility was $57\%$.

\textbf{Eager second price auction with monopoly price.} 
Monopoly reserves are not optimal reserve prices for this version of the second price auction in general but in practice, the optimal ones are NP-hard to compute \cite{wang2016optimal,paes2016field}. We recall that the eager second price auction is a standard second price auction between bidders who clear their reserves.

\begin{lemma}
Consider the same setting and notations as Lemma \ref{lemma_Myerson}. The utility of bidder $i$ in the eager second price auction with monopoly reserves, $U_i^{Eager}$, and in the lazy second price auction $U_i^{Lazy}$ with the same reserves, satisfy : $U_i^{\text{Eager}}(\beta_{\text{thresh}}) - U_i^{\text{Eager}}(\beta_{\text{truth}}) \geq U_i^{\text{Lazy}}(\beta_{\text{thresh}}) - U_i^{\text{Lazy}}(\beta_{\text{truth}})$.
\end{lemma}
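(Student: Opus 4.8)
The plan is to reduce the inequality to a single expectation that is manifestly nonnegative, by comparing the eager and lazy auctions realization by realization conditionally on the strongest competing valuation $W:=\max_{j\ne i}v_j$ (which is atomless, as the $v_j$ are i.i.d.\ with continuous $F_X$). We may assume $M:=\psi^{-1}(0)>0$, since otherwise $\beta_{\text{thresh}}=\beta_{\text{truth}}$ and the claim is a trivial equality.

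First I would dispose of the truthful terms. Under $\beta_{\text{truth}}$ all $K$ bidders face the \emph{same} reserve, the monopoly price $M$ of $F_X$, and a second-price auction with a common reserve picks the same winner at the same price in its lazy and its eager version; hence $U_i^{\text{Eager}}(\beta_{\text{truth}})=U_i^{\text{Lazy}}(\beta_{\text{truth}})$, and it remains to prove $U_i^{\text{Eager}}(\beta_{\text{thresh}})\ge U_i^{\text{Lazy}}(\beta_{\text{thresh}})$. For this I would use the structure of $\beta_{\text{thresh}}=\tilde{\beta}^{(0)}_M$ from Theorem~\ref{dfn:thresholded_beta}: every bid of bidder $i$ is at least $R:=M(1-F_X(M))$, with $0<R<M$, the revenue curve $R_{B_i}$ is constant equal to $R$ on $[R,M]$ and decreasing beyond $M$, so the welfare-benevolent seller sets bidder $i$'s reserve \emph{price} to $R$ (equivalently, value-space reserve $0$; cf.\ Theorem~\ref{thm:epsAndMonopolyPriceByERM}), while the $K-1$ truthful competitors keep reserve $M$. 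In particular, since $B_i\ge R$ always, bidder $i$ always clears their reserve.

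Then I would condition on $W$ and split into three ranges. If $W\ge M$, the strongest competitor clears $M$ and both auctions award the item to $i$ precisely when $B_i>W$, at price $W$, so the utilities agree. If $W\le R$, bidder $i$ is the highest bidder and pays $R$ in both auctions, so they agree again. On the remaining event $R<W<M$ all competitors are below their reserve $M$ and are discarded in the eager auction, so $i$ wins outright at price $R$ (utility $v_i-R$ for \emph{every} $v_i$), whereas in the lazy auction $i$ wins only when $B_i>W$, i.e.\ $v_i>\phi(W)$ with $\phi(W):=(\tilde\beta^{(0)}_M)^{-1}(W)=F_X^{-1}(1-R/W)$, and then pays $\max(R,W)=W$. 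Using $\Pr(v_i>\phi(W))=1-F_X(\phi(W))=R/W$, the eager expected utility on this event is $\int_0^b v\,dF_X(v)-R$ and the lazy one is $\int_{\phi(W)}^b v\,dF_X(v)-R$; subtracting, the $-R$ terms cancel and we obtain
\begin{align*}
U_i^{\text{Eager}}(\beta_{\text{thresh}})-U_i^{\text{Lazy}}(\beta_{\text{thresh}})
=\mathbb{E}_W\!\left[\indicator{R<W<M}\int_{0}^{\phi(W)}v\,dF_X(v)\right]\ \ge\ 0,
\end{align*}
which together with the first step gives the lemma.

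The main obstacle is that this inequality is \emph{not} valid realization by realization: on the event $R<W<M$ with $B_i\le W$, the eager auction hands bidder $i$ an item worth $v_i$ at price $R$, which is strictly \emph{worse} than the zero utility of the lazy auction (there the item is withheld because the nominal top bidder, with value $W<M$, fails to clear). Hence a pointwise domination argument is impossible; nonnegativity only emerges after integration, through the exact cancellation above between those ``forced'' low-value wins and the payment savings bidder $i$ enjoys on the complementary sub-event $R<W<B_i$. Getting that cancellation right — in particular recognizing that the seller's reserve for $i$ is the \emph{price} $R$ (not $0$, not $M$) and invoking the identity $1-F_X(\phi(W))=R/W$ — is the crux; the rest is routine bookkeeping of the lazy versus eager allocation and payment rules across the three ranges of $W$.
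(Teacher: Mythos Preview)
Your proof is correct and arrives at the same inequality as the paper, but by a genuinely different route. The paper works in the Myerson virtual-value representation: it writes
\[
U_i(\beta_{\text{thresh}})=\Exp{(X_i-\psi_{B_i}(\beta(X_i)))\,\mathcal{G}(\beta(X_i))}
\]
with $\mathcal{G}$ the cdf of the \emph{effective} highest competing bid, observes that in the eager auction this cdf is modified on $[0,M]$ to the constant $F_X^{K-1}(M)$ (because competitors below their common reserve $M$ are discarded), and concludes since $F_X^{K-1}(M)\ge F_X^{K-1}(\beta(x))$ for $x\le M$. The truthful terms are disposed of exactly as you do, by the symmetric-reserve observation.

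Your argument instead conditions on $W=\max_{j\ne i}v_j$, does a direct three-range case analysis of the allocation and payment rules, and exploits the identity $1-F_X(\phi(W))=R/W$ to get the exact cancellation $-R$ in both the eager and lazy conditional utilities. This is more elementary in that it never invokes the Myerson lemma, and it makes explicit the pointwise non-domination you flag (the forced low-value wins in eager when $R<W<M$ and $B_i\le W$), which the paper's virtual-value calculation hides. The paper's route is shorter once the Myerson machinery is in hand, and it generalizes more readily: one immediately sees the same comparison would hold for any strategy whose induced virtual value is nonnegative below $M$, since only the monotonicity $G_i\ge G$ on $[0,M]$ is used. Your route, by contrast, relies on the specific closed form of $\beta_{\text{thresh}}$ through the inversion $\phi(W)=F_X^{-1}(1-R/W)$.
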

The proof is given in Appendix \ref{appendix_eager}.
These two lemmas show that thresholded strategies can increase the utility of strategic bidders even if the seller runs a different auction in the different stage than the lazy second price auction with monopoly price. In future work, we plan to design games between bidders and seller where the seller can change the mechanism at any point and bidders can update their bidding strategy changing the bid distribution observed by the seller. 
\section{Conclusion}
Our paper contributes to the growing literature on learning in the presence of strategic behavior. It extends recent work on repeated auctions \cite{amin2014repeated,mohri2015revenue} and addresses a key issue recognized in these papers, namely understanding the interaction between sellers and multiple strategic buyers. We propose novel optimal strategies for the canonical problem of auction design with unknown value distributions that are learned in an exploration phase and exploited thereafter.  
The Stackelberg game we consider exhibit complex solutions but we provide  simple strategies that can be made robust to various setups reflecting how much information bidders have about the game they are participating in. This allows buyers to quantify the price of revealing information about their values in repeated auctions. It also opens new avenues for research on the seller side by providing new realistic strategies that may adopted by the strategic bidder. 

\section*{Acknowledgements}
V. Perchet has benefited from the support of the FMJH Program Gaspard Monge in optimization and operations research (supported in part by EDF) and from the CNRS through the PEPS program. N. El Karoui gratefully acknowledges support from grant NSF DMS 1510172.

\newpage
\bibliographystyle{unsrt}
\bibliography{main}
\newpage
\appendix 
\section{Proof of Myerson Lemma}
\label{appendix_lemma1}
\begin{customlemma}{1}[Integrated version of the Myerson lemma]
Let bidder $i$ have value distribution $F_i$ and call $\beta_i$ her strategy, $F_{B_i}$ the induced distribution of bids and $\psi_{B_i}$  the corresponding virtual value function. Assume that  $F_{B_i}$ has a density and finite mean. Suppose that $i$'s bids are independent of the bids of the other bidders and denote by $G_i$ the cdf of the maximum of their bids. Suppose a lazy second price auction with reserve price denoted by $r$ is run. Then the payment $M_i$ of bidder $i$ to the seller can be expressed as 
\begin{equation*}
M(\beta_i) = \mathbb{E}_{B_i \sim F_{B_i}}\bigg(\psi_{B_i}(B_i)G_i(B_i)\textbf{1}(B_i \geq r)\bigg)\;.
\end{equation*}
When the other bidders are bidding truthfully, $G_i$ is the distribution of the maximum value of the other bidders.
\end{customlemma}
\begin{proof}
The proof is similar to the original one \cite{Myerson81} (see \cite{krishna2009auction} for more details).
It consists in using Fubini's theorem and integration by parts (this is why we need conditions on $F_{B_i}$)  to transform the standard form of the seller revenue, i.e.
$$
\mathbb{E}_{B_i\sim F_{B_i},B_j\sim F_{B_j}}\bigg(\max_{j\neq i}(B_j,r)\indicator{B_i\geq \max_{j\neq i}(B_j,r)}\bigg)
$$
into the above equation. 
We consider a lazy second price auction. Call $r$ the reserve price for bidder $i$. So the seller revenue from bidder $i$ is 
$$
\Exp{\max_{j\neq i}(B_j,r)\indicator{B_i\geq \max_{j\neq i}(B_j,r)}}
$$
In general, we could just call $Y_i=\max_{j\neq i}B_j$ and say that the revenue from $i$, or $i$'th expected payment is 
$$
\Exp{\max(Y_i,r)\indicator{B_i\geq \max(Y_i,r)}}\;.
$$
Call $G$ the cdf of $Y_i$ and $\tilde{G}$ the cdf of $\max(Y_i,r)$. Note that $\tilde{G}(t)=\indicator{t \geq r}G(t)$.

So if we note $B_i=t$, we have
$$
\Exp{\max(Y_i,r)\indicator{B_i\geq \max(Y_i,r)}|B_i=t}=\int_0^t u d\tilde{G}(u)
$$
Integrating by parts we get 
\begin{align*}
\int_0^t u d\tilde{G}(u)&=\left. u\tilde{G}(u)\right|_0^t-\int_0^t \tilde{G}(u)du\\
&=t\tilde{G}(t)-\int_0^t \tilde{G}(u)du=\indicator{t\geq r}\left[t G(t)-\int_{r}^t G(u)du\right]
\end{align*}
Hence, 
$$
\Exp{\max(Y_i,r)\indicator{b_i\geq \max(Y_i,r)}}=\int_0^\infty 
\left[\indicator{b_i\geq r}b_i G(b_i)-\int_{0}^{b_i}\indicator{u\geq r} G(u)du\right] f_{B_i}(b_i)db_i
$$
The first term of this integral is simply 
$$
\int_0^\infty \indicator{b_i\geq r}b_i G(b_i)f_{B_i}(b_i)db_i=
\Exp{B_iG(B_i)\indicator{B_i\geq r}}.
$$
Note that to split the two terms of the integral we need to assume that $\Exp{B_i}<\infty$, hence the first moment assumption on $F_i$.
The other part of the integral is 
\begin{align}
&\int_0^\infty \left(\int_{0}^{b_i}\indicator{u\geq r} G(u)du\right)f_{B_i}(b_i)db_i=\int_0^\infty \left(\int \indicator{b_i\geq u}\indicator{u\geq r} G(u)du\right)f_{B_i}(b_i)db_i\\
&=\int\int\indicator{u\geq r} G(u)  \indicator{b_i\geq u} f_{B_i}(b_i)  db_i du
=\int\indicator{u\geq r} G(u)  \left(\int\indicator{b_i\geq u} f(b_i)  db_i\right) du\\
&=\int\indicator{u\geq r} G(u)  P(B_i\geq u)du
= \int\indicator{u\geq r} G(u)  \frac{1-F_{B_i}(u)}{f_{B_i}(u)}f_{B_i}(u)du\\
&=\Exp{\indicator{B_i\geq r}G(B_i)\frac{1-F_{B_i}(B_i)}{f_{B_i}(B_i)}}
\end{align}
We used Fubini's theorem to change order of integrations, since all functions are non-negative. The result follows.

Of course, when $f_{B_i}(b_i)=0$ somewhere we understand $f_{B_i}(b_i)/f_{B_i}(b_i)=0/0$ as being equal to 1. To avoid this problem completely we can also simply write 
$$
M(\beta_i)=\int \left[b_if_{B_i}(b_i)-(1-F_{B_i}(b_i))\right]G(b_i)\indicator{b_i\geq r}db_i=\int \frac{\partial [b_i(F_{B_i}(b_i)-1)]}{\partial b_i}G(b_i) \indicator{b_i\geq x} db_i\;.
$$
If $F_{B_i}$ is not differentiable but absolutely continuous, its Radon-Nikodym derivative is used when interpreting the differentiation of $[b_i(F_{B_i}(b_i)-1)]$ with respect to $b_i$. 

\subsection{Technical lemmas}
\begin{lemma}\label{definition_psi}
Suppose $B_i=\beta_i(X_i)$, where $\beta_i$ is increasing and differentiable and $X_i$ is a random variable with cdf $F_i$ and pdf $f_i$. Then 
\begin{equation}\label{eq:ODEPhiG}
h_{\beta_i}(x_i)\triangleq \beta_i(x)-\beta_i'(x)\frac{1-F_{i}(x)}{f_{i}(x)} = \psi_{F_{B_i}}(\beta_i(x)) \;.
\end{equation}
\end{lemma}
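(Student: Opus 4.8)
The plan is to prove the identity $h_{\beta_i}(x) = \psi_{F_{B_i}}(\beta_i(x))$ by a direct change of variables, translating the definition of the virtual value function of the bid distribution $F_{B_i}$ into quantities involving $\beta_i$, $F_i$ and $f_i$. Recall that by definition $\psi_{F_{B_i}}(b) = b - \frac{1 - F_{B_i}(b)}{f_{B_i}(b)}$, so it suffices to evaluate $1 - F_{B_i}(\beta_i(x))$ and $f_{B_i}(\beta_i(x))$ in terms of the value distribution.

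First I would use the fact that $\beta_i$ is increasing: the push-forward relation gives $F_{B_i}(\beta_i(x)) = \mathbb{P}(\beta_i(X_i) \le \beta_i(x)) = \mathbb{P}(X_i \le x) = F_i(x)$, hence $1 - F_{B_i}(\beta_i(x)) = 1 - F_i(x)$. Next, differentiating the identity $F_{B_i}(\beta_i(x)) = F_i(x)$ with respect to $x$ and using the chain rule (legitimate since $\beta_i$ is differentiable and $F_{B_i}$ is differentiable at $\beta_i(x)$, which one gets from $\beta_i$ increasing differentiable and $F_i$ having density $f_i$) yields $f_{B_i}(\beta_i(x)) \, \beta_i'(x) = f_i(x)$, i.e. $f_{B_i}(\beta_i(x)) = f_i(x) / \beta_i'(x)$. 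Substituting both into the definition of $\psi_{F_{B_i}}$ gives
\[
\psi_{F_{B_i}}(\beta_i(x)) = \beta_i(x) - \frac{1 - F_i(x)}{f_i(x)/\beta_i'(x)} = \beta_i(x) - \beta_i'(x)\frac{1 - F_i(x)}{f_i(x)},
\]
which is exactly $h_{\beta_i}(x)$ by definition, completing the argument.

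The only genuinely delicate point is the justification of the differentiation step at points where $\beta_i'(x) = 0$ (so that $f_{B_i}(\beta_i(x))$ would be interpreted as $+\infty$, or $x$ is mapped to a flat region / atom of the bid distribution); there one should either restrict to points where $\beta_i' > 0$, or interpret the quotient $f_i(x)/\beta_i'(x)$ and the resulting formula in the convention already adopted in the preceding Myerson lemma (reading $0/0$ as $1$, and more generally working with the Radon–Nikodym derivative of $b(F_{B_i}(b) - 1)$). Since the statement only claims the ODE $h_{\beta_i}(x) = \psi_{F_{B_i}}(\beta_i(x))$ under the stated regularity, the clean proof is the change-of-variables computation above, with a remark that when $\beta_i' = 0$ the identity should be read in the same limiting/Radon–Nikodym sense used throughout. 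I do not expect any substantive obstacle beyond this bookkeeping.
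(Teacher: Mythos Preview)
Your proof is correct and follows essentially the same change-of-variables argument as the paper: compute $F_{B_i}(\beta_i(x))=F_i(x)$ from monotonicity, differentiate to get $f_{B_i}(\beta_i(x))=f_i(x)/\beta_i'(x)$, and substitute into the definition of $\psi_{F_{B_i}}$. The paper's version is slightly terser and does not pause on the $\beta_i'(x)=0$ caveat, but the content is identical.
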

\begin{proof}
$\psi_{F_{B_i}}(b) = b - \frac{1-F_{B_i}(b)}{f_{B_i}(b)}$ with $F_{B_i}(b) = F_{i}(\beta_i^{-1}(b))$ and $f_{B_i}(b) = f_{i}(\beta_i^{-1}(b)/\beta_i'(\beta_i^{-1}(b)$.
\\
Then, 
$h_{\beta_i}(x) = \psi_{B_i}(\beta_i(x)) = \beta_i(x) - \beta_i'(x)\frac{1-F_{i}(x)}{f_{i}(x)}\;.$
\end{proof}

The second lemma shows that for any function $g$ we can find a function $\beta$ such that $h_\beta = g$.

\begin{lemma}\label{lemma:keyODEs}
Let $X$ be a random variable with cdf $F$ and pdf $f$. Assume that $f>0$ on the support of $X$. Let $x_0$ in the support of $X$, $C\in\mathbb{R}$ and $g:\mathbb{R} \rightarrow \mathbb{R}$. lf we note 
\begin{equation}\label{eq:gAsConditionalExpectation}
\beta_g(x)=\frac{C(1-F(x_0))-\int_{x_0}^{x} g(u) f(u) du}{1-F(x)}\;.
\end{equation}


Then, if $B=\beta_g(X)$, 
$$
h_{\beta_g}(x) = g(x) \text{ and } \beta_g(x_0)=C\;.
$$ 
If $x_0\leq t$ and $g$ is non-decreasing on $[x_0,t]$,  $\beta_g'(x)\geq (C-g(x))(1-F(x_0))f(x)/(1-F(x))$ for $x\in[x_0,t]$. Hence $\beta_g$ is increasing on $[x_0,t]$ if $g$ is non-decreasing and $g<C$.
\end{lemma}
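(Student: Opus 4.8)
The plan is to verify the two claimed identities by a direct differentiation argument, and then to get the derivative lower bound from a one-line estimate on the integral defining $\beta_g$, using only that $g$ is non-decreasing on $[x_0,t]$. I would first observe that $\beta_g(x_0)=C$ is immediate: setting $x=x_0$ in \eqref{eq:gAsConditionalExpectation} kills the integral and leaves $C(1-F(x_0))/(1-F(x_0))=C$ (here $x_0$ in the support and $f>0$ guarantee $1-F(x_0)>0$, so the expression is well defined).

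For the identity $h_{\beta_g}=g$, I would clear the denominator and write $\beta_g(x)\,(1-F(x))=C(1-F(x_0))-\int_{x_0}^{x}g(u)f(u)\,du$, which is absolutely continuous in $x$ since $F$ is (and the right-hand side is differentiable a.e. by the fundamental theorem of calculus). Differentiating both sides gives $\beta_g'(x)(1-F(x))-\beta_g(x)f(x)=-g(x)f(x)$, i.e.
\[
\beta_g'(x)\,\frac{1-F(x)}{f(x)}=\beta_g(x)-g(x)\;.
\]
Plugging this into the definition $h_{\beta_g}(x)=\beta_g(x)-\beta_g'(x)\frac{1-F(x)}{f(x)}$ from Lemma~\ref{definition_psi} yields $h_{\beta_g}(x)=g(x)$ at once.

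For the monotonicity statement, the key step is the estimate: for $x_0\le x\le t$, since $g$ is non-decreasing on $[x_0,t]$ and $f\ge 0$, $\int_{x_0}^{x}g(u)f(u)\,du\le g(x)\int_{x_0}^{x}f(u)\,du=g(x)(F(x)-F(x_0))$. Substituting this bound into \eqref{eq:gAsConditionalExpectation} and subtracting $g(x)$, the numerator of $\beta_g(x)-g(x)$ is at least $C(1-F(x_0))-g(x)(F(x)-F(x_0))-g(x)(1-F(x))=(C-g(x))(1-F(x_0))$, so $\beta_g(x)-g(x)\ge (C-g(x))(1-F(x_0))/(1-F(x))$. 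Multiplying by $f(x)/(1-F(x))$ and using the ODE from the previous paragraph gives the announced lower bound for $\beta_g'(x)$ (in fact with $(1-F(x))^{2}$ in the denominator, which dominates the stated quantity whenever $g(x)\le C$ since $1-F(x)\le 1$). Finally, if moreover $g<C$ on $[x_0,t]$, this lower bound is strictly positive there, so $\beta_g'>0$ on $[x_0,t]$ and $\beta_g$ is increasing.

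I do not expect a real obstacle: this is a routine first-order-ODE computation. The only points deserving care are the regularity justifying term-by-term differentiation (absolute continuity of $F$ and $f>0$ on the support, both assumed), the sign bookkeeping in the last inequality, and noting that the constant in the displayed bound is as written only up to the harmless $(1-F(x))$-versus-$(1-F(x))^2$ discrepancy, which does not affect the conclusion that $\beta_g$ is increasing.
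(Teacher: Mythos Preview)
Your proof is correct and follows essentially the same approach as the paper's, which simply says ``differentiate $\beta_g$ and plug into the expression for $h_{\beta_g}$; the derivative bound is simple algebra.'' Your observation that the computation actually yields $(1-F(x))^2$ in the denominator rather than $(1-F(x))$ is accurate and appears to be a typo in the lemma statement; as you note, this does not affect the monotonicity conclusion when $g<C$, which is the only use made of the bound.
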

\begin{proof}
The result follows by simply differentiating the expression for $\beta_g$, and plugging-in the expression for $h_{\beta_g}$ obtained in Lemma \ref{definition_psi}. The result on the derivative is simple algebra.
\end{proof}

\end{proof}

\section{Optimal Classes of Strategies for two-step Process}

\paragraph{Disclaimer:} Our lines of proof here are very similar to the one in \cite{tang2016manipulate}, adapted to the two-stage process. The main difference is that in the two-stage process, the optimal reserve value is not always 0, which gives slightly different forms of optimal strategies for the regular case and make us introduce a quasi-regular case. Also, note that some of the proofs do not require assumptions as strong as we do in the paper on the continuity of F.

\paragraph{General two-stage process with commitment:} A general two-stage process with commitment is a tuple of the form $P= (G, H,r, F)$ that is defined as follow: 
\begin{description}
\item[1 -- exploration] The seller runs a lazy 2$^\text{nd}$-price auction without reserve price. The current bidder faces a competition $G$. In this step, the potential randomized or deterministic reserve price is denoted by the distribution $H$.
\item[2 -- exploitation] The seller runs a lazy 2$^\text{nd}$-price auction with reserve price $r$. The current bidder faces the same competition $G$ as in the first step.
\end{description}
The bidder has a value distribution $F$. In this process, the bidder commits to a strategy $\beta$ corresponding to a bid distribution $F_B$ that stays the same in both steps. We additionally assume that $F_B$ is also not too fat-tailed -- meaning $1-F_B(b) = o(b^{-1})$ to ensure the seller's revenue supremum is reached for at least some finite values of reserve price.

The two-step process described in the main body of the paper corresponds to a such a tuple $P_\beta = (G_1, G_2, r^*_B, F)$ where $r$ is chosen to maximize $R_B(r)$, the revenue of the seller defined as $R_B(r) = r \lP_{B\sim F_B}(r \leq B)$. We also assume the seller to be welfare-benevolent, meaning that $r^*_\beta = \min \argmax_r R_B(r)$.

\paragraph{Technical note on $R_B$.} If $F_B$ is not continuous, we need to make a difference between $F_B(b) = \lP_{B\sim F_B}(B \leq b)$ and $\tilde{F}_B(b) = \lP_{B\sim F_B}(B < b)$. Technically, $F_B$ is c\`adl\`ag as a repartition function, while $\tilde{F}_B$ is l\`adc\`ag. More practically, if the distribution contains a point mass, the value at which there is the point mass won't be on the same side of the discontinuity in $F_B$ and in $\tilde{F}_B$. In terms of definition of the seller's revenue $R_B$, as beating the reserve price $r$ is usually implemented as $\indicator{B \geq r}$, then when doing the integration, we can indeed find the revenue is defined as $R_B(b) = b (1-\tilde{F}_B(b))$.

First, we state and (re-)prove a well-known property that is central for our proofs: that when the reserve price does not depends on one bidder distribution, the closer her strategy to truthful, the better in utility. For the sake of simplicity, we consider in this statement the reserve price is part of the fixed competition.
\begin{lemma}
  Given a two-stage process $P = (G, H, r, F)$, for any $\beta, \rho$ and any $\alpha \in [0,1]$,
  $$\forall x \in [0, +\infty), \rho(x)(x - \beta(x)) \leq 0 ~~~\Rightarrow~~~ U_\alpha(\beta+\rho) \leq U_\alpha(\beta)$$
  \label{lem:variation_from_truthful}
\end{lemma}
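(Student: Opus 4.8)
The plan is to reduce the claim to a pointwise comparison in the bidder's value $x$. Because in the two-stage process $(G,H,r,F)$ the reserve price faced by the strategic bidder never depends on her own bid distribution, her interim utility at value $x$ when she submits bid $b$ can, in each phase, be written as $\mathbb{E}[(x-W)\indicator{b\geq W}]$, where $W$ is the \emph{clearing price} of that phase --- the maximum of the competing bids (law $G$) and of the reserve price (law $H$ in exploration, the point mass at $r$ in exploitation). Indeed, in a lazy second price auction the bidder wins iff her bid clears both her reserve and the top competing bid, i.e.\ iff $b\geq W:=\max(\text{reserve},Y)$ with $Y\sim G$, and in that event she pays exactly $W$. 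The key point is that the law of $W$ depends only on $(G,H,r)$, not on $\beta$. Writing $\phi_x(b)=\alpha\,\mathbb{E}_{W_1}[(x-W_1)\indicator{b\geq W_1}]+(1-\alpha)\,\mathbb{E}_{W_2}[(x-W_2)\indicator{b\geq W_2}]$ for the $\alpha$-weighted interim utility, one has $U_\alpha(\beta)=\mathbb{E}_{x\sim F}[\phi_x(\beta(x))]$, so it suffices to show $\phi_x(\beta(x)+\rho(x))\leq\phi_x(\beta(x))$ for every $x$ and then integrate against $F$.

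The second step is to establish that each $\phi_x$ is unimodal with peak at $b=x$. For $b\leq b'\leq x$ one has $\mathbb{E}_W[(x-W)\indicator{b'\geq W}]-\mathbb{E}_W[(x-W)\indicator{b\geq W}]=\mathbb{E}_W[(x-W)\indicator{b<W\leq b'}]\geq 0$, since $W\leq b'\leq x$ on that event; symmetrically, for $x\leq b'\leq b$ the difference equals $\mathbb{E}_W[(x-W)\indicator{b'<W\leq b}]\leq 0$ because $x-W\leq 0$ there. As $\alpha$ and $1-\alpha$ are nonnegative, $\phi_x$ inherits this shape: nondecreasing on $(-\infty,x]$ and nonincreasing on $[x,\infty)$.

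The third step is a short case analysis driven by the hypothesis $\rho(x)(x-\beta(x))\leq 0$. If $\beta(x)=x$, then $\phi_x(\beta(x)+\rho(x))=\phi_x(x+\rho(x))\leq\phi_x(x)=\phi_x(\beta(x))$ by the peak at $x$. If $\beta(x)<x$, the hypothesis forces $\rho(x)\leq 0$, so $\beta(x)+\rho(x)\leq\beta(x)\leq x$ and monotonicity on $(-\infty,x]$ gives $\phi_x(\beta(x)+\rho(x))\leq\phi_x(\beta(x))$. If $\beta(x)>x$, the hypothesis forces $\rho(x)\geq 0$, so $\beta(x)+\rho(x)\geq\beta(x)\geq x$ and monotonicity on $[x,\infty)$ gives the same inequality. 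Integrating over $x\sim F$ then yields $U_\alpha(\beta+\rho)\leq U_\alpha(\beta)$.

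The only delicate point --- and the place where I expect the actual bookkeeping to live --- is making the clearing-price representation and the tie-breaking rigorous: one must check that in the lazy second price auction the payment on the winning event is indeed $W$ and that point masses in $G$ or $H$ do not break the ``win $\iff b\geq W$'' description. This is handled by the c\`adl\`ag/l\`adc\`ag convention of the technical note on $R_B$: whichever convention one adopts for ties, the increments $\mathbb{E}_W[(x-W)\indicator{b<W\leq b'}]$ keep the sign used above, so the monotonicity argument is unaffected; and on a tie the interim utility is proportional to $x-b$, which vanishes precisely at $b=x$. Integrability of $\phi_x$, and hence well-posedness of $U_\alpha$, follows from the standing finite-mean assumptions on $F$ and $F_B$.
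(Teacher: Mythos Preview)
Your proof is correct and follows essentially the same route as the paper. The paper writes the difference $U_\alpha(\beta+\rho)-U_\alpha(\beta)$ directly as $\mathbb{E}_{X,Y}\big[(X-Y)\big(\indicator{\beta(X)+\rho(X)\geq Y>\beta(X)}-\indicator{\beta(X)\geq Y>\beta(X)+\rho(X)}\big)\big]$ with $Y$ drawn from the mixed competition-plus-reserve law, and then does the same pointwise sign analysis you carry out; your unimodality framing of $\phi_x$ and your explicit ``pointwise then integrate'' structure make the argument slightly cleaner, but the content is identical.
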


\begin{proof}
Denoting $G_r$ is the cdf of $Y_r = \max(Y, r)$ and $G_\alpha(x) = \alpha G(x)H(x) + (1-\alpha)G_r(x)$, we have
\begin{align*}
    U_\alpha(\beta+\rho) - U_\alpha(\beta) 
    & = 
    \alpha\lE_{\substack{X\sim F\\Y\sim GH}}\left[(X-Y)\left(\indicator{\beta(X)+\rho(X) \geq Y} - \indicator{\beta(X) \geq Y}\right)\right]\\
    &\,\,\,\,\,+
    (1-\alpha) \lE_{\substack{X\sim F\\Y_r\sim G_r}}\left[(X-Y_r)\left(\indicator{\beta(X)+\rho(X) \geq Y_r} - \indicator{\beta(X) \geq Y_r}\right)\right]\\
    & = 
    \lE_{\substack{X\sim F\\Y\sim G_\alpha}}\left[(X-Y)\left(\indicator{\beta(X)+\rho(X) \geq Y > \beta(X)} - \indicator{\beta(X) \geq Y > \beta(X)+\rho(X)}\right)\right]\\
\end{align*}
\begin{itemize}
    \item if $\forall x, \beta(x)+\rho(x) \geq \beta(x) \geq x$ then $\forall x, (X-Y)\indicator{\beta(X)+\rho(X) \geq Y > \beta(X)} \leq 0$ (and the other indicator is 0)
    \item if $\forall x, \beta(x)+\rho(x) \leq \beta(x) \leq x$ then $\forall x, (X-Y)\indicator{\beta(X) \geq Y > \beta(X)+\rho(X)} \geq 0$ (and the other indicator is 0)
\end{itemize}
Both ways, $\forall x, \rho(x)(x - \beta(x)) \leq 0$, then $U_\alpha(\beta+\rho) - U_\alpha(\beta) \leq 0$.
\end{proof}

\subsection{Seller Revenue is Lower than with Truthful Bidding}
If the reader is intuitively convinced that there exists a best response strategy $\beta$ with $B=\beta(X)$ such that the monopoly revenue is not higher than the one of the truthful strategy -- i.e. $\sup_{r'} R_B(r') \leq \sup_{r'} R_X(r')$ -- you can safely skip this section. Only for this section, we are going to explicit the dependency of the utility on the two-stage process $P$ by using the notation $U_\alpha^P(\beta)$.

\begin{lemma}
Given a two-step process $P_\beta = (G, H, r^*_\beta, F)$ with $r^*_\beta = \inf \argmax_r R_B(r)$, 
\begin{align*}
\max_r R_B(r) > \max_r R_X(r) \,\,\Rightarrow\,\, \forall \alpha\in[0,1], \,U^{P_\beta}_\alpha(\beta) < \max_{\omega} U^{P_\omega}_\alpha(\omega)
\end{align*}
\label{lem:seller_revenue_decreases}
\end{lemma}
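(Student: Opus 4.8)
The plan is to exhibit, for any $\beta$ with $\max_r R_B(r) > \max_r R_X(r)$, a strategy $\omega$ with $U^{P_\omega}_\alpha(\omega) > U^{P_\beta}_\alpha(\beta)$. Write $r_0 = r^*_\beta = \inf\argmax_r R_B(r)$; since $M_B := R_B(r_0) = \max_r R_B(r) > \max_r R_X(r) =: M_X \ge 0$, we have $r_0 > 0$. Comparing tails at $r_0$ gives $r_0\,\mathbb{P}(\beta(X) \ge r_0) = M_B > M_X \ge R_X(r_0) = r_0\,\mathbb{P}(X \ge r_0)$, so $\beta$ \emph{strictly overbids}: $\mathbb{P}(\beta(X) \ge r_0 > X) > 0$. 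The idea is to pull $\beta$ down toward the truthful bid, but only above the level $r_0$ so as not to move the seller's reserve, by taking
\[
\omega(x) = \min\bigl(\beta(x),\, \max(x, r_0)\bigr).
\]

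I would first check three elementary facts: (i) $\omega \le \beta$ pointwise, with $\omega = \beta$ on $\{\beta(x) \le x\}$ and $\omega(x) \ge x$ on $\{\beta(x) > x\}$ (both arguments of the $\min$ exceed $x$ there); (ii) $R_{B_\omega}(r) = R_B(r)$ for $r \le r_0$, because $\max(x,r_0) \ge r_0 \ge r$ forces $\{\omega(X) \ge r\} = \{\beta(X) \ge r\}$; (iii) $R_{B_\omega}(r) \le R_X(r) \le M_X$ for $r > r_0$, because $\omega(x) \ge r > r_0$ forces $\max(x,r_0) = x \ge r$. From (ii)--(iii), $\sup_r R_{B_\omega}(r) = M_B$, attained exactly at $r_0$ and not before, so the welfare-benevolent seller's monopoly price against $\omega$ is again $r_0$; hence $U^{P_\omega}_\alpha(\omega) = U^{(G,H,r_0,F)}_\alpha(\omega)$ and $U^{P_\beta}_\alpha(\beta) = U^{(G,H,r_0,F)}_\alpha(\beta)$, i.e.\ we are comparing $\beta$ and $\omega$ inside one \emph{fixed} process. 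By (i), $\omega$ is obtained from $\beta$ by moving overbids down without ever going below the true value, so Lemma~\ref{lem:variation_from_truthful} applied to the fixed process $(G,H,r_0,F)$ with base strategy $\omega$ and $\rho = \beta - \omega \ge 0$ (which satisfies $\rho(x)(x-\omega(x)) \le 0$ everywhere) gives $U^{(G,H,r_0,F)}_\alpha(\beta) \le U^{(G,H,r_0,F)}_\alpha(\omega)$, hence $U^{P_\beta}_\alpha(\beta) \le U^{P_\omega}_\alpha(\omega)$.

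For the strict inequality I would reopen the computation in the proof of Lemma~\ref{lem:variation_from_truthful}: the gap equals $\mathbb{E}_{X\sim F,\, Y \sim G_\alpha}\bigl[(Y - X)\,\indicator{\omega(X) < Y \le \beta(X)}\bigr]$, where $G_\alpha$ is the $\alpha$-mixture of the two phases' reserve-inclusive competitions; the integrand is nonnegative, and the expectation is positive because $\mathbb{P}(\omega(X) < \beta(X)) > 0$ (by the overbidding inequality, e.g.\ on $\{X < r_0 < \beta(X)\}$ where $\omega(X) = r_0$) and the competition places positive mass just above $r_0$ in at least one of the two phases. This last point is the only extra ingredient — a mild non-degeneracy of the competition, automatic for the $G,H$ used in the paper (e.g.\ $G$ with a density on its support) — and it, together with making the revenue bookkeeping in (ii)--(iii) rigorous when $F_B$ has atoms (the $F_B$ versus $\tilde F_B$ distinction in the definition of $R_B$, and a harmless smoothing of $\omega$ at $r_0$ if the admissible class requires $F_{B_\omega}$ continuous), is where the main work lies; the remainder is the mechanical verification of (i)--(iii) and the direct invocation of Lemma~\ref{lem:variation_from_truthful}.
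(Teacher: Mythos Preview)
Your argument is correct and proceeds along a genuinely different line from the paper's. The paper builds $\omega$ by capping the \emph{revenue curve}: it sets $\omega(x)=\beta(x)$ whenever $\beta(x)(1-\tilde F(x))\le R^*:=\max_r R_X(r)$ and $\omega(x)=R^*/(1-\tilde F(x))$ otherwise. This forces $\max_r R_{B_\omega}(r)$ down to $R^*$, so the seller's reserve under $\omega$ drops to some $r^*_\omega\le r^*_\beta$; the paper then chains two inequalities, first $U^{P_\beta}_\alpha(\beta)\le U^{P_\omega}_\alpha(\beta)$ (lower reserve) and then $U^{P_\omega}_\alpha(\beta)\le U^{P_\omega}_\alpha(\omega)$ via Lemma~\ref{lem:variation_from_truthful}. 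You instead cap $\beta$ at $\max(x,r_0)$, which leaves $R_{B_\omega}$ unchanged on $[0,r_0]$ and pushes it below $M_X<M_B$ beyond $r_0$, so the seller's reserve stays exactly at $r_0$ and a single application of Lemma~\ref{lem:variation_from_truthful} in the \emph{fixed} process $(G,H,r_0,F)$ suffices. Your route sidesteps the step ``lower reserve $\Rightarrow$ higher utility for $\beta$'', which is delicate precisely because $\beta$ must overbid; on the other hand, the paper's construction lands directly on a strategy with $\max_r R_{B_\omega}(r)\le \max_r R_X(r)$, which is exactly how the lemma is consumed downstream (to restrict attention to such strategies). You also explicitly track the strict inequality, which the paper leaves implicit; your remark that it hinges on the competition placing mass just above $r_0$ is the correct caveat.
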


\begin{proof}
Assuming $\beta$ is such that $\max_r R_B(r) > R^* = \max_r R_X(r)$, we can build $\omega$ such that for any $x$ such that $\beta(x)(1-\tilde{F}(x)) \leq R^*, \,\omega(x) = \beta(x)$ and $\omega(x) = \frac{R^*}{1-\tilde{F}(x)}$ otherwise.

We can choose $\rho(x) = \beta(x) - \omega(x)$ for any $x$. For any $x$, either we have $x \geq \beta(x)$ and $\rho(x) \geq 0$ or $\rho(x) = 0$. Hence, for any $x$, $(x - \beta(x)) \rho(x) \geq 0$. By Lemma \ref{lem:variation_from_truthful}, we have that in the process $P_\omega$ (the one with reserve price $r_\omega^*$ in the second step), $U^{P_\omega}_\alpha(\beta) \leq U^{P_\omega}_\alpha(\omega)$.

Then, because $r^*_\omega \leq r^*_\beta$, we have $U^{P_\beta}_\alpha(\beta) \leq U^{P_\omega}_\alpha(\beta)$.
\end{proof}

\subsection{Deriving the Best Quasi-Regular Response}
\label{ssec:best_quasi_regular_response}
Because the revenue curve generated by the best response \cite{tang2016manipulate} has some limiting properties in practice, we propose to further constrain $\beta$ by asking $R_B$ to be quasi-concave when $R_X$ is so.

\begingroup
\def\thetheorem{\ref{dfn:thresholded_beta}}
\begin{theorem}
Given a 2-step process $(G, H, r_{\beta}^*, F)$ with $r^*_\beta = \inf \argmax_r R_B(r)$ and such that $F$ is quasi-regular\footnote{We say a distribution is quasi-regular if the associated revenue curve $R_{X_i}$ is quasi-concave}, $\forall \alpha\in[0,1], \,\exists 0 \leq x_0 \leq x_1$ such that the best quasi-regular response (maximizing $U_\alpha(\beta)$ with $F_B$ regular) is \begin{align}\label{eq:defDoubleThresh}
\dtbeta(x)
=
\indicator{x \leq x_0}x
+
\indicator{x_0 < x \leq x_1}\frac{R}{1-F(x)}
+
\indicator{x > x_1}x
~~~~~\text{where}~~~~~ R = x_1 (1 - F(x_1))
\end{align}
Moreover, we have $x_1 = \sup \{x: x(1-F(x)) \geq R\}$ and $x_0 \leq \overline{x}_0 = \inf \{x: x(1-F(x)) \geq R\}$.
\end{theorem}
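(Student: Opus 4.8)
### Proof plan for Theorem \ref{dfn:thresholded_beta}

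\textbf{Overall strategy.} The plan is to characterize all strategies $\beta$ (with $F_B$ regular) whose induced revenue curve $R_B$ is quasi-concave, pin down the two degrees of freedom such a curve can have (its maximum value $R$ and the location $x_0$ where it is first attained), and then inside each such class invoke Lemma \ref{lem:variation_from_truthful} to show that the pointwise-closest-to-truthful admissible strategy is $\dtbeta$. This mirrors the structure of the unconstrained best-response argument sketched in Section \ref{ssec:best_quasi_regular_response}, but with the extra quasi-concavity constraint replacing the full shape freedom.

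\textbf{Step 1: reduce the revenue level.} First I would invoke Lemma \ref{lem:seller_revenue_decreases} to argue that, without loss of generality, the best quasi-regular response has $\max_r R_B(r) = R \leq R^* = \max_r R_X(r)$; any $\beta$ violating this is strictly dominated by a truncated variant, and the truncation preserves quasi-concavity of $R_B$ since it merely caps $R_B$ at level $R^*$ on an interval. Also, the fat-tail assumption $1-F_B(b) = o(b^{-1})$ guarantees the monopoly price is finite, so $R$ is attained.

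\textbf{Step 2: shape of a quasi-concave $R_B$ and the threshold structure.} Since $R_B$ is quasi-concave and the seller is welfare-benevolent, $r^*_\beta = \inf\argmax_r R_B(r)$ is well-defined; as in the unconstrained proof, upper semicontinuity of $\tilde F_B$ forces $r^*_\beta = \beta(x_0)$ for some $x_0$, with $\beta(x_0)(1-F(x_0)) = R$. Quasi-concavity then says $R_B$ is non-decreasing on $[0, r^*_\beta]$ and non-increasing afterwards, i.e. $\beta(x)(1-F(x)) \le R$ for $x > x_0$ and $< R$ for $x < x_0$ (strictness on the left from minimality of $x_0$). Now I would apply Lemma \ref{lem:variation_from_truthful} piecewise: on $\{x \le x_0\}$ and $\{x > x_1\}$ truthful bidding $\beta(x) = x$ is feasible and is the closest-to-truthful choice (for $x > x_1$, $x(1-F(x)) < R$ so truthfulness respects the constraint; for $x \le x_0$, truthfulness is the identity and one checks $x(1-F(x)) \le R$ there by definition of $\overline{x}_0$ and $x_0 \le \overline x_0$); on $(x_0, x_1]$ the binding constraint forces $\beta(x)(1-F(x)) \le R$, and since on that interval $x(1-F(x)) \ge R$, the pointwise-closest-to-truthful admissible value is exactly $\beta(x) = R/(1-F(x))$, hitting the constraint with equality. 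Thus $\dtbeta$ dominates every other admissible strategy with the same $(R, x_0)$.

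\textbf{Step 3: identify $x_1$ and the constraint between $x_0$ and $\overline x_0$.} Continuity of $\dtbeta$ at $x_1$ (needed so that $F_B$ has a density, hence is regular) forces $R/(1-F(x_1)) = x_1$, i.e. $R = x_1(1-F(x_1))$, and quasi-concavity of $R_X$ gives that $\{x : x(1-F(x)) \ge R\}$ is an interval $[\overline x_0, x_1]$; so $x_1 = \sup\{x : x(1-F(x)) \ge R\}$ as claimed. Finally, for $\dtbeta$ to be increasing on $(x_0, x_1]$ — required for $F_B$ to be a genuine distribution with density — one needs $R/(1-F(x))$ increasing there, which it always is, but the junction at $x_0$ between the identity branch and the $R/(1-F)$ branch must not decrease, forcing $x_0 \le R/(1-F(x_0))$, equivalently $x_0(1-F(x_0)) \le R$, equivalently $x_0 \le \overline x_0 = \inf\{x : x(1-F(x)) \ge R\}$. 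Then $\alpha$ and the remaining free parameter(s) $x_0 \in [0,\overline x_0]$ (equivalently $x_1$, via $R$) are chosen to maximize $U_\alpha$; existence of a maximizer follows from compactness of the reduced parameter set and continuity of $U_\alpha$.

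\textbf{Main obstacle.} The delicate point is Step 2: justifying that quasi-concavity of $R_B$ together with the regularity (density) requirement on $F_B$ really does confine $\beta$ to the three-branch template, and in particular handling the boundary behaviour at $x_0$ and $x_1$ (continuity, one-sided limits of $\tilde F_B$ versus $F_B$, and the fact that the monopoly price is attained at a bid value actually in the range of $\beta$). The rest is a piecewise application of Lemma \ref{lem:variation_from_truthful}, which is routine once the admissible template is established.
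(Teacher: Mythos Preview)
Your high-level strategy is sound and close to the paper's, but Step~2 contains a real gap. You propose to read off $(R, x_0)$ from the given quasi-regular $\beta$ --- with $R = \max_r R_B(r)$ and $x_0$ the reserve value $\beta^{-1}(r^*_\beta)$ --- and then build $\dtbeta$ with those same parameters. For this $\dtbeta$ to be increasing at the junction $x_0$ you need $x_0 \le \overline{x}_0$, and you only argue this in Step~3 \emph{for $\dtbeta$ itself} (via the junction condition), not for the original $\beta$; the reasoning is circular. For a general quasi-regular $\beta$ the reserve value can lie strictly inside $(\overline{x}_0, x_1)$. Concretely, take $F$ uniform on $[0,1]$ and $\beta(x) = x/2$: then $R_B(b) = b(1-2b)$ is concave with $R = 1/8$ and $x_0 = 1/2$, while $\overline{x}_0 = (1-1/\sqrt{2})/2 \approx 0.146$, so $x_0 > \overline{x}_0$ and your three-branch template with these parameters is not even monotone. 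The feasibility check ``$x(1-F(x)) \le R$ for $x \le x_0$'' that you invoke in Step~2 fails on $(\overline{x}_0, x_0]$ in this example.

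The paper sidesteps this by \emph{not} preserving $(R, x_0)$ from $\beta$. It instead sets $R = \sup_x (x \wedge \beta(x))(1-\tilde F(x))$ and defines the dominating $\omega$ by $\omega(x) = R/(1-\tilde F(x))$ exactly on the set where $x(1-\tilde F(x))$ and $\beta(x)(1-\tilde F(x))$ lie on opposite sides of $R$, and $\omega(x) = x$ elsewhere. One then checks $(\beta - \omega)(x)\,(x-\omega(x)) \le 0$ pointwise (so Lemma~\ref{lem:variation_from_truthful} applies) and that $r^*_\omega \le r^*_\beta$. That the middle branch occupies a single interval $[x_0, x_1]$ with $x_0 \le \overline{x}_0$ is obtained from a separate convexity lemma (Lemma~\ref{lem:convexity_of_union_of_shit}): since both $x \mapsto x(1-\tilde F(x))$ and $x \mapsto \beta(x)(1-\tilde F(x))$ are quasi-concave, the union of their $R$-superlevel sets is convex. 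In the uniform example above this produces $\omega = \tilde\beta_{\overline{x}_0, x_1}$ with $\overline{x}_0 \approx 0.146$, not your $x_0 = 1/2$. The missing ingredients in your plan are precisely this choice of $R$ and the convexity argument; once you have them, the rest of your outline (including Step~3) goes through essentially unchanged.
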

\addtocounter{theorem}{-1}
\endgroup
\begin{proof}
Thanks to Lemma \ref{lem:seller_revenue_decreases} and to the Assumption that we have $1-F_B(b) = o(b^{-1})$, we can restrict ourselves w.l.o.g. to $\beta$ with a finite reserve price $r^*_\beta$. As $R_B$ is l\`adc\`ag and upper semicontinuous (because $\tilde{F}_B$ is so), then $\exists x_r, r^*_\beta = \beta(x_r)$.

We denote $R = \sup_x (x \land \beta(x))(1-\tilde{F}(x))$ an define 
\begin{align*}
    \omega(x) = 
    \begin{cases}
    \frac{R}{1-\tilde{F}(x)} \text{ if } \beta(x)(1-\tilde{F}(x)) \leq R \leq x(1-\tilde{F}(x))\\
    \frac{R}{1-\tilde{F}(x)} \text{ if } \beta(x)(1-\tilde{F}(x)) \geq R \geq x(1-\tilde{F}(x))\\
    x \text{ otherwise}
    \end{cases}
\end{align*}
By definition of $\omega$, denoting $\rho(x) = \beta(x) - \omega(x)$, we have that for any $x$, $\rho(x)(x-\omega(x)) \leq 0$. Thus, by Lemma \ref{lem:variation_from_truthful}, $U^{(P_\beta)}_\alpha(\omega) \leq U^{(P_\beta)}_\alpha(\beta)$. 

Moreover, $\omega(x_r)(1-\tilde{F}(x_r)) = R$, thus $r^*_\omega \leq r^*_\beta$ and $U^{(P_\omega)}_\alpha(\omega) \leq U^{(P_\beta)}_\alpha(\omega)$. In the end, we have $U^{(P_\omega)}_\alpha(\omega) \leq U^{(P_\beta)}_\alpha(\beta)$.

We can use Lemma \ref{lem:convexity_of_union_of_shit} to show that $\{x: \omega(x) = \frac{R}{1-\tilde{F}(x)}\}$ is convex and the decrease rate of the right  tail to say it is bounded, hence $\exists x_1 \geq x_0 \geq 0, ~~\{x: \omega(x) = \frac{R}{1-\tilde{F}(x)}\} = [x_0, x_1]$.

Finally, we need to check the conditions on $x_0$ and $x_1$.
By definition of $R$ and $x_0$, we have $x_0 \leq \inf \{x: x(1-\tilde{F}(x)) \geq R\}$.
Then, if $x_1 > \sup \{x: x(1-\tilde{F}(x)) \geq R\}$, we can define $\tilde{x}_1 = \sup \{x: x(1-\tilde{F}(x)) \geq R\}$, then define $\tilde{\omega}$ equal to $\omega$ below $\tilde{x}_1$ and equal to the identity above. Then we can use again Lemma \ref{lem:variation_from_truthful} to show the utility is improved as  the reserve price didn't change.
\end{proof}

\begin{lemma}
Given two real-values functions $f$ and $g$ quasi-concave. we denote $R = \sup_x (f\land g)(x)$. We have
$$
\{x: f(x) \geq R\}\cup\{x: g(x) \geq R\} \text{ is convex.}
$$
\label{lem:convexity_of_union_of_shit}
\end{lemma}
\begin{proof}
First, both $\{x: f(x) \geq R\}$ and $\{x: g(x) \geq R\}$ are convex ($f,g$ are quasi-concave). By contradiction, we assume there exists $x < z < y$ such that $f(x) \geq R$, $g(y)\geq R$, $f(z) < R$ and $g(z) < R$. Because $g$ is increasing before $z$, $\sup_{x < z}(f\land g)(x) \leq g(z) < R$. With the same argument over $f$, we have $\sup_{x > z}(f\land g)(x) \leq f(z) < R$.
\end{proof}

\subsection{Welfare Sharing between Seller and Bidder}
\label{ssec:welfare_sharing_best_response}
\begingroup
\def\thetheorem{\ref{thm:double_thresholded_derivatives}}
\begin{theorem}
Given a two-step process $(G, H, r_{\beta}^*, F)$ with $r^*_\beta = \inf \argmax_r R_B(r)$, the best response is of the form of $\tilde{\beta}_{x_0^*, x_1^*}$ and the utility $U_\alpha(\dtbeta)$ has the following derivatives:
\begin{align}
    \frac{\partial U_\alpha(\dtbeta\!)}{\partial x_0} \!=\! \alpha(1-F(x_0))G(x_0)H(x_0)
    -
    x_0 f(x_0) G_\alpha\!\!\left(\frac{x_1(1-F(x_1))}{(1-F(x_0))}\right)~~~~~~~~~~~\\
    \frac{\partial U_\alpha(\dtbeta\!)}{\partial x_1} 
    \!= \!
    f(x_1)\psi(x_1)
    \!\left(\!G_\alpha(x_1) - \lE_X\!\!\left(\!\indicator{x_0 \leq X \leq x_1}\frac{X}{(1-F(X))}g_\alpha\!\!\left(\!\frac{x_1(1-F(x_1))}{(1-F(X))}\!\right)\!\!\right)\!\!\right)
\end{align}
where $G_\alpha(x) = \alpha G(x)H(x) + (1-\alpha)G(x)$.
\end{theorem}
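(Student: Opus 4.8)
The plan is to write the bidder's utility $U_\alpha(\dtbeta)$ as an explicit integral in the two parameters $x_0$ and $x_1$, and then differentiate under the integral sign. First I would recall that, by construction, $\dtbeta$ is truthful on $[0,x_0]$, equals $R/(1-F(x))$ with $R=x_1(1-F(x_1))$ on $(x_0,x_1]$, and is truthful again on $(x_1,\infty)$. Using the mixed competition $G_\alpha(x)=\alpha G(x)H(x)+(1-\alpha)G(x)$ (and its density $g_\alpha$), the utility is the expectation over $X\sim F$ of $(X-Y)$ on the event that the bid beats the competition $Y$, which after integrating out $Y$ becomes
\begin{align*}
U_\alpha(\dtbeta) = \lE_X\Big[\int_0^{\dtbeta(X)}(X-y)\,dG_\alpha(y)\Big]
= \lE_X\Big[X\,G_\alpha(\dtbeta(X)) - \int_0^{\dtbeta(X)} G_\alpha(y)\,dy\Big].
\end{align*}
I would split this into the three regions $\{X\le x_0\}$, $\{x_0<X\le x_1\}$, $\{X>x_1\}$, noting that on the first and third regions $\dtbeta(X)=X$ so those contributions have integrands depending on $x_0,x_1$ only through the limits of the $X$-integral, while on the middle region $\dtbeta(X)=R/(1-F(X))$ carries the dependence through $R=x_1(1-F(x_1))$.

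For the $x_0$-derivative, the key observation is that $x_0$ appears only as a boundary between the truthful region and the shading region. Differentiating the $X$-integral in $x_0$ produces a boundary term: the integrand of the truthful piece evaluated at $X=x_0$ (which is $x_0 G_\alpha(x_0)-\int_0^{x_0}G_\alpha$ weighted by $f(x_0)$) minus the integrand of the shading piece at $X=x_0$. At $X=x_0$ the shading value $R/(1-F(x_0)) = x_1(1-F(x_1))/(1-F(x_0))$, and the difference of the two "$x G_\alpha(\dtbeta(x))-\int G_\alpha$" expressions telescopes: since $G_\alpha$ restricted to reserve is $\alpha GH + (1-\alpha)G_r$, careful bookkeeping of which part of $G_\alpha$ sees the reserve collapses the boundary contribution to $\alpha(1-F(x_0))G(x_0)H(x_0) - x_0 f(x_0) G_\alpha\big(x_1(1-F(x_1))/(1-F(x_0))\big)$. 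The first term reflects that raising $x_0$ restores truthful behavior (hence phase-1 utility at rate $G(x_0)H(x_0)$ weighted by the surviving mass $1-F(x_0)$ times $\alpha$), and the second reflects the overbidding cost on values near $x_0$.

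For the $x_1$-derivative there are two sources of dependence: $x_1$ is the right boundary of the shading region, and it enters $R=x_1(1-F(x_1))$, hence the shading function everywhere on $(x_0,x_1]$. The boundary term at $X=x_1$ vanishes because $\dtbeta$ is continuous there (both the shading value $R/(1-F(x_1))=x_1$ and the truthful value equal $x_1$), so only the $R$-dependence survives. Differentiating $R$ gives $\partial R/\partial x_1 = (1-F(x_1)) - x_1 f(x_1) = f(x_1)\big((1-F(x_1))/f(x_1) - x_1\big) = -f(x_1)\psi(x_1)$. Wait — I would double-check the sign: $\psi(x)=x-(1-F(x))/f(x)$, so $\partial R/\partial x_1 = f(x_1)\psi(x_1)$ after pulling out $-1$... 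I would carry the sign through carefully, since the stated answer has $+f(x_1)\psi(x_1)$ as the prefactor. Then $\partial U_\alpha/\partial x_1 = (\partial R/\partial x_1)\cdot \lE_X[\indicator{x_0\le X\le x_1}\,\partial_R(\text{integrand})]$; since the integrand on the middle region is $X G_\alpha(R/(1-F(X))) - \int_0^{R/(1-F(X))}G_\alpha$, its $R$-derivative is $\frac{X}{1-F(X)} g_\alpha(R/(1-F(X))) - \frac{1}{1-F(X)} G_\alpha(R/(1-F(X)))$. The second term, integrated against $f(X)dX$ over $[x_0,x_1]$, should combine with the boundary contributions from the truthful regions — in fact $\lE_X[\indicator{x_0\le X\le x_1}\frac{1}{1-F(X)}G_\alpha(R/(1-F(X)))]$ telescopes against the region-boundary terms to leave exactly $G_\alpha(x_1)$ (using $R/(1-F(x_1))=x_1$). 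Collecting everything yields the displayed formula. The main obstacle, and where I would spend the most care, is this bookkeeping of the boundary terms between the three regions and keeping track of whether a given occurrence of $G_\alpha$ "sees" the reserve price or not (the asymmetry between $G$, $GH$, and $G_r=\max(Y,r)$), together with nailing down the sign in $\partial R/\partial x_1$; once those are handled the rest is routine differentiation under the integral, justified by the regularity and tail assumptions on $F$ and $F_B$.
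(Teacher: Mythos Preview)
Your overall plan—write $U_\alpha$ as an explicit integral and differentiate in $x_0,x_1$—is sound, but the execution has two real problems, and it diverges from the much shorter route the paper actually takes.

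First, your integration by parts is wrong. From $\int_0^{\beta(X)}(X-y)\,dG_\alpha(y)$ one gets $(X-\beta(X))G_\alpha(\beta(X))+\int_0^{\beta(X)}G_\alpha(y)\,dy$, not $X\,G_\alpha(\beta(X))-\int_0^{\beta(X)}G_\alpha(y)\,dy$. This error propagates: with the correct integrand, the $R$-derivative on the shading region is $\frac{X-\beta(X)}{1-F(X)}\,g_\alpha(\beta(X))$ (the $G_\alpha$ terms cancel), not the two-term expression you wrote, and the ``telescoping'' you invoke to manufacture $G_\alpha(x_1)$ does not occur; instead $G_\alpha(x_1)$ eventually appears from the identity $\int_{x_0}^{x_1}\frac{\beta(X)}{1-F(X)}g_\alpha(\beta(X))f(X)\,dX=G_\alpha(x_1)-G_\alpha(r)$ (substitute $u=\beta(X)$), combined with the $r$-dependence you did not track. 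Second, you repeatedly conflate $G_\alpha=\alpha GH+(1-\alpha)G$ with the phase-2 competition including the reserve. The reserve $r^*_\beta=R/(1-F(x_0))$ depends on \emph{both} $x_0$ and $x_1$, so in your formulation its derivative must be carried through the lower limit of every phase-2 integral; you flag this as ``careful bookkeeping'' but never do it, and without it neither displayed derivative comes out.

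The paper bypasses all of this. It uses the Myerson representation of utility as $\lE_X\big[(X-\psi_B(\beta(X)))\cdot P(\text{win}\,|\,X)\big]$, and the key observation is that on $(x_0,x_1]$ the pushed-forward virtual value is identically zero, so the integrand there is simply $X\,G_\alpha(\beta(X))$. This yields the three-term decomposition
\[
U_\alpha=\lE_X\!\big[\indicator{X\le x_0}(X-\psi(X))\,\alpha G(X)H(X)\big]
+\lE_X\!\big[\indicator{x_0<X\le x_1}X\,G_\alpha(\tfrac{R}{1-F(X)})\big]
+\lE_X\!\big[\indicator{X>x_1}(X-\psi(X))G_\alpha(X)\big],
\]
in which the reserve price never appears explicitly (only the partition into regions does), and the two displayed derivatives drop out by straightforward Leibniz differentiation. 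In particular the $G_\alpha(x_1)$ in $\partial U_\alpha/\partial x_1$ arises because the integrands of the second and third pieces \emph{do not} match at $x_1$ (they differ by $\psi(x_1)G_\alpha(x_1)$), contrary to your continuity argument; and the sign you worried about resolves itself because $\partial R/\partial x_1=-f(x_1)\psi(x_1)$ multiplies only the $g_\alpha$-term while the $+f(x_1)\psi(x_1)G_\alpha(x_1)$ comes from that boundary mismatch.
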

\addtocounter{theorem}{-1}
\endgroup
\begin{proof}
The push-forward of $F$ through $\dtbeta$ has the following virtual value:
$$
\psi_B(\beta(X)) = \indicator{X \leq x_0}\psi(X) + \indicator{x_0 < X \leq x_1}0 + \indicator{X > x_1}\psi(X)
$$
Then, we can use Myerson's Lemma \cite{Myerson81} to write the utility:
\begin{align}
    U_\alpha(\dtbeta) 
    = &
    ~\lE_X\left(\indicator{X<x_0}(X - \psi(X))\alpha G(x)H(x)\right)\\
    &+
    \lE_X\left(\indicator{x_0 < X \leq x_1}X G_\alpha\left(\frac{x_1(1-\tilde{F}(x_1))}{1-\tilde{F}(x)}\right)\right)\nonumber\\
    &+
    \lE_X\left(\indicator{X<x_1}(X - \psi(X))G_\alpha(X)\right)\nonumber
\end{align}
Straigtforward calculations gives the derivatives.
\end{proof}

\begingroup
\def\thetheorem{\ref{thm:x0_to_0}}
\begin{theorem}
Given a two-step process $(G, H, r_{\beta}^*, F)$ with $r^*_\beta = \inf \argmax_r R_B(r)$. We assume here $\forall x\in (0,x^*], f(x) > 0$ with $x^*=\argmax_x R_X(x)$. If $\exists y \in (0,x^*], \forall x<y, \frac{G(x)H(x)}{f(x)} < x$ and $\frac{G(x)H(x)}{f(x)}$ is bounded on $(0,x^*]$, then $\exists \alpha_0 > 0$ such that $\forall \alpha < \alpha_0$, 
$$
(x^*_0,x^*_1) \in \argmax_{x_0, x_1}U_\alpha(\dtbeta) 
~~~ \Rightarrow ~~~ 
x_0^* = 0
$$
\end{theorem}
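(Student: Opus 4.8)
The plan is to use the derivative $\frac{\partial U_\alpha(\dtbeta)}{\partial x_0}$ from Theorem \ref{thm:double_thresholded_derivatives} and show that under the stated hypotheses this derivative is strictly negative at any interior candidate $x_0 > 0$, once $\alpha$ is small enough; this forces the maximizer to have $x_0^* = 0$. Recall that
\begin{align*}
\frac{\partial U_\alpha(\dtbeta)}{\partial x_0} = \alpha(1-F(x_0))G(x_0)H(x_0) - x_0 f(x_0) G_\alpha\!\left(\frac{x_1(1-F(x_1))}{1-F(x_0)}\right).
\end{align*}
First I would note that $G_\alpha(x) = \alpha G(x)H(x) + (1-\alpha)G(x) \geq (1-\alpha) G\!\left(\frac{x_1(1-F(x_1))}{1-F(x_0)}\right)$, and crucially that the argument $\frac{x_1(1-F(x_1))}{1-F(x_0)} = \frac{R}{1-F(x_0)} = \obeta$-style value is at least $x_0$ whenever $x_0 \leq x_1$ lies in the thresholding region, so by monotonicity of $G$ we get $G_\alpha\!\left(\frac{x_1(1-F(x_1))}{1-F(x_0)}\right) \geq (1-\alpha) G(x_0)$. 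Plugging this in,
\begin{align*}
\frac{\partial U_\alpha(\dtbeta)}{\partial x_0} \leq \alpha(1-F(x_0))G(x_0)H(x_0) - (1-\alpha)\, x_0 f(x_0) G(x_0).
\end{align*}
Dividing by $G(x_0) f(x_0)$ (positive on $(0,x^*]$), strict negativity is equivalent to $\frac{\alpha}{1-\alpha}\cdot\frac{(1-F(x_0))H(x_0)}{f(x_0)} < x_0$, i.e. $\frac{\alpha}{1-\alpha}\cdot\frac{G(x_0)H(x_0)}{f(x_0)} < x_0 G(x_0)$ after re-inserting the factor, which is where the two hypotheses enter.

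The second step is to exploit the hypothesis $\frac{G(x)H(x)}{f(x)} < x$ for $x < y$ together with boundedness of $\frac{G(x)H(x)}{f(x)}$ by some constant $M$ on $(0, x^*]$. For $x_0 \in (0, y)$ one has $\frac{\alpha}{1-\alpha}\cdot\frac{G(x_0)H(x_0)}{f(x_0)} < \frac{\alpha}{1-\alpha}\, x_0$, which is $< x_0$ as soon as $\alpha < 1/2$; so here $\alpha_0 \le 1/2$ suffices. For $x_0 \in [y, x^*]$ we have $x_0 \geq y > 0$ while $\frac{G(x_0)H(x_0)}{f(x_0)} \leq M$, so $\frac{\alpha}{1-\alpha}\cdot\frac{G(x_0)H(x_0)}{f(x_0)} \leq \frac{\alpha}{1-\alpha} M < y \leq x_0$ provided $\frac{\alpha}{1-\alpha} < y/M$. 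Taking $\alpha_0 = \min\!\left(\tfrac12, \tfrac{y}{y+M}\right)>0$ makes the derivative strictly negative for all $x_0 \in (0, x^*]$ when $\alpha < \alpha_0$. Since the best response $\dtbeta$ always has $x_1 \geq x_0$ and (by Theorem \ref{dfn:thresholded_beta}) $x_0 \leq \overline{x}_0 \leq x^*$, any maximizer must satisfy $x_0^* = 0$: if $x_0^* \in (0, x^*]$ were optimal, moving $x_0$ downward strictly increases $U_\alpha$, a contradiction.

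The main obstacle I anticipate is the careful handling of the lower bound $G_\alpha\!\left(\tfrac{x_1(1-F(x_1))}{1-F(x_0)}\right) \geq (1-\alpha)G(x_0)$: one must confirm that at a candidate interior optimum the quantity $\tfrac{x_1(1-F(x_1))}{1-F(x_0)}$ is genuinely $\geq x_0$, which follows because $x_0 \leq \overline{x}_0 = \inf\{x : x(1-F(x)) \geq R\}$ gives $x_0(1-F(x_0)) \leq R = x_1(1-F(x_1))$, hence $x_0 \leq \tfrac{R}{1-F(x_0)}$. A secondary subtlety is boundary behavior as $x_0 \to 0^+$ (the term $x_0 f(x_0)$ may vanish), but since we only need strict negativity on the open interval and the maximizer over a compact feasible set is attained, a standard argument — or noting the derivative's sign directly — closes the gap; edge cases where $f$ or $G$ degenerate near $0$ are excluded by the stated positivity assumption $f(x) > 0$ on $(0, x^*]$.
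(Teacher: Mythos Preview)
Your overall strategy --- show $\partial U_\alpha/\partial x_0<0$ on $(0,x^*]$ for small $\alpha$ --- is exactly right, and your observation that $x_0(1-F(x_0))\le R$ forces $\tfrac{R}{1-F(x_0)}\ge x_0$ is correct. The gap is in the lower bound you choose for $G_\alpha$ and how it interacts with the hypothesis.

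After your bound $G_\alpha\!\big(\tfrac{R}{1-F(x_0)}\big)\ge (1-\alpha)G(x_0)$ you obtain
\[
\partial_{x_0}U_\alpha \le \alpha(1-F(x_0))G(x_0)H(x_0)-(1-\alpha)x_0 f(x_0)G(x_0).
\]
Cancelling the common $G(x_0)$, the sufficient condition for strict negativity is
\[
\frac{\alpha}{1-\alpha}\cdot\frac{(1-F(x_0))H(x_0)}{f(x_0)}<x_0,
\]
which involves $(1-F(x_0))H(x_0)/f(x_0)$, \emph{not} $G(x_0)H(x_0)/f(x_0)$. These are unrelated quantities ($G$ is the competition cdf, $1-F$ is the bidder's own survival function), so the stated hypotheses give you no control here. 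Concretely, take $H\equiv 1$ (no reserve in phase~1): then your condition reads $\tfrac{\alpha}{1-\alpha}\cdot\tfrac{1-F(x_0)}{f(x_0)}<x_0$, whose left side tends to the positive constant $\tfrac{\alpha}{1-\alpha}\cdot\tfrac{1}{f(0^+)}$ while the right side tends to $0$ as $x_0\downarrow 0$ --- the inequality fails near $0$ for every $\alpha>0$. Your ``re-insertion'' step does not fix this: writing the condition as $\tfrac{\alpha}{1-\alpha}\tfrac{G(x_0)H(x_0)}{f(x_0)}<x_0G(x_0)$ and then using $\tfrac{G(x_0)H(x_0)}{f(x_0)}<x_0$ only gives the left side $<\tfrac{\alpha}{1-\alpha}x_0$, which is not $<x_0G(x_0)$ since $G(x_0)$ may be arbitrarily small.

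The paper avoids this by lower-bounding $G_\alpha\!\big(\tfrac{R}{1-F(x_0)}\big)$ by a \emph{uniform positive constant} independent of $x_0$: since $\tfrac{x_1(1-F(x_1))}{1-F(x_0)}\ge x_1(1-F(x_1))\ge \bar x_1(1-F(\bar x_1))$ with $\bar x_1:=\sup_\alpha x_1^*$, one gets $G_\alpha(\cdot)\ge \inf_\gamma G_\gamma(\bar x_1(1-F(\bar x_1)))=:c>0$. (That $\bar x_1<b$, so $c>0$, requires a separate lemma showing $x_1^*$ stays bounded away from $b$ uniformly in $\alpha$.) With this constant $c$ in place the condition becomes $\alpha\,G(x_0)H(x_0)/f(x_0)<c\,x_0$, which is precisely the shape of the assumption and yields the two-case argument you sketched. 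Your argument can be repaired along these lines, but the bound $(1-\alpha)G(x_0)$ as written is too weak to close the proof.
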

\addtocounter{theorem}{-1}
\endgroup
\begin{proof}
Denoting $\overline{x}_1=\sup_\alpha x^*_1$,
\begin{align}
    \frac{\partial U_\alpha(\dtbeta)}{\partial x_0}
    &=
    \alpha(1-F(x_0))G(x_0)H(x_0)
    -
    x_0 f(x_0) G_\alpha\left(\frac{x_1(1-F(x_1))}{(1-F(x_0))}\right)\\
    &\leq 
    \alpha G(x_0)H(x_0)
    -
    x_0 f(x_0) G_\alpha\left(\overline{x}_1(1-F(\overline{x}_1))\right)\\
    &\leq 
    \alpha G(x_0)H(x_0)
    -
    x_0 f(x_0) \inf_{\gamma \in [0,1]}G_\gamma\left(\overline{x}_1(1-F(\overline{x}_1))\right)\\
\end{align}
So $\frac{\partial U_\alpha(\dtbeta)}{\partial x_0} \leq 0 \Leftrightarrow \alpha \frac{G(x_0)H(x_0)}{f(x_0)} \leq x_0 \inf_{\gamma \in [0,1]}G_\gamma\left(\overline{x}_1(1-F(\overline{x}_1))\right)$.
Finally, as $\inf_{\gamma \in [0,1]}G_\gamma\left(\overline{x}_1(1-F(\overline{x}_1))\right) \in (0, b)$ (from Lemma \ref{lem:x_1_bounded}), I can choose $\alpha_0 = 1/\inf_{\gamma \in [0,1]}G_\gamma\left(\overline{x}_1(1-F(\overline{x}_1))\right)$.

\textbf{Case $x_0 < y$:} from the assumptions, we can directly get $\frac{\partial U(x_0, x_1)}{\partial x_0} \leq 0$.

\textbf{Case $y \leq x_0$:} Denoting $C = \sup_{x\in[y, x^*]} \frac{G(x)H(x)}{f(x)}$, we can denote $\alpha_1 = \frac{y}{x \alpha_0}$.

Finally, for $\alpha < \alpha_2 = \min(\alpha_0,\alpha_1)$, we have $\frac{\partial U_\alpha(\dtbeta)}{\partial x_0} \leq 0$ for any $x_0$.
\end{proof}
\begin{lemma}
Given a two-step process $(G, H, r_{\beta}^*, F)$ with $r^*_\beta = \inf \argmax_r R_B(r)$, we have $\overline{x}_1=\sup_\alpha x^*_1 < b$.
\label{lem:x_1_bounded}
\end{lemma}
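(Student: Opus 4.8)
The plan is to argue by contradiction. Suppose $\sup_\alpha x_1^*(\alpha)=b$; then there is a sequence $\alpha_n$ with $x_1^*(\alpha_n)\to b$. I will show that along this sequence the utility of the best response $\dtbeta$ collapses to $0$, whereas truthful bidding --- itself an admissible quasi-regular strategy --- secures a utility bounded below by a positive constant uniformly in $\alpha$. Since by Theorem~\ref{dfn:thresholded_beta} $\dtbeta$ is a best quasi-regular response, this is impossible, and hence $\overline x_1=\sup_\alpha x_1^*(\alpha)<b$.

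First, two bookkeeping facts. By quasi-concavity of $R_X$ (and continuity of $F$) the super-level set $\{x:R_X(x)\ge R\}$ is a closed interval $[\overline x_0,x_1]$ on whose endpoints $R_X=R$, and it contains $x^*=\inf\argmax_x R_X$; hence $x_1^*(\alpha)\ge x^*$ for every $\alpha$, and as $x_1^*(\alpha)\to b$ the revenue level $R=R_X(x_1^*(\alpha))\to 0$ --- here I use the tail hypothesis $1-F(x)=o(x^{-1})$, which forces $R_X(x)\to0$ as $x\to b$ (finite or infinite), together with the fact that $R_X>0$ on compact subsets of $(0,b)$. Second, writing $\phi_R(x):=R/(1-F(x))$, there is a pointwise domination $\dtbeta(x)\le\phi_R(x)$ on $[0,b)$: on $(x_0,x_1]$ it is the defining equality, while for $x\le x_0\le\overline x_0$ and for $x>x_1$ one has $R_X(x)\le R$, i.e. $x\le\phi_R(x)$, and there $\dtbeta(x)=x$.

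Now fix $j\in\{1,2\}$ and let $Z_j$ be the effective competition in stage $j$ ($Y\sim G$ thickened by the reserve law $H$ in stage~1; $\max(Y,r^*_\beta)$ in stage~2). Dropping the non-positive part of $(X-Z_j)\indicator{\dtbeta(X)\ge Z_j}$ and using $\dtbeta\le\phi_R$,
\begin{align*}
U_j(\dtbeta)\ \le\ \lE\!\left[(X-Z_j)^+\,\indicator{\phi_R(X)\ge Z_j}\right].
\end{align*}
Since $\phi_R$ is increasing with $\phi_R^{-1}(z)=F^{-1}(1-R/z)\to b$ as $R\to0$ for each fixed $z>0$, the event $\{\phi_R(X)\ge Z_j\}=\{X\ge\phi_R^{-1}(Z_j)\}$ decreases to the null event $\{Z_j=0\}$ (the effective competition has no atom at $0$ in all cases considered: automatic in stage~2 since $\max(Y,r^*_\beta)\ge r^*_\beta>0$, and in stage~1 as soon as $H$ is non-degenerate or $G$ is continuous). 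By dominated convergence --- using $\lE[X]<\infty$, which is needed anyway for the utilities and the Myerson lemma to be meaningful --- the right-hand side tends to $0$ as $R\to0$, by a bound depending only on $R$ (not on $x_0$ or $\alpha$). As $U_\alpha(\dtbeta)=\alpha U_1(\dtbeta)+(1-\alpha)U_2(\dtbeta)$ is a convex combination, it follows that $U_{\alpha_n}(\dtbeta)\to 0$.

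Finally, truthful bidding $\beta=\mathrm{id}$ is quasi-regular (its bid law is $F$, assumed regular), and $U_\alpha(\mathrm{id})$ is the convex combination of the stage-$1$ truthful utility (second price with no/random reserve) and the stage-$2$ truthful utility (second price with monopoly reserve $\psi^{-1}(0)<b$), both strictly positive constants, since in each stage the bidder wins with strictly positive surplus with positive probability. Hence $U_\alpha(\mathrm{id})\ge c>0$ uniformly in $\alpha$, and since $\dtbeta$ is a best quasi-regular response, $U_{\alpha_n}(\dtbeta)\ge U_{\alpha_n}(\mathrm{id})\ge c$, contradicting $U_{\alpha_n}(\dtbeta)\to 0$. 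I expect the only genuinely delicate step to be the passage to the limit in the utility bound --- controlling the mass of $X$ on the event $\{\phi_R(X)\ge Z_j\}$ as $R\to 0$, which is where the finite-mean hypothesis and the absence of a competition atom at $0$ enter; everything else is routine manipulation of the definitions of $x_0,x_1,R$ and of the quasi-concavity of $R_X$.
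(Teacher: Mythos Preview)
Your argument is correct and follows the same overall contradiction scheme as the paper (optimal utility $\to 0$ along $x_1^*\to b$, yet truthful bidding gives a uniform positive lower bound), but the way you drive the utility to zero is genuinely different and, in fact, cleaner. The paper works with the Myerson-form decomposition of $U_\alpha(\dtbeta)$ from Theorem~\ref{thm:double_thresholded_derivatives}, bounds each of the three integrals separately, and for the middle term introduces an auxiliary split point $y\in(x_0^*,x_1^*)$ to decouple the vanishing of $R=x_1^*(1-F(x_1^*))$ from the tail mass of $X$. You instead observe the pointwise domination $\dtbeta\le \phi_R:=R/(1-F(\cdot))$ (valid on all three pieces of the strategy by the very definition of $x_0\le\overline x_0$ and $x_1$), which lets you upper-bound the raw second-price utility in one stroke and invoke dominated convergence. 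This avoids the case split and the intermediate $y$, at the modest cost of checking the domination.

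Two small points. First, to get a bound ``depending only on $R$'' in stage~2 you should make explicit that $Z_2=\max(Y,r^*_\beta)\ge Y$, so $(X-Z_2)^+\indicator{\phi_R(X)\ge Z_2}\le (X-Y)^+\indicator{\phi_R(X)\ge Y}$; otherwise $r^*_\beta$ still carries hidden dependence on $\alpha$ through $x_0^*,x_1^*$. Second, both your proof and the paper's silently use $G(0)=0$ (no atom of the competition at $0$): in your DCT step this is what kills the limit, and in the paper it is what makes $G_\alpha(R/(1-F(X)))\to 0$. You flag this correctly; just note that ``$H$ non-degenerate'' is not quite the right sufficient condition---what you need is $G(0)H(0)=0$ for stage~1 and $G(0)=0$ for stage~2 once you pass to $Y$.
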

\begin{proof}
For this we study the different terms of $U_\alpha$ when $x_1$ goes to $b$, to prove that in such case, the utility goes to 0.
\begin{align*}
\lE_{X}\left(\indicator{X\leq x_0^*}(X - \psi_{X}(X))\alpha G(X)H(X)\right) &\leq \int\indicator{X\leq x_0^*}(1-F(X))G(X)H(X)\alpha{\rm d}X\\
&\leq \int\indicator{X\leq x_0^*}{\rm d}X\\
& \leq x_0^* = o_{x_1^* \to b}(1)
\end{align*}
\begin{align*}
\lE_{X}\left(\indicator{X> x_1^*}(X - \psi_{X}(X)) G_\alpha(X)\right) &\leq \int\indicator{X > x_1^*}(1-F(X))G_\alpha(X)H(X){\rm d}X\\
&\leq \int\indicator{X > x_1^*}(1-F(X)){\rm d}X\\
& = o_{x_1^* \to b}(1) ~~~\text{as either $1-F(x) = o(x^{-1})$ or $b$ is finite.}
\end{align*}
Fix $y\in (x_0^*, x_1^*)$,
\begin{align*}
    \lE_X\left(\indicator{x_0^* < X \leq x_1^*}X G_\alpha\left(\frac{x_1^*(1-F(x_1^*))}{1-F(x)}\right)\right)
    & =  \lE_X\left(\indicator{x_0^* < X \leq y}X G_\alpha\left(\frac{x_1^*(1-F(x_1^*))}{1-F(x)}\right)\right)\\
    &~~~~+
    \lE_X\left(\indicator{y < X \leq x_1^*}X G_\alpha\left(\frac{x_1^*(1-F(x_1^*))}{1-F(x)}\right)\right)
\end{align*}
We have $x_1^*(1-F(x_1^*)) \to 0$ when $x_1^* \to b$ as either $b$ is finite or $1-F(x) = o(x^{-1})$. Hence, we have $\lE_X\left(\indicator{x_0^* < X \leq y}X G_\alpha\left(\frac{x_1^*(1-F(x_1^*))}{1-F(x)}\right)\right) = o_{x_1^* \to b}(1)$.
On the other side,
\begin{align*}
\lE_X\left(\indicator{y < X \leq x_1^*}X G_\alpha\left(\frac{x_1^*(1-F(x_1^*))}{1-F(x)}\right)\right) &\leq \int_y^b X f(X){\rm d}X \\
& = y(1-F(y)) + \int_y^b 1-F(X){\rm d}X \\
& = o_{y \to b}(1)
\end{align*}
By contradiction, assume $\forall y < b, \exists \alpha \in [0,1]$ such that $x_1^* > y$. Then for any $\varepsilon > 0$, we can set $y$ such that $\exists \alpha, U_\alpha(\tilde{\beta}_{x_0^*, x_1^*}) \leq \varepsilon$, which contradict the optimality of $(x_0^*, x_1^*)$ (as bidding truthfully brings a strictly positive utility).
\end{proof}
\subsection{Figures when the first stage is with random reserve}
The distribution of reserve price in stage 1 is $\mathcal{U}([0,1])$.
\label{ssec:plot_with_random_reserve}
\begin{figure}[h!]
  \centering
  \includegraphics[width=0.48\textwidth]{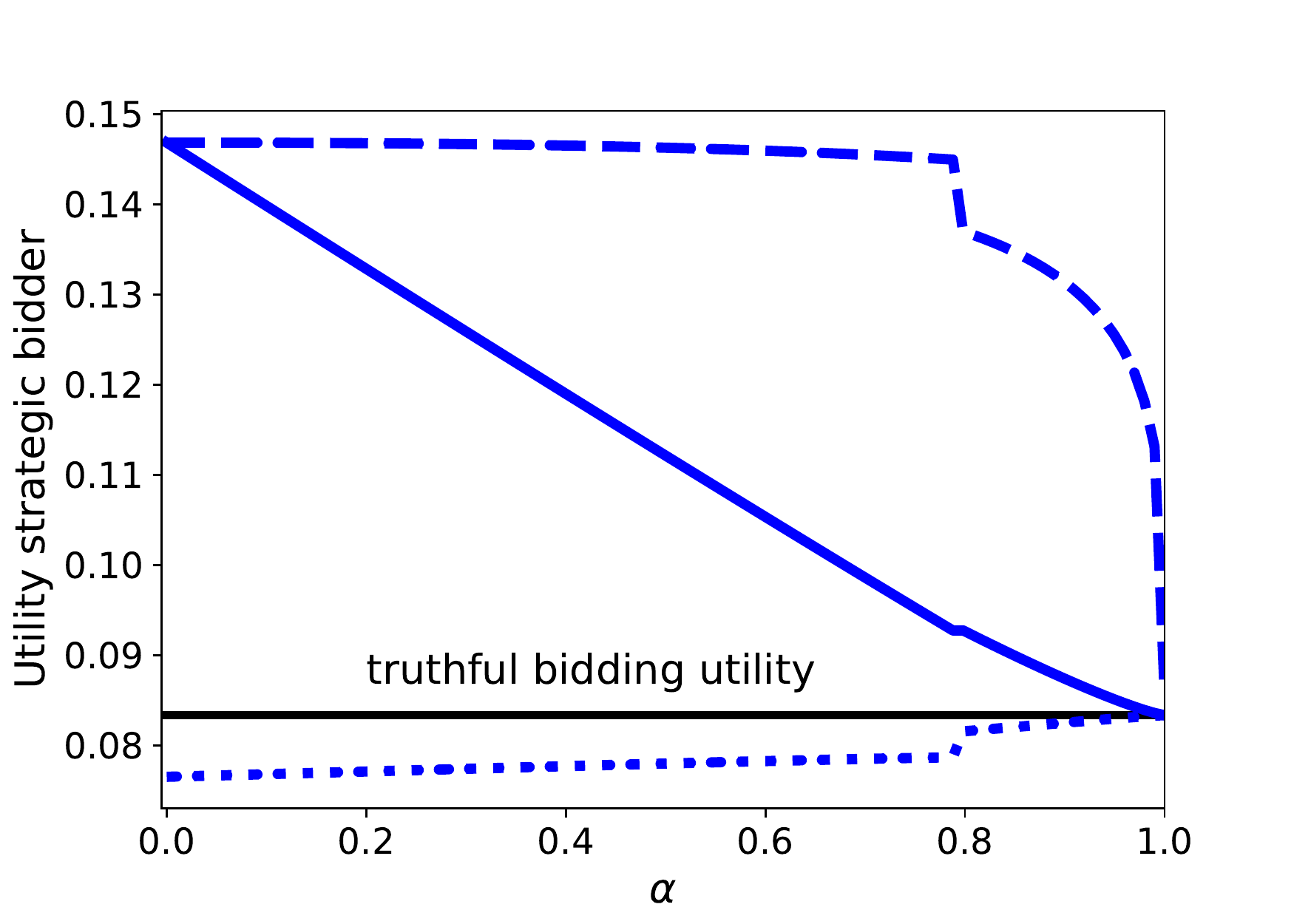}
    \includegraphics[width=0.48\textwidth]{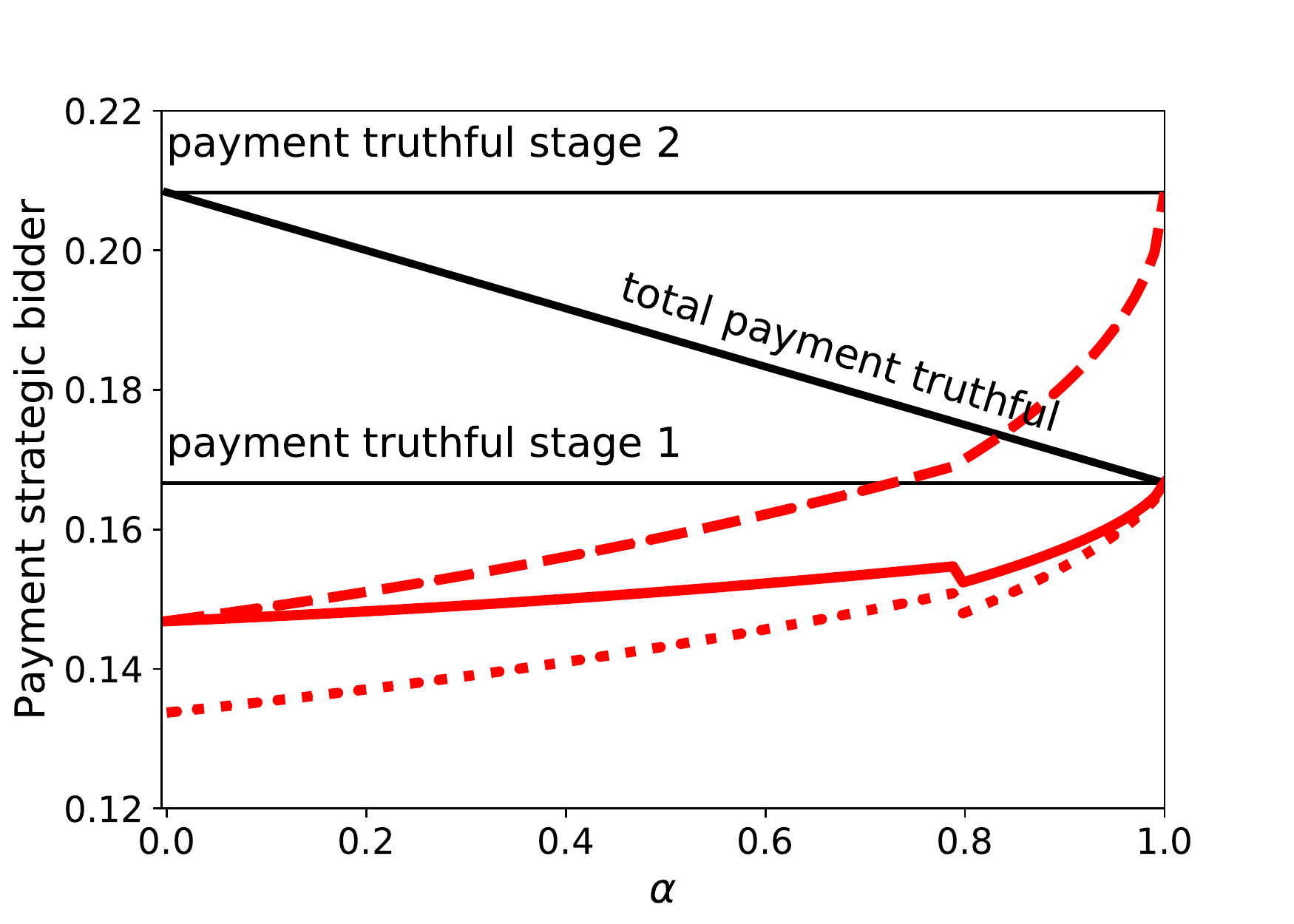}
  \includegraphics[width=0.48\textwidth]{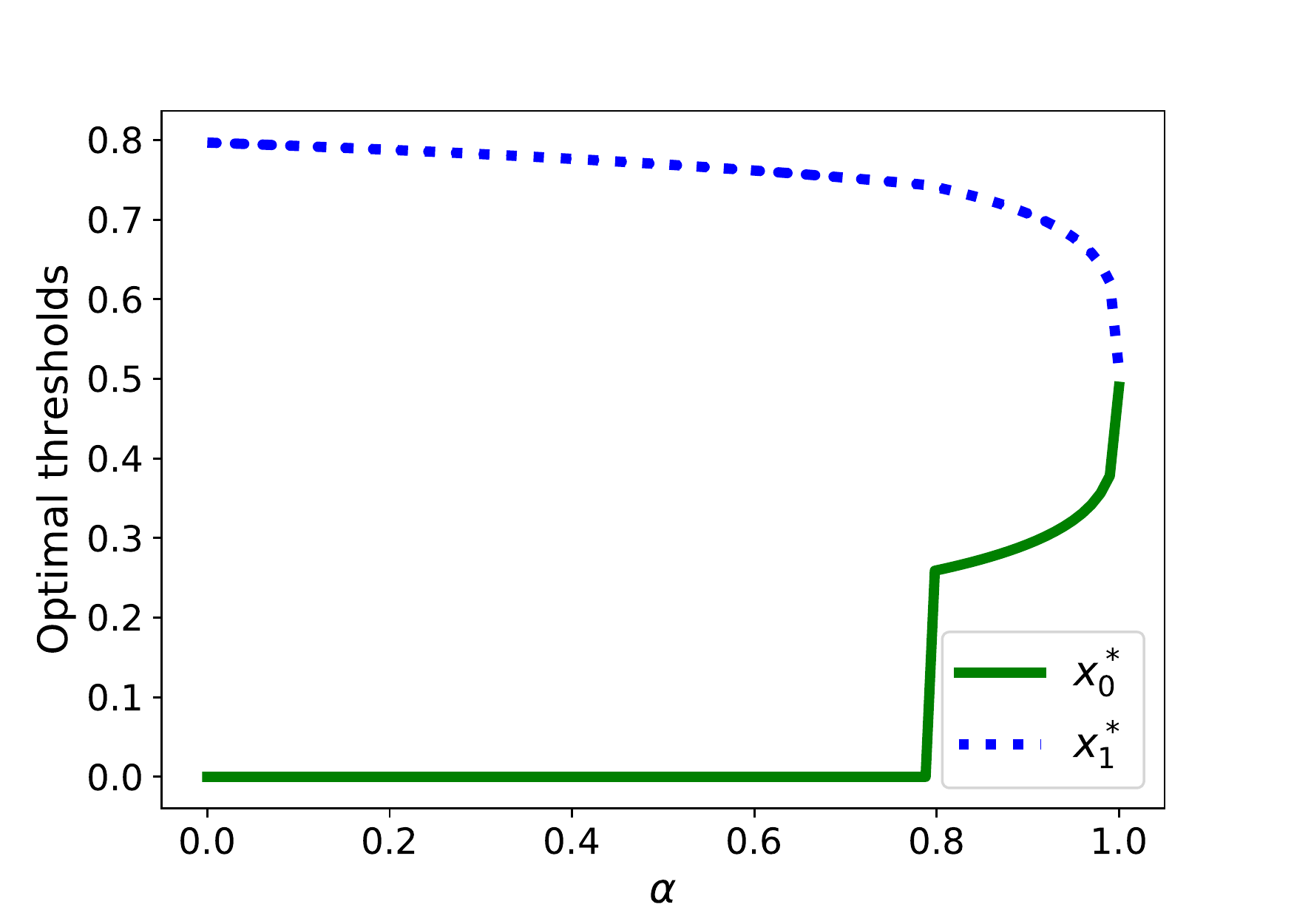}
  \caption{Welfare sharing with random reserve in stage 1: Left -- Utility of the strategic bidder function of $\alpha$. Middle -- Payment of the strategic bidder function of $\alpha$.  The dotted lines corresponds to phase 1, the dashed ones to phase 2. The blue color is used for the bidder's utility and the red for her payment. The black color is used for the baseline of truthful bidding. Right: evolution of $x_0^*, x_1^*$ with $\alpha$.}
  \label{fig:welfare_vs_alpha}
\end{figure}
\newpage

\section{Proof of the existence of a Nash equilibrium ($\alpha=0$)} 
\label{sec:proofNashAlphaZero}
More formally, the game we are considering is the following: all bidders have the same value distribution $F_X$. The mechanism is a lazy second price auction with reserve price equal to the monopoly price of bidders' bid distributions. We denote by $\beta_1,...,\beta_n$ the bidding strategies used by each player and by $U_i(\beta_1,...,\beta_n)$ the utility  
of bidder $i$ when using $\beta_i$. We say that strategy $\beta$ is a symmetric Nash equilibrium for this game if for all bidding strategies $\hat{\beta}$ and for all bidders $i$, if all players except bidder $i$ are using $\beta$, 
\begin{equation*}
U_i(\beta,...,\hat{\beta}, ...,\beta) \leq U_i(\beta,...,\beta, ...,\beta)
\end{equation*}  

We can reuse the same type of formulas we introduced in the previous subsection to compute the expected utility of each bidder. We focus on bidder $i$. We note $G_\beta$ the distribution of the maximum bid of the competitors of bidder $i$. The distribution of the highest bid of the competition depends now on the strategy of bidder i.  With this notation,
\begin{equation*}
U_i(\beta,...,\hat{\beta},...,\beta) = \mathbb{E}\bigg((X-h_{\hat{\beta}}(X))G_\beta(\hat{\beta}(X))\textbf{1}(X \geq h_{\hat{\beta}}^{-1}(0))\bigg)\;.
\end{equation*}
As in the previous setting, we can compute directional derivatives when $\hat{\beta} = \beta + t\rho$. We can find different sorts of Bayes-Nash equilibria depending on the class of function bidders are optimizing in. We call them \emph{restricted Nash equilibrium} since they are equilibrium in a restricted class of bidding strategies. In practice, bidders may not be able or willing to implement all possible bidding strategies. They often limit themselves to strategies that are easy to practically implement.

We have obtained results on \emph{restricted Nash equilibria} when the bidders restrict themselves to the class of linear shading or affine shading, though we do not present them here as they are a bit tangential to the main thrust of this paper. We detail the results on a larger and more interesting class of functions: the class of thresholded bidding strategies that we introduced in Definition \ref{dfn:thresholded_beta}. We assume that bidders can choose any strategies in this large class of functions. 

Our first theorem on a Nash equilibrium between bidders is the following: 
\begin{theorem}
\label{thm:nashEquiThresholded}
We consider the symmetric setting where all the bidders have the same value distributions. We assume for simplicity that the distribution is supported on [0,1]. Suppose this distribution has density that is continuous at 0 and 1 with $f(1)\neq 0$ and X has a distribution for which $\psi_X$ crosses 0 exactly once and is positive beyond that crossing point. 

There is a unique symmetric Nash equilibrium in the class of thresholded bidding strategies defined in Definition \ref{dfn:thresholded_beta}.
It is found by solving Equation \eqref{eq:keyEqThreshStratAllStrategic} to determine $r^*$ and bidding truthfully beyond $r^*$

Moreover, if all the bidders are playing this strategy corresponding to the unique Nash equilibrium, the revenue of the seller is the same as in a second price auction with no reserve price. The same is true of the utility of the buyers. 	
	
\end{theorem}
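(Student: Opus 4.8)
The whole argument rests on one structural fact, which I would establish first. When all $n$ bidders use the thresholding strategy $\tbeta$ of Theorem~\ref{dfn:thresholded_beta} with a common threshold $x_1$, the explicit form \eqref{eq:defDoubleThresh} shows $\tbeta$ is continuous and strictly increasing with minimum value $\tbeta(0)=R=x_1(1-F(x_1))$, and (from the push-forward virtual value computed in Section~\ref{ssec:welfare_sharing_best_response}) the induced revenue curve is flat at height $R$ on $[R,x_1]$ and decreasing afterwards, so the welfare-benevolent seller sets $r^*_\beta=R$ — the \emph{smallest} bid ever submitted. The reserve therefore never binds, and the symmetric profile is exactly a no-reserve second-price auction on the transformed bids $\tbeta(X_1),\dots,\tbeta(X_n)$: the highest-value bidder wins and pays $\tbeta(Y)$, where $Y=\max_{j\neq i}X_j$ has cdf $G=F^{\,n-1}$.

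I would then derive the characterizing equation. Holding the other $n-1$ bidders at threshold $x_1$, bidder $i$ faces the fixed competition cdf $G_\beta=G\circ\tbeta^{-1}$, and, by the same reasoning, bidder $i$'s own reserve never binds whatever threshold $\hat x_1$ they pick. Plugging $\tilde\beta_{\hat x_1}$ into $U_i=\mathbb{E}\big[(X-h_{\hat\beta}(X))\,G_\beta(\hat\beta(X))\,\indicator{X\ge h_{\hat\beta}^{-1}(0)}\big]$ and using $h_{\tilde\beta_{\hat x_1}}(x)=0$ on $[0,\hat x_1]$ and $\psi_X(x)$ beyond (Lemma~\ref{definition_psi}), $U_i$ reduces to a one-dimensional function of $\hat x_1$; this is exactly the situation of Theorem~\ref{thm:double_thresholded_derivatives} with $\alpha=0$, $x_0=0$ and competition $G_\beta$. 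Since for $\alpha=0$ the $x_0$-derivative there is $-x_0f(x_0)G_\beta(\cdot)\le0$ everywhere, the best thresholded response does have the single-threshold form $\tbeta$; and setting the $x_1$-derivative to zero at $\hat x_1=x_1$ — dividing out $f(x_1)\psi(x_1)\neq0$, legitimate since the optimal threshold obeys $x_1\ge x^*$ and $\psi$ is positive there — yields
\begin{equation}\label{eq:keyEqThreshStratAllStrategic}
G(x_1)=\mathbb{E}_X\!\left[\indicator{X\le x_1}\,\frac{X}{1-F(X)}\;g_\beta\!\left(\frac{x_1(1-F(x_1))}{1-F(X)}\right)\right],\qquad G=F^{\,n-1},
\end{equation}
where $g_\beta$ is the density of $\tbeta(Y)$; substituting $g_\beta$ turns this into an explicit scalar equation in $x_1$ and $n$.

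For existence and uniqueness I would move everything in \eqref{eq:keyEqThreshStratAllStrategic} to one side and study the resulting scalar function on $[x^*,1]$. This is where the hypotheses enter: $\psi_X$ crossing $0$ exactly once and $f$ continuous at $0$ and $1$ with $f(1)\neq0$ make this function continuous with endpoint values of opposite sign (so the intermediate value theorem gives a root $r^*$) and strictly monotone (so $r^*$ is unique); $f(1)\neq0$ in particular controls $g_\beta$ near the top of the support. I would then check that $\hat x_1=r^*$ is a \emph{global} maximum of $\hat x_1\mapsto U_i(\hat x_1)$ over thresholds, not merely a critical point — by sign analysis of $U_i'$ on each side of $r^*$, and, at the boundary $\hat x_1=0$ (deviating to truthful, which then faces the monopoly reserve of $F$), by invoking the single-strategic $\alpha=0$ result of \cite{nedelec2018thresholding,tang2016manipulate} that thresholding strictly beats truthful against a fixed competition. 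I expect this global-optimality check, together with the sign and monotonicity analysis of \eqref{eq:keyEqThreshStratAllStrategic}, to be the main obstacle; everything else is bookkeeping, and the restriction to $\alpha=0$ is what keeps it clean.

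Finally, the revenue and welfare statements follow from payoff equivalence. In the symmetric thresholded equilibrium the allocation is efficient (the highest-value bidder wins, since $\tbeta$ is increasing and the reserve never binds) and the lowest type $x=0$ gets zero expected utility and makes zero expected payment (it wins only when every competitor also has value $0$, a probability-zero event as $F$ is atomless). A truthful second-price auction with no reserve has the identical interim allocation probability $Q(x)=F(x)^{\,n-1}$ and the same zero payment for type $0$. By the payoff-equivalence theorem — equivalently, by integrating the envelope identity, i.e.\ the integrated Myerson lemma (Lemma~1) — every type has the same interim expected payment and interim utility in the two mechanisms. Summing over bidders gives equal seller revenue, and since the expected welfare $\mathbb{E}[\max_j X_j]$ is the same under both (efficiency), the total and, by symmetry, each individual bidder's utility equals that of the no-reserve second-price auction, which is the claim.
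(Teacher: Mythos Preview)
Your outline is correct and largely parallels the paper's four-step proof (directional derivative / first-order condition, uniqueness of the root, existence and best-response verification, revenue equivalence), but two parts deserve comment.

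\textbf{Existence and uniqueness.} You plan to move the first-order condition to one side and show the resulting scalar function on $[\psi^{-1}(0),1]$ is strictly monotone with opposite endpoint signs. That claim is true, but you have not said how you would establish it, and it is not obvious from the form of the first-order condition as you have written it (with $g_\beta$ still inside). The paper's device here is an integration by parts that collapses the symmetric first-order condition to the equation $\mathfrak{R}(r)\triangleq\Exp{\psi_X(X)F^{K-1}(X)\indicator{X\le r}}=0$; from this form the sign analysis is immediate (the integrand is negative then positive, $\mathfrak{R}(\psi^{-1}(0))<0$, and $\mathfrak{R}(1)$ is the per-buyer seller revenue in a no-reserve second-price auction, hence positive). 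The analogous integration by parts for the one-strategic equation is what drives the global-optimality check you flag as ``the main obstacle''. So your plan is right, but the integration-by-parts step is the missing key that makes it executable.

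\textbf{Revenue and utility equivalence.} Here you take a genuinely different and cleaner route. The paper computes directly: since $\mathfrak{R}(r^*)=0$, the seller's per-buyer revenue $\Exp{\psi_X(X)F^{K-1}(X)\indicator{X\ge r^*}}$ equals $\Exp{\psi_X(X)F^{K-1}(X)}$, which is the no-reserve second-price revenue, and the buyer side is handled the same way. Your argument instead observes that the symmetric thresholded profile implements the efficient allocation with a non-binding reserve and zero payment for type $0$, so payoff equivalence against the truthful no-reserve auction gives the conclusion at once. Your approach avoids needing the $\mathfrak{R}$ identity for this step and is more conceptual; the paper's approach, on the other hand, extracts a second dividend from the same integration by parts that already did the uniqueness work.
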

Sketch of proof: 
\begin{itemize} 
\item State a directional derivative result in this class of strategies by directionally-differentiating the expression of $U_i(\beta,...,\beta_t,...,\beta)$.
\item Prove uniqueness of the Nash equilibrium.
\item Prove existence of the Nash equilibrium.
\item Show equivalence of revenue with a second price auction without reserve.
\end{itemize}

\subsection{A directional derivative result}
We can state a directional derivative result in this class of strategies by directionally-differentiating the expression of $U_i(\beta,...,\beta_t,...,\beta)$.It implies the only strategy with 0 ``gradient" in this class is truthful beyond a value $r$ ($r$ is different from the one in strategic case), where $r$ can be determined through a non-linear equation. 
\begin{lemma}\label{lemma:directionalDerivativesThresholdedStrategiesSymmetric}
Consider the symmetric setting. Suppose that all the bidders are strategic and that they are all except one using the strategy $\beta$, $G_\beta$ denotes the CDF of the maximum bid of the competition when they use $\beta$ and $g_\beta$ the corresponding pdf, consider r and $\gamma$ such that $\beta = \beta_r^\gamma$. Assuming that the seller is welfare-benevolent and that $\beta_i^t=\beta_r^{\gamma+t\rho}$. Then  
\begin{align*}
&\left.\frac{\partial U(\beta,...,\beta_i^t,...,\beta)}{\partial t}\right|_{t=0}=\mathbb{E}\bigg((X-\gamma(X))g_{\beta}(\gamma(X))\rho(X)\indicator{X\geq r}\bigg) +\rho(r)\\
&(1-F(r))\bigg(\mathbb{E}\bigg(\frac{X}{1-F(X)}g_\beta\left(\frac{\gamma(r)(1-F(r))}{1-F(X)}\right)\indicator{X\leq r}\bigg)-G_\beta(\gamma(r))\bigg)
\end{align*}
Also, 
\begin{align*}
&\frac{\partial U(\beta,...,\beta_i,...,\beta)}{\partial r}=-h_\beta(r)f(r)\bigg(\mathbb{E}\bigg(\frac{X}{1-F(X)}g_\beta\left(\frac{\gamma(r)(1-F(r))}{1-F(X)}\right)\indicator{X\leq r}\bigg)-G_\beta(\gamma(r))\bigg)
\end{align*}
The only strategy $\gamma$ and threshold $r$ for which we can cancel the derivatives in all directions $\rho$ consists in bidding truthfully beyond $r^*_{all}$, where $r^*_{all}$ satisfies the equation 
\begin{equation}\label{eq:keyEqThreshStratAllStrategic}
\frac{K-1}{r^*_{all}(1-F(r^*_{all}))}\mathbb{E}\bigg(XF^{K-2}(X)(1-F(X))\indicator{X\leq r^*_{all}}\bigg) = F^{K-1}(r^*_{all})\;.
\end{equation}
\end{lemma}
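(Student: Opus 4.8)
The strategy is to treat the $(K-1)$ non-deviating bidders as fixed, so that their bids induce a competition CDF $G_\beta$, and then directionally differentiate the payoff of the deviating bidder $i$. Write bidder $i$'s utility using the Myerson-type formula from Lemma 1 (integrated version), which gives
$$
U(\beta,\dots,\beta_i^t,\dots,\beta)=\lE_X\!\Big((X-h_{\beta_i^t}(X))G_\beta(\beta_i^t(X))\indicator{X\geq (h_{\beta_i^t})^{-1}(0)}\Big).
$$
I would parametrise the deviation inside the thresholded class by $\gamma_t=\gamma+t\rho$ and $\beta_i^t=\beta_r^{\gamma_t}$, keeping the threshold $r$ fixed for the first computation and then varying $r$ separately for the second. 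The key algebraic fact to invoke is that for the thresholded strategy, on $\{X\le r\}$ we have $h_{\beta}(X)=0$ (by construction of the thresholding part, which solves the ODE $h_\beta=0$), and on $\{X>r\}$ we have $\beta(X)=X$ so $h_\beta(X)=\psi_F(X)$; this is exactly the content already recorded around Lemma \ref{definition_psi} and \ref{lemma:keyODEs}. Differentiating the integrand in $t$ produces three contributions: (i) the ``interior'' term on $\{X>r\}$ coming from $\partial_t\beta_i^t$ there, which vanishes at $t=0$ for the truthful part but in general yields $\lE((X-\gamma(X))g_\beta(\gamma(X))\rho(X)\indicator{X\ge r})$; (ii) the term from $\partial_t[G_\beta(\beta_i^t(X))]$ on $\{X\le r\}$, where $\beta_i^t(X)=\gamma_t(r)(1-F(r))/(1-F(X))$ up to the normalisation tying $R_t=\gamma_t(r)(1-F(r))$, giving the $\lE(\tfrac{X}{1-F(X)}g_\beta(\cdots)\indicator{X\le r})$ term after using the chain rule on the argument of $G_\beta$; and (iii) the boundary term from the indicator $\indicator{X\ge r}$ (equivalently from the lower endpoint of integration moving / the reserve constraint), contributing $-\rho(r)(1-F(r))G_\beta(\gamma(r))$. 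Collecting these yields the first displayed formula. For $\partial U/\partial r$, I differentiate with respect to the threshold directly: the interior region shrinks/grows, the normalising revenue level $R=h_\beta(r)$... more precisely the endpoint of the truthful region contributes through $h_\beta(r)f(r)$ and the same bracket $\big(\lE(\cdots\indicator{X\le r})-G_\beta(\gamma(r))\big)$ appears, because the $t$-derivative and $r$-derivative act on the same coupling between $r$ and the thresholding normalisation.

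For the last assertion — identifying the unique stationary $(\gamma,r)$ — I would argue as follows. Setting the directional derivative to zero for \emph{all} test directions $\rho$ forces, by the fundamental lemma of the calculus of variations applied on $\{X>r\}$, that $(X-\gamma(X))g_\beta(\gamma(X))=0$ a.e.\ there, i.e.\ $\gamma$ is truthful beyond $r$ (assuming $g_\beta>0$ on the relevant range); and then the coefficient of $\rho(r)$ must also vanish, which is precisely the bracket $\lE(\tfrac{X}{1-F(X)}g_\beta(\tfrac{\gamma(r)(1-F(r))}{1-F(X)})\indicator{X\le r})=G_\beta(\gamma(r))$. In the symmetric equilibrium all bidders use $\beta=\beta_r^{\gamma}$ with $\gamma$ truthful beyond $r$; hence $G_\beta$ is the CDF of the maximum of $K-1$ i.i.d.\ copies of $\beta(X)$, so $G_\beta(b)=(F_{B}(b))^{K-1}$ with $g_\beta$ its density. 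At the equilibrium threshold $r^*_{all}$ one has $\gamma(r^*_{all})=r^*_{all}$, $G_\beta(\gamma(r^*_{all}))=F^{K-1}(r^*_{all})$, and for $X\le r^*_{all}$ the bid is $b(X)=r^*_{all}(1-F(r^*_{all}))/(1-F(X))$, whose pushforward CDF evaluated suitably gives $F_B(b(X))=F(X)$ and $g_\beta(b(X))=(K-1)F^{K-2}(X) f_B(b(X))$ with $f_B(b(X))\,b'(X)=f(X)$; substituting $b'(X)= r^*_{all}(1-F(r^*_{all}))f(X)/(1-F(X))^2$ converts the expectation into Equation \eqref{eq:keyEqThreshStratAllStrategic}. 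Uniqueness then follows because the map $r\mapsto \tfrac{K-1}{r(1-F(r))}\lE(XF^{K-2}(X)(1-F(X))\indicator{X\le r}) - F^{K-1}(r)$ is shown to be strictly monotone (the crossing-once hypothesis on $\psi_X$ and continuity of $f$ at $0,1$ are used here), so it has exactly one root.

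\textbf{Main obstacle.} The delicate part is the change-of-variables bookkeeping for $g_\beta$ on the thresholded region: one must carefully relate the density of the pushforward bid distribution $f_{B}$ to $f$ through $b'(X)$, keep track of the normalising identity $R=\gamma(r)(1-F(r))$ so that perturbing $\gamma(r)$ also rescales the whole thresholding branch, and correctly collect the boundary term at $X=r$ where the bid function has (in general) a jump — this is where the $\rho(r)$ coefficient and the entire structure of Equation \eqref{eq:keyEqThreshStratAllStrategic} come from, and getting the sign and the factor $(1-F(r))$ right is the crux. I also need the regularity hypotheses (continuity of $f$ at the endpoints, $f(1)\neq0$, $\psi_X$ crossing $0$ once) precisely to guarantee that $g_\beta$ is positive where needed, that the root $r^*_{all}$ lies strictly inside $(0,1)$, and that the stationarity equation has a unique solution rather than merely a stationary set.
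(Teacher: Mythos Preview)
Your plan is essentially the paper's: the paper derives the two derivative formulas exactly as in the one-strategic Lemma~\ref{lemma:directionalDerivativesThresholdedStrategiesRight} (with $G,g$ replaced by $G_\beta,g_\beta$), and then for the present lemma simply substitutes the symmetric expressions $G_\beta(\beta(x))=F^{K-1}(x)$ and $g_\beta(\beta(x))=(K-1)F^{K-2}(x)f(x)/\beta'(x)$ to obtain Equation~\eqref{eq:keyEqThreshStratAllStrategic}. Your change-of-variables for the final equation is exactly the paper's.

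One correction to your decomposition (i)--(iii), though. Since $r$ is held fixed in the $t$-derivative, the indicator $\indicator{X\ge r}$ does not move and contributes no boundary term; your (iii) as described does not exist. The raw $t$-derivative on $\{X\ge r\}$ gives
\[
-h_\rho(X)\,G_\beta(\gamma(X))+(X-h_\beta(X))\,\rho(X)\,g_\beta(\gamma(X)),
\]
not $(X-\gamma(X))\rho\,g_\beta$ directly. Both the boundary term $-\rho(r)(1-F(r))G_\beta(\gamma(r))$ and the simplification $(X-h_\beta)\rightsquigarrow(X-\gamma)$ come from a \emph{single} integration by parts of $\int_r^\infty \rho'(x)(1-F(x))G_\beta(\gamma(x))\,dx$ (which sits inside $h_\rho=\rho-\rho'(1-F)/f$): evaluating at the lower endpoint $x=r$ produces the former, and the residual bulk term $-\rho\,\gamma'\,g_\beta(\gamma)(1-F)/f$ combines with $(X-h_\beta)\rho\,g_\beta$ to yield the latter. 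So your (i) and (iii) are really one integration-by-parts step, and if you executed (i) as written you would get the wrong integrand. This is precisely the ``crux'' you flag at the end, but it is worth getting the mechanism straight before writing it up.

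Finally, uniqueness of the root $r^*_{\text{all}}$ is not part of this lemma in the paper; it is handled separately in Lemma~\ref{lemma:keyEqThreshAllStratHasUniqueSolution} via the identity $\mathfrak{R}(r)=\lE[\psi_X(X)F^{K-1}(X)\indicator{X\le r}]$, so you need not argue it here.
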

\begin{proof}
In the case of $K$ symmetric bidders with the same value distributions $F$, we have 
\begin{equation*}
G_\beta(x)= F^{K-1}(\beta^{-1}(x))\;\qquad\text{and}\qquad g_{\beta}(\beta(x))=\frac{(K-1)F^{K-2}(x)}{\beta'(x)}f(x)\;.
\end{equation*}
The last result follows by plugging-in these expressions in the corresponding equations.
\end{proof}

\subsection{Uniqueness of the Nash equilibrium}
We show that this strategy represents the unique symmetric Bayes-Nash equilibrium in the class of shading functions defined in Definition \ref{dfn:thresholded_beta}. At this equilibrium, the bidders recover the utility they would get in a second price auction without reserve price.

\begin{lemma}\label{lemma:keyEqThreshAllStratHasUniqueSolution}
Suppose $X$ has a distribution for which $\psi_X$ crosses $0$ exactly once and is positive beyond that crossing point. Then Equation \eqref{eq:keyEqThreshStratAllStrategic} has a unique non-zero solution. 
\end{lemma}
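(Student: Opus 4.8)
The plan is to analyze Equation \eqref{eq:keyEqThreshStratAllStrategic} by rewriting it as $\Phi(r)=0$ for a suitable function $\Phi$ on $(0,1]$ and showing $\Phi$ has exactly one zero. First I would multiply through by $r(1-F(r))$ and move everything to one side, setting
$$
\Phi(r) = (K-1)\,\mathbb{E}\!\left(XF^{K-2}(X)(1-F(X))\indicator{X\leq r}\right) - r(1-F(r))F^{K-1}(r).
$$
The zeros of $\Phi$ on $(0,1]$ are exactly the non-zero solutions of \eqref{eq:keyEqThreshStratAllStrategic}. I would first record the boundary behavior: as $r\to 0^+$, both terms vanish, but I need to check the sign just above $0$; as $r\to 1^-$, the first term is the full expectation $(K-1)\mathbb{E}(XF^{K-2}(X)(1-F(X)))>0$ while the second term $r(1-F(r))F^{K-1}(r)\to 0$ since $F(1)=1$, so $\Phi(1)>0$. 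For the behavior near $0$: for small $r$, the first term is $O(r\cdot F^{K-1}(r))$ times a bounded factor, and comparing with the second term one sees (using that the integrand $x F^{K-2}(x)(1-F(x))\le r F^{K-2}(x)$ on $\{x\le r\}$ when... ) that $\Phi(r)<0$ for $r$ small — this uses the fact that $\psi_X(r)<0$ near $0$, equivalently $r f(r) < 1-F(r)$, which forces the identity term to be dominated. So $\Phi$ changes sign, giving existence of at least one zero by continuity.

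For uniqueness, I would differentiate. Writing $m(r)=rf(r)$ and using $\frac{d}{dr}F^{K-1}(r)=(K-1)F^{K-2}(r)f(r)$,
$$
\Phi'(r) = (K-1)rF^{K-2}(r)(1-F(r))f(r) - \frac{d}{dr}\!\left[r(1-F(r))F^{K-1}(r)\right],
$$
and expanding the derivative of the product, the term $(K-1)rF^{K-2}(r)(1-F(r))f(r)$ cancels, leaving
$$
\Phi'(r) = -\left[(1-F(r)) - rf(r)\right]F^{K-1}(r) = \psi_X(r)\,f(r)\,F^{K-1}(r)\cdot\frac{1}{f(r)}\cdot f(r),
$$
i.e. up to the positive factor $F^{K-1}(r)$, $\Phi'(r)$ has the sign of $-(1-F(r)-rf(r)) = \psi_X(r)f(r)$... let me be careful: $\psi_X(r) = r - (1-F(r))/f(r)$, so $\psi_X(r)f(r) = rf(r)-(1-F(r))$, hence $\Phi'(r) = \psi_X(r)f(r)F^{K-1}(r)$. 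By hypothesis $\psi_X$ crosses $0$ exactly once, is negative before and positive after; therefore $\Phi$ is strictly decreasing on an initial interval and strictly increasing afterwards — it is \emph{quasi-convex} with a unique minimum. Combined with $\Phi(r)<0$ near $0$ (so the minimum value is negative) and $\Phi(1)>0$, a strictly-increasing-after-minimum function that starts negative and ends positive crosses zero exactly once on the increasing branch, and never on the decreasing branch since $\Phi$ stays negative there. Hence \eqref{eq:keyEqThreshStratAllStrategic} has a unique non-zero solution.

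The main obstacle I anticipate is the boundary analysis at $r\to 0^+$: one must rule out that $\Phi$ is already nonnegative on all of $(0,1]$, and this is precisely where the assumption that $\psi_X$ is negative near $0$ (equivalently $r^*:=\psi_X^{-1}(0)>0$) is essential. The clean way is the derivative computation above: since $\Phi(0^+)=0$ and $\Phi'(r)<0$ for $r$ just above $0$ (as $\psi_X(r)<0$ there), $\Phi$ becomes strictly negative immediately, so its global minimum on $(0,1]$ is strictly negative; then quasi-convexity plus $\Phi(1)>0$ finishes. I would also need the mild regularity that $f$ is continuous and $F$ strictly increasing (so $F^{K-1}$ never vanishes on $(0,1]$ except at $0$ in the degenerate case, which does not affect the sign argument) — these are already assumed in the statement of Theorem~\ref{thm:nashEquiThresholded}.
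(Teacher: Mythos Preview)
Your proof is correct and is essentially the same argument as the paper's. The only cosmetic difference is that the paper first integrates by parts to rewrite your $\Phi(r)$ as $\mathfrak{R}(r)=\mathbb{E}\!\left(\psi_X(X)F^{K-1}(X)\indicator{X\leq r}\right)$, from which $\mathfrak{R}'(r)=\psi_X(r)f(r)F^{K-1}(r)$ is immediate by the fundamental theorem of calculus, whereas you obtain this same derivative by differentiating $\Phi$ directly and observing the cancellation; the sign analysis (decreasing then increasing, $\Phi(0^+)=0$, $\Phi(1)>0$) and the conclusion are identical.
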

\begin{proof}
We have by integration by parts  
\begin{align*}
&(K-1)\Exp{\bigg(XF^{K-2}(X)(1-F(X))\indicator{X\leq r}\bigg)}
\\&=r(1-F(r))F^{K-1}(r)+\Exp{\bigg(\vValue_X(X) F^{K-1}(X)\indicator{X\leq r}\bigg)}\;.
\end{align*}
Hence finding the root of 
$$
\frac{K-1}{r(1-F(r))}\Exp{\bigg(XF^{K-2}(X)(1-F(X))\indicator{X\leq r}\bigg)} = F^{K-1}(r)
$$
amounts to finding the root(s) of 
$$
\mathfrak{R}(r)\triangleq\Exp{\bigg(\vValue_X(X) F^{K-1}(X)\indicator{X\leq r}\bigg)}=0\;.
$$
0 is an obvious root of the above equation but does not work for the penultimate one.
Note that for the class of distributions we consider (which is much larger than regular distributions but contains it), this function $\mathfrak{R}$ is decreasing and then increasing after $\vValue_X^{-1}(0)$, since the virtual value is negative and then positive. Since $\mathfrak{R}(0)=0$, it will have at most one non-zero root for the distributions we consider. The fact that this function is positive at infinity (or at the end of the support of $X$) comes from the fact that its value then is the revenue-per-buyer of a seller performing a second price auction with $K$ symmetric buyers bidding truthfully with a reserve price of 0. And this is by definition positive. 
So we have shown that for regular distributions and the much broader class of distributions we consider the function $\mathfrak{R}$ has exactly one non-zero root. 
\end{proof}
\subsection{Existence of the Nash equilibrium}
\subsubsection{Best response strategy: one strategic case} 
\begin{definition}[Thresholded bidding strategies]\label{definition_thresholded_strategies}
 A bidding strategy $\beta$ is called a thresholded bidding strategy if and only if there exists $r>0$ such that for all $x < r, h_{\beta}(x) = \psi_B(\beta(x)) = 0$. This family of functions can be parametrized with
\begin{equation*}
\beta_r^\gamma(x)=\frac{\gamma(r)(1-F(r))}{1-F(x)}\indicator{x< r}+\gamma(x)\indicator{x\geq r}\;,
\end{equation*}
with $r \in \mathbb{R}$ and $\gamma: \mathbb{R} \rightarrow \mathbb{R}$ strictly increasing.
\end{definition}
\begin{lemma}
\label{lemma:directionalDerivativesThresholdedStrategiesRight}
Suppose that only one bidder is strategic, let G denote the CDF of the maximum value of the competition and g the corresponding pdf. Denote by $U(\beta_r^\gamma)$ the utility of the bidder using the strategy $\beta_r^\gamma$ according to the parametrization of Definition \ref{definition_thresholded_strategies}. Assume that the seller is welfare-benevolent. Then if $\beta_t=\beta_r^{\gamma+t\rho}$,
\begin{align*}
&\left.\frac{\partial U(\beta_t)}{\partial t}\right|_{t=0}=\mathbb{E}\bigg((X-\gamma(X))g(\gamma(X))\rho(X)\indicator{X\geq r}\bigg) +\rho(r)\\
&(1-F(r))\bigg(\mathbb{E}\bigg(\frac{X}{1-F(X)}g\left(\frac{\gamma(r)(1-F(r))}{1-F(X)}\right)\indicator{X\leq r}\bigg)-G(\gamma(r))\bigg)
\end{align*}
\begin{align*}
\frac{\partial U(\beta_r^\gamma)}{\partial r}=&-h_\beta(r)f(r)
\\&\bigg(\mathbb{E}\bigg(\frac{X}{1-F(X)}g\left(\frac{\gamma(r)(1-F(r))}{1-F(X)}\right)\indicator{X\leq r}\bigg)-G(\gamma(r))\bigg)
\end{align*}
The only strategy $\gamma$ and threshold $r$ for which we can cancel the derivatives in all directions $\rho$ consists in bidding truthfully beyond $r^*$, where $r^*$ satisfies the equation 
\begin{equation}\label{eq:keyEqThreshStratOneStrategic}
G(r^*)=\mathbb{E}\bigg(\frac{X}{1-F(X)}g\left(\frac{r^*(1-F(r^*))}{1-F(X)}\right)\indicator{X\leq r^*}\bigg)\;.
\end{equation}
\end{lemma}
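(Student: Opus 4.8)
The plan is a direct computation of the bidder's utility for the strategy $\beta_r^\gamma$ of Definition~\ref{definition_thresholded_strategies}, followed by its two partial derivatives and the extraction of the zero-gradient condition. First I would record the utility. By the integrated Myerson lemma (Appendix~\ref{appendix_lemma1}) together with $h_\beta=\psi_{B}\circ\beta$ (Lemma~\ref{definition_psi}), any increasing strategy $\beta$ facing a fixed competition with cdf $G$ in a lazy second price auction whose reserve is the monopoly price of $F_B$ gives the bidder utility $U(\beta)=\mathbb{E}\big[(X-h_\beta(X))G(\beta(X))\indicator{X\geq h_\beta^{-1}(0)}\big]$. For $\beta=\beta_r^\gamma$ one has $h_{\beta_r^\gamma}\equiv 0$ on $[0,r)$, so the monopoly revenue curve is flat there and, for the thresholded strategies under consideration, this flat value is its maximum; welfare-benevolence then forces reserve value $0$, so the indicator is identically $1$. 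Splitting the expectation at $r$ and using $\beta_r^\gamma(x)=\gamma(r)(1-F(r))/(1-F(x))$ on $[0,r)$ and $\beta_r^\gamma=\gamma$ on $[r,b)$ yields
\begin{equation*}
U(\beta_r^\gamma)=\mathbb{E}\Big[\indicator{X<r}\,X\,G\Big(\tfrac{\gamma(r)(1-F(r))}{1-F(X)}\Big)\Big]+\mathbb{E}\big[\indicator{X\geq r}(X-h_\gamma(X))G(\gamma(X))\big].
\end{equation*}

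Next I would differentiate along $\beta_t=\beta_r^{\gamma+t\rho}$ at $t=0$. The first expectation depends on the perturbation only through $\gamma(r)\mapsto\gamma(r)+t\rho(r)$ and contributes $\rho(r)(1-F(r))\,\mathbb{E}\big[\indicator{X<r}\frac{X}{1-F(X)}g(\tfrac{\gamma(r)(1-F(r))}{1-F(X)})\big]$. The second expectation contributes $\mathbb{E}\big[\indicator{X\geq r}\big((X-h_\gamma(X))g(\gamma(X))\rho(X)-h_\rho(X)G(\gamma(X))\big)\big]$, with $h_\rho(x)=\rho(x)-\rho'(x)(1-F(x))/f(x)$. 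The key manipulation is to integrate the $\rho'$-part of the $h_\rho$-term by parts on $[r,b)$: the boundary term at $b$ vanishes because $1-F(b)=0$, the boundary term at $r$ produces $-\rho(r)(1-F(r))G(\gamma(r))$, the two $f$-weighted interior terms cancel, and the surviving interior term cancels exactly against the $\gamma'(X)\frac{1-F(X)}{f(X)}g(\gamma(X))\rho(X)$ contribution obtained from $X-h_\gamma(X)=(X-\gamma(X))+\gamma'(X)(1-F(X))/f(X)$. Assembling the surviving terms gives the claimed formula for $\left.\partial_t U(\beta_t)\right|_{t=0}$.

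For $\partial_r U(\beta_r^\gamma)$ I would differentiate the two-integral expression in $r$ directly. The two $x=r$ boundary terms combine, via $\beta_r^\gamma(r)=\gamma(r)$, into $f(r)h_\gamma(r)G(\gamma(r))$, while the interior derivative of the first integral uses $\partial_r[\gamma(r)(1-F(r))]=-f(r)h_\gamma(r)$ and produces $-f(r)h_\gamma(r)\mathbb{E}\big[\indicator{X\leq r}\frac{X}{1-F(X)}g(\tfrac{\gamma(r)(1-F(r))}{1-F(X)})\big]$; since $h_\beta(r)=h_\gamma(r)$ this is the stated expression. Finally, for the characterization: asking $\left.\partial_t U(\beta_t)\right|_{t=0}=0$ for every perturbation $\rho$ supported in $[r,b)$ forces $(x-\gamma(x))g(\gamma(x))=0$ a.e.\ there, hence $\gamma(x)=x$ on $[r,b)$ as $g$ is positive on the relevant range; then the coefficient of $\rho(r)$ must vanish and, substituting $\gamma(r)=r$, this is exactly Equation~\eqref{eq:keyEqThreshStratOneStrategic}, which also makes $\partial_r U$ vanish. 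I expect the main obstacle to be the bookkeeping in the integration-by-parts step — tracking all boundary contributions and recognizing the $\gamma'$-cancellation — along with the (routine) justification of differentiating under the integral sign, which follows from the assumed continuity and integrability of $F$, $G$ and their densities.
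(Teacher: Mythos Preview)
Your proposal is correct and follows essentially the same route as the paper: you write the utility as the same two-term expression (using welfare-benevolence to drop the reserve indicator), differentiate in $t$, integrate by parts the $\rho'$-piece of $-h_\rho G(\gamma)$ on $[r,b)$ to produce the boundary term $-\rho(r)(1-F(r))G(\gamma(r))$ and the $\gamma'$-cancellation, and for $\partial_r$ you combine the two boundary contributions with the identity $\partial_r[\gamma(r)(1-F(r))]=-f(r)h_\gamma(r)$, exactly as the paper does. The characterization argument (localize $\rho$ above $r$ to force $\gamma=\mathrm{id}$, then read off the $\rho(r)$-coefficient) is also the paper's.
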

\begin{proof} 
Recall that our revenue in such a strategy (we take $\eps=0$) is just, when the seller is welfare-benevolent (and hence s/he will push the reserve value to 0 as long as $\vValue_B(\beta(x))\geq 0$ for all $x$)
$$
U(\beta_r^\gamma)=\Exp{(X-\vValue_B(\beta(X)))G(\beta(X))\indicator{X\geq r}}+\Exp{XG(\frac{\beta(r)(1-F(r))}{1-F(x)})\indicator{X\leq r}}\;.
$$
Now if $\beta_t=\beta+t \rho$, as usual, we have $\vValue_{B_t}(\beta_t)=h_\beta+t h_\rho$. Because we assumed that $\vValue_B(\beta(r))>0$, changing $\beta$ to $\beta_t$ won't drastically change that; in particular if $\vValue_{B_t}(\beta_t(r))>0$ the seller is still going to take all bids after $\beta_t(r)$.  In particular, we don't have to deal with the fact that the optimal reserve value for the seller may be completely different for $\beta$ and $\beta_t$. So the assumption $\vValue_B(\beta(r))>0$ is here for convenience and to avoid technical nuisances. In any case, we have 
\begin{gather*}
\frac{\partial U(\beta_t)}{\partial t}=\Exp{\left[-h_\rho(X)G(\beta(X))+(X-h_\beta(X))\rho(X)g(\gamma(X))\right]\indicator{X\geq r}}\\
+\Exp{X\rho(r)\frac{1-F(r)}{1-F(X)}g(\frac{\gamma(r)(1-F(r))}{1-F(x)})\indicator{X\leq r}}\;.	
\end{gather*}
Now using integration by parts on $\int_r^{\infty} \rho'(x)(1-F(x))G(\gamma(x))dx$, we have 
\begin{align*}
\Exp{-h_\rho(X)G(\beta(X))\indicator{X\geq r}}&=\left.\rho(x)(1-F(x))G(\gamma(x))\right|_{r}^\infty\\
&+\int_r^{\infty}[\rho(x)f(x)G(\gamma(x))-\rho(x)\beta'(x)g(\gamma(x))(1-F(x))]dx\\
&-\int_r^\infty \rho(x)f(x) G(\gamma(x))dx\\
&=-\rho(r)(1-F(r))G(\beta(r))-\Exp{\rho(X)\beta'(X)g(\gamma(X))\frac{1-F(X)}{f(X)}}\;.	
\end{align*}
Hence, 
\begin{gather*}
\frac{\partial U(\beta_t)}{\partial t}=\Exp{(X-\gamma(X))\rho(X)g(\gamma(X))\indicator{X\geq r}}-\rho(r)(1-F(r))G(\gamma(r))\\
+\rho(r)(1-F(r))\Exp{\left(\frac{X}{1-F(X)}\right)g(\frac{\gamma(r)(1-F(r))}{1-F(x)})\indicator{X\leq r}}\;.
\end{gather*}
This gives the first equation of Lemma \ref{lemma:directionalDerivativesThresholdedStrategiesRight}.

On the other hand, we have
\begin{gather*}
\frac{\partial U(\beta_r^\gamma)}{\partial r}=-(r-\vValue_B(\gamma(r)))G(\gamma(r))f(r)+rG(\gamma(r))f(r)\\
+\Exp{\frac{X}{1-F(X)}g(\frac{\gamma(r)(1-F(r))}{1-F(x)})\indicator{X\leq r}}\left[\gamma'(r)(1-F(r))-\beta(r)f(r)\right]\;.
\end{gather*}
Since 
$$
\left[\gamma'(r)(1-F(r))-\gamma(r)f(r)\right]=-f(r)\vValue_B(\beta(r))\;,
$$
we have established that 
\begin{gather*}
\frac{\partial U(\beta_r^\gamma)}{\partial r}=
\vValue_B(\gamma(r))f(r)\left[G(\gamma(r))-\Exp{\frac{X}{1-F(X)}g(\frac{\gamma(r)(1-F(r))}{1-F(x)})\indicator{X\leq r}}\right]\;.
\end{gather*}
We see that the only strategy $\beta$ and threshold $r$ for which we can cancel the derivatives in all directions $\rho$ consists in bidding truthfully beyond $r$, where $r$ satisfies the equation 
\begin{equation*}
G(r)=\Exp{\frac{X}{1-F(X)}g\left(\frac{r(1-F(r))}{1-F(X)}\right)\indicator{X\leq r}}\;.
\end{equation*}
Lemma \ref{lemma:directionalDerivativesThresholdedStrategiesRight} is shown.
\end{proof}
\subsubsection{Proof of existence of the Nash equilibrium}
$\bullet$ \textbf{On Equation \eqref{eq:keyEqThreshStratOneStrategic} and consequences}

Recall the statement of Equation \eqref{eq:keyEqThreshStratOneStrategic}. 
\begin{equation}\label{eq:keyeqOneStratExistence}
\Exp{\frac{X}{1-F(X)}g\left(\frac{r(1-F(r))}{1-F(X)}\right)\indicator{X\leq r}}=G(r)\;.
\tag{\ref{eq:keyEqThreshStratOneStrategic}}
\end{equation}
\begin{lemma}\label{lemma:atMostOneNonTrivialSoln}
Suppose $X$ has a regular distribution that is compactly supported (on [0,1] for convenience). 	
Equation  \eqref{eq:keyeqOneStratExistence} can be re-written as 
\begin{equation}\label{eq:keyeqOneStratExistenceAltForm}
\Exp{\frac{X}{1-F(X)}g\left(\frac{r(1-F(r))}{1-F(X)}\right)\indicator{X\leq r}}=G(r)+
\frac{1}{r(1-F(r))}\Exp{\vValue_X(X)G\left(\frac{r(1-F(r))}{1-F(X)}\right)\indicator{X\leq r}}\;.
\end{equation}
Equation \eqref{eq:keyeqOneStratExistence}	has at most one solution on (0,1). This possible root is greater than $\vValue^{-1}(0)$. 0 is also a (trivial) solution of  Equation \eqref{eq:keyeqOneStratExistence}.
\end{lemma}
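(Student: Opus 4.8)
The plan is to dispatch the three parts of the statement in order: the algebraic identity \eqref{eq:keyeqOneStratExistenceAltForm}, then ``at most one root in $(0,1)$'', and finally the location of that root relative to $\vValue_X^{-1}(0)$ together with the trivial root $r=0$. I would get \eqref{eq:keyeqOneStratExistenceAltForm} by a single integration by parts, mirroring the one used for the all-strategic case in Lemma~\ref{lemma:keyEqThreshAllStratHasUniqueSolution}. Write $c(r)=r(1-F(r))$, which is positive for $r\in(0,1)$ since the support is $[0,1]$. From $\frac{d}{dx}G\!\big(\tfrac{c(r)}{1-F(x)}\big)=g\!\big(\tfrac{c(r)}{1-F(x)}\big)\tfrac{c(r)f(x)}{(1-F(x))^2}$, the left-hand side of \eqref{eq:keyeqOneStratExistence} times $c(r)$ equals $\int_0^r x(1-F(x))\,d\big[G(\tfrac{c(r)}{1-F(x)})\big]$; integrating that by parts, using $\frac{d}{dx}[x(1-F(x))]=-\vValue_X(x)f(x)$ and $\tfrac{c(r)}{1-F(r)}=r$, the boundary term gives $c(r)G(r)$ and the interior term $\mathbb{E}\big(\vValue_X(X)G(\tfrac{c(r)}{1-F(X)})\indicator{X\le r}\big)$, so dividing by $c(r)$ yields \eqref{eq:keyeqOneStratExistenceAltForm}.

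By that identity, for $r\in(0,1)$ Equation~\eqref{eq:keyeqOneStratExistence} is equivalent to $\mathfrak{R}_1(r)\triangleq\mathbb{E}\big(\vValue_X(X)G(\tfrac{r(1-F(r))}{1-F(X)})\indicator{X\le r}\big)=0$. For $r\le\vValue_X^{-1}(0)$ the integrand is nonpositive ($\vValue_X\le 0$ there, $G\ge 0$) and strictly negative on a set of positive measure, hence $\mathfrak{R}_1(r)<0$ and no root lies in $(0,\vValue_X^{-1}(0)]$. Separately, $r=0$ is a trivial solution of \eqref{eq:keyeqOneStratExistence} because $X>0$ almost surely and $G(0)=0$.

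The crux is to differentiate $\mathfrak{R}_1$. Using $\frac{d}{dr}[r(1-F(r))]=-\vValue_X(r)f(r)$ and the algebraic identity $\tfrac{\vValue_X(x)}{1-F(x)}f(x)=\tfrac{x}{1-F(x)}f(x)-1$, one gets $\mathfrak{R}_1'(r)=\vValue_X(r)f(r)\big[G(r)-\mathbb{E}(\tfrac{\vValue_X(X)}{1-F(X)}g(\tfrac{r(1-F(r))}{1-F(X)})\indicator{X\le r})\big]$; at a root $r_0$, Equation~\eqref{eq:keyeqOneStratExistence} forces $\mathbb{E}(\tfrac{X}{1-F(X)}g(\cdots)\indicator{X\le r_0})=G(r_0)$, so the bracket collapses to $\int_0^{r_0}g\!\big(\tfrac{r_0(1-F(r_0))}{1-F(x)}\big)\,dx>0$, and therefore $\mathfrak{R}_1'(r_0)=\vValue_X(r_0)f(r_0)\int_0^{r_0}g(\cdots)\,dx>0$ since $r_0>\vValue_X^{-1}(0)$ makes $\vValue_X(r_0)>0$. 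Thus every zero of $\mathfrak{R}_1$ in $(0,1)$ is a strict up-crossing; a $C^1$ function that is negative on $(0,\vValue_X^{-1}(0)]$ and only up-crosses can vanish at most once on $(0,1)$, which delivers all three claims.

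I expect Step~3 to be the main obstacle: carrying the $r$-dependence inside $G\!\big(\tfrac{r(1-F(r))}{1-F(x)}\big)$ through the derivative, and recognizing that the two $r$-derivatives conspire (via $\frac{d}{dr}[r(1-F(r))]=-\vValue_X(r)f(r)$) to collapse, once \eqref{eq:keyeqOneStratExistence} is imposed, to a manifestly positive quantity. A secondary point is strictness of the inequalities: $\mathfrak{R}_1'(r_0)>0$ rather than $\ge 0$ needs $g$ to have positive mass on $[r_0(1-F(r_0)),r_0]$, and $\mathfrak{R}_1(r)<0$ on $(0,\vValue_X^{-1}(0)]$ needs $G>0$ near $r$; both hold under the standing assumption that the competition has full support — e.g. in the symmetric all-truthful case $G=F^{K-1}$, $g=(K-1)F^{K-2}f$ — which I would record explicitly.
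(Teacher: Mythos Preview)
Your proof is correct and follows essentially the same route as the paper's: integration by parts to get \eqref{eq:keyeqOneStratExistenceAltForm}, reduction to $I(r)=\mathbb{E}\big(\vValue_X(X)G(\tfrac{r(1-F(r))}{1-F(X)})\indicator{X\le r}\big)=0$, the observation that $I(r)<0$ for $r\le\vValue_X^{-1}(0)$, and the derivative computation showing that $I$ is strictly increasing at every nontrivial root. The one small refinement you make is to evaluate the bracket at a root exactly as $\int_0^{r_0} g\!\big(\tfrac{r_0(1-F(r_0))}{1-F(x)}\big)\,dx$ via the identity $\tfrac{\vValue_X(x)}{1-F(x)}f(x)=\tfrac{x}{1-F(x)}f(x)-1$, whereas the paper simply bounds it using $\vValue_X(x)<x$; your version makes the strict positivity more transparent and your remark about needing $g$ to put mass on $[r_0(1-F(r_0)),r_0]$ is a legitimate technical point the paper leaves implicit.
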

The assumption that $X$ is supported on [0,1] can easily be replaced by the assumption that it is compactly supported, but it made notations more convenient. 
\begin{proof}[Proof of Lemma \ref{lemma:atMostOneNonTrivialSoln}]
We use again integration by parts~:
\begin{gather*}
\Exp{\frac{X}{1-F(X)}g\left(\frac{r(1-F(r))}{1-F(X)}\right)\indicator{X\leq r}}=
\int_0^r\frac{x(1-F(x))}{(1-F(x))^2}g\left(\frac{r(1-F(r))}{1-F(x)}\right)f(x) dx\\
=\left. x(1-F(x))\frac{G\left(\frac{r(1-F(r))}{1-F(x)}\right)}{r(1-F(r))}\right|_0^r
-\frac{1}{r(1-F(r))}\int_0^r (x(1-F(x)))' G\left(\frac{r(1-F(r))}{1-F(x)}\right) dx\\
=G(r)+\frac{1}{r(1-F(r))}\Exp{\vValue_X(X)G\left(\frac{r(1-F(r))}{1-F(X)}\right)\indicator{X\leq r}}\;.
\end{gather*}
So we are really looking at the properties of the solution of 
$$
\frac{1}{r(1-F(r))}\Exp{\vValue_X(X)G\left(\frac{r(1-F(r))}{1-F(X)}\right)\indicator{X\leq r}}=0\;.
$$
We call 
$$
I(r)=\Exp{\vValue_X(X)G\left(\frac{r(1-F(r))}{1-F(X)}\right)\indicator{X\leq r}}\;.
$$
For regular distributions, it is clear that if $r=\vValue^{-1}(0)$, $I(r)<0$.  

Now we note that 
$$
\frac{\partial I(r)}{\partial r}=\vValue_X(r)G\left(\frac{r(1-F(r))}{1-F(r)}\right)-\vValue_X(r)\Exp{\frac{\vValue_X(X)}{1-F(X)}g\left(\frac{r(1-F(r))}{1-F(X)}\right)\indicator{X\leq r}}\;.
$$
Using $\vValue_X(x)=x-(1-F(x))/f(x)< x$, we see that 
$$
\Exp{\frac{\vValue_X(X)}{1-F(X)}g\left(\frac{r(1-F(r))}{1-F(X)}\right)\indicator{X\leq r}}< \Exp{\frac{X}{1-F(X)}g\left(\frac{r(1-F(r))}{1-F(X)}\right)\indicator{X\leq r}}
$$
So if $r$ is a solution of 
$$
\Exp{\frac{X}{1-F(X)}g\left(\frac{r(1-F(r))}{1-F(X)}\right)\indicator{X\leq r}}=G(r)
$$
we have, if $r>\vValue_X^{-1}(0)$ and $r\neq 1$, 
$$
\frac{\partial I(r)}{\partial r}>0\;.
$$
So $r$ needs to be a solution of $I(r)=0$ (which is equivalent to the initial equation for non-trivial solutions) and must have $\frac{\partial I(r)}{\partial r}>0$.

So we have shown that $I$ is a function such that its (non-trivial) zeros are such that $I$ is strictly increasing at those roots. Because $I$ is differentiable and hence continuous, this implies that $I$ can have at most one non-trivial root. (0 is a trivial root of $I(r)=0$, though it is not an acceptable solution of our initial problem.)

We note that the end point of the support of $X$ is also a trivial solution of $I(r)=0$, by the dominated convergence theorem, though not an acceptable solution of our initial problem, as shown by a simple inspection.

\end{proof}

\begin{lemma}\label{lemma:uniqueSolnInOneStrategicCase}
Suppose that $G(0)=0$, $G$ is continuous and either $G(x)=g^{k}(0) x^k+o(x^k)$ near 0 for some $k$ or $G$ is constant near 0.  Assume $f$ has a continuous density near 1 with $f(1)\neq 0$ and the Assumptions of Lemma \ref{lemma:atMostOneNonTrivialSoln} are satisfied. Then Equation 
$$
\mathfrak{G}(r)=\frac{1}{r(1-F(r))}\Exp{\vValue_X(X)G\left(\frac{r(1-F(r))}{1-F(x)}\right)\indicator{X\leq r}}=0
$$
has a unique root in $(0,1)$. 

In particular in this situation there exists an optimal strategy in the class of shading functions defined in Definition \ref{dfn:thresholded_beta} and it is unique. It is defined by being truthful beyond $r^*: \mathfrak{G}(r^*)=0$ and shading in such a way that our virtual value is 0 below $r^*$. 	
\end{lemma}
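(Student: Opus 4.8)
The plan is to prove this by an intermediate–value argument on $(0,1)$, using Lemma~\ref{lemma:atMostOneNonTrivialSoln} to supply uniqueness and reducing existence to producing a sign change of $\mathfrak{G}$. Throughout, write $\phi(r)=r(1-F(r))$ and
\[
E(r)=\Exp{\frac{X}{1-F(X)}\,g\!\Bigl(\tfrac{\phi(r)}{1-F(X)}\Bigr)\indicator{X\le r}},
\]
so that the integration by parts in the proof of Lemma~\ref{lemma:atMostOneNonTrivialSoln} gives $\mathfrak{G}(r)=E(r)-G(r)$ and $\mathfrak{G}(r)=0$ is exactly Equation~\eqref{eq:keyEqThreshStratOneStrategic}. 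First I would record that $\mathfrak{G}$ is continuous on $(0,1)$ (there $\phi(r)>0$ and the integrand is jointly continuous) and that $\mathfrak{G}(\psi_X^{-1}(0))<0$: for $r=\psi_X^{-1}(0)$ and $x\le r$ the argument $\phi(r)/(1-F(x))$ lies in $(0,r]$ and is positive, so under the hypothesis $G(u)=g^k(0)u^k+o(u^k)$ one has $G(\phi(r)/(1-F(x)))>0$ while $\psi_X(x)<0$ on $(0,r)$, making the integral strictly negative. In the alternative case $G\equiv 0$ on $[0,\delta]$ the same holds provided $\delta\le\psi_X^{-1}(0)$ — which is the case in the intended applications, where $\delta$ is at most the monopoly revenue $\psi_X^{-1}(0)(1-F(\psi_X^{-1}(0)))<\psi_X^{-1}(0)$ — since $\phi(\psi_X^{-1}(0))/(1-F(x))$ exceeds $\delta$ for $x$ close enough to $\psi_X^{-1}(0)$, where $\psi_X$ is still negative.

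The crux is the second step: showing $\mathfrak{G}(r)>0$ for $r$ near $1$. I would substitute $u=\phi(r)/(1-F(X))$, i.e. $X=F^{-1}(1-\phi(r)/u)$, which turns $E(r)$ into $\int_{\phi(r)}^{r}F^{-1}\!\bigl(1-\tfrac{\phi(r)}{u}\bigr)\,g(u)\,u^{-1}\,du$. As $r\to1^-$ we have $\phi(r)\to0$ (this is where $f(1)\ne0$ is used, to keep $\psi_X$ and the other quantities bounded near $1$ and to control $F^{-1}$ near $1$), so for each fixed $u$ the integrand tends to $F^{-1}(1)g(u)/u=g(u)/u$ and is dominated by $g(u)/u\ge0$. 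By dominated convergence when $\int_0^1 g(u)u^{-1}\,du<\infty$ (this covers $k\ge2$, and is automatic when $G$ is flat near $0$) and by Fatou otherwise (the case $k=1$, where the limit is $+\infty$), $\liminf_{r\to1^-}E(r)\ge\int_0^1 g(u)u^{-1}\,du$. Since $1/u>1$ on $(0,1)$ and $g\not\equiv0$ (because $G$ is continuous and non-constant, hence carries no point mass), $\int_0^1 g(u)u^{-1}\,du>\int_0^1 g(u)\,du=G(1^-)=\lim_{r\to1^-}G(r)$, so $\mathfrak{G}(r)=E(r)-G(r)>0$ for $r$ sufficiently near $1$. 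With the first step and continuity, $\mathfrak{G}$ changes sign on $(\psi_X^{-1}(0),1)$ and Lemma~\ref{lemma:atMostOneNonTrivialSoln} pins down the zero: there is exactly one root $r^*\in(\psi_X^{-1}(0),1)$.

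For the remaining assertion on the optimal thresholded strategy I would argue as follows. By Lemma~\ref{lemma:directionalDerivativesThresholdedStrategiesRight}, the only strategy in the class of Definition~\ref{dfn:thresholded_beta} at which every directional derivative of the utility vanishes is ``truthful beyond $r$, perceived virtual value $0$ below $r$'' with $r$ solving Equation~\eqref{eq:keyEqThreshStratOneStrategic}, i.e. $r=r^*$; hence $\tilde\beta_{r^*}$ is the unique critical point. That it is in fact the maximizer follows from the sign of $\partial U(\tilde\beta_r)/\partial r=-\psi_X(r)f(r)\,\mathfrak{G}(r)$ (the specialization of Lemma~\ref{lemma:directionalDerivativesThresholdedStrategiesRight}), which is $\le0$ on $(0,\psi_X^{-1}(0))$, $\ge0$ on $(\psi_X^{-1}(0),r^*)$ and $\le0$ on $(r^*,1)$, together with the standard fact that thresholding at the monopoly price weakly dominates truthful bidding: monotonicity of $U(\tilde\beta_r)$ on $[\psi_X^{-1}(0),r^*]$ gives $U(\tilde\beta_{r^*})\ge U(\tilde\beta_{\psi_X^{-1}(0)})\ge U(\beta_{\mathrm{truth}})$, so $\tilde\beta_{r^*}$ beats every other member of the class, and uniqueness of $r^*$ yields uniqueness of the maximizer.

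The main obstacle is the second step, the $r\to1^-$ asymptotics of $\mathfrak{G}$. The change of variables makes the limit transparent, but one must justify interchanging limit and integral — which is exactly where the two alternative hypotheses on $G$ near $0$ and the condition $f(1)\ne0$ are needed (to decide between dominated convergence and Fatou and to control $F^{-1}$) — and the sign change materializes only because continuity of $G$ excludes a point mass, thereby forcing $\int_0^1 g(u)u^{-1}\,du>\int_0^1 g(u)\,du$.
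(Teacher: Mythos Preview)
Your proof is correct and the overall architecture---intermediate-value argument, negativity at $\psi_X^{-1}(0)$, positivity near $1$, uniqueness from Lemma~\ref{lemma:atMostOneNonTrivialSoln}---coincides with the paper's, but the heart of the matter, showing $\mathfrak{G}(r)>0$ for $r$ close to $1$, is handled quite differently. The paper fixes $r$ near $1$, splits the domain $[0,r]$ at a carefully chosen $c$ with $(1-F(r))/(1-F(c))$ small, replaces $G$ by its Taylor polynomial $g^{(k)}(0)u^k$ on $[0,c]$, integrates by parts to extract the dominant term $cf(c)/(1-F(c))^k$, and then controls the piece on $[c,r]$ by direct lower bounds; the case where $G$ vanishes near $0$ is treated separately. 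Your change of variables $u=\phi(r)/(1-F(X))$ turns $E(r)$ into $\int_{\phi(r)}^{r} F^{-1}\bigl(1-\phi(r)/u\bigr)\,g(u)\,u^{-1}\,du$ and then passes to the limit $r\to1^-$ via dominated convergence or Fatou, landing on $\int_0^1 g(u)u^{-1}\,du>\int_0^1 g(u)\,du=G(1)$. This is noticeably cleaner: it unifies the ``Taylor'' and ``locally constant'' hypotheses into a single dichotomy on integrability of $g(u)/u$, and in fact barely uses $f(1)\ne0$---you only need $\phi(r)\to0$ and $F^{-1}(1^-)=1$, both of which follow from continuity of $F$ on $[0,1]$---so your remark that this is ``where $f(1)\ne0$ is used'' is not quite right for your own argument (the paper, by contrast, genuinely leans on $f(1)\ne0$ for its pointwise estimates near the right endpoint).

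One small caveat on the final part: the formula $\partial_r U(\tilde\beta_r)=-\psi_X(r)f(r)\,\mathfrak{G}(r)$ is valid only for $r\ge\psi_X^{-1}(0)$, since for $r<\psi_X^{-1}(0)$ the welfare-benevolent seller keeps the reserve at $\psi_X^{-1}(0)$ rather than dropping it to $0$, and the utility expression (hence its $r$-derivative) is different there. Your conclusion is unaffected, because you separately invoke the fact that thresholding at the monopoly price dominates truthful bidding to cover the branch $r<\psi_X^{-1}(0)$; just do not read the sign on that interval off the same derivative formula.
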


\begin{proof}
We have already seen that this equation has at most one zero on (0,1) so we now just need to show that the function $\mathfrak{G}$ is positive somewhere to have established that it has a zero. Of course, for $r=\vValue^{-1}(0)$, the function is negative.

$\bullet$ \textbf{$\mathbf{G}$ not locally constant near 0}
Since $G$ is a cdf, and hence a non-decreasing function, the first $k$  such that $g^{(k)}(0)\neq 0$ has $g^{(k)}(0)>0$. Otherwise, $G$ would be decreasing around 0. We treat the case where $G$ is constant near 0 later so we now assume that $k$ exists and is finite.

Let $r$ be such that $1-F(r)=\eps$ very small (e.g. $10^{-6}$). Let $c<r$ be such that $(1-F(r))/(1-F(c))<\eta$ very small (e.g. $10^{-3}$) and $\vValue_X(c)>0$. Hence we can use a Taylor approximation to get that 
$$
G\left(\frac{r(1-F(r))}{1-F(x)}\right)\indicator{x\leq c}\simeq g^{(k)}(0)\frac{(r(1-F(r)))^k}{(1-F(x))^{k}} \indicator{x\leq c}\;.
$$
Integrating this out (ignoring at this point possible integration questions), we get 
$$
\Exp{\vValue_X(X)G\left(\frac{r(1-F(r))}{1-F(x)}\right)\indicator{X\leq c}}\simeq 
g^{(k)}(0)(r(1-F(r)))^k \Exp{\frac{\vValue_X(X)}{[1-F(X)]^k}\indicator{X\leq c}}\;.
$$
Now integration by parts shows that, if $k>1$ 
\begin{gather*}
\Exp{\frac{\vValue_X(X)}{[1-F(X)]^k}\indicator{X\leq c}}=\int_0^c x\frac{f(x)}{(1-F(x))^k}-\frac{1}{(1-F(x))^{k-1}}dx\\
=\left.\frac{xf(x)}{(1-F(x))^k}\right|_0^c-\left(1+\frac{1}{k-1}\right)\int_0^c\frac{dx}{[1-F(x)]^{k-1}}\;.
\end{gather*}
If $k=1$, using the fact that $(\ln(1-F(x))'=-f(x)/(1-F(x))$, we have 
\begin{gather*}
\Exp{\frac{\vValue_X(X)}{1-F(X)}\indicator{X\leq c}}=\int_0^c x\frac{f(x)}{1-F(x)}-1 dx\\
=\left.-x\ln(1-F(x))\right|_0^c- c+\int_0^c \ln(1-F(x))dx = -c\ln(1-F(c))-c+\int_0^c \ln(1-F(x))dx
\end{gather*}

We now assume that $k>1$; the adjustments for $k=1$ are trivial and are left to the reader. Clearly, when $F(c)$ is close to 1, we have, since we assume that $f(c)\neq 0$,  
$$
\int_0^c\frac{dx}{[1-F(x)]^{k-1}}\leq \frac{c}{(1-F(c))^{k-1}}=o\left(cf(c)/(1-F(x))^{k}\right)\;.
$$
So we have, as $c$ increases so that $F(c)\simeq 1$ (while of course having $(1-F(r))/(1-F(c))<\eta$),
$$
\frac{1}{r(1-F(r))}\Exp{\vValue_X(X)G\left(\frac{r(1-F(r))}{1-F(x)}\right)\indicator{X\leq c}}\sim 
[r(1-F(r))]^{k-1}\, g^{(k)}(0)\, \frac{cf(c)}{(1-F(c))^k}>0\;.
$$

So we have, if $\vValue_X(x)f(x)$ is continuous near $r$, i.e. $f(x)$ is continuous near $r$, 
\begin{gather*}
\Exp{\vValue_X(X)G\left(\frac{r(1-F(r))}{1-F(X)}\right)\indicator{c\leq X\leq r}}\geq G\left(\frac{r(1-F(r))}{1-F(c)}\right)\Exp{\vValue_X(X)\indicator{c\leq X\leq r}}\\
\simeq 
G\left(\frac{r(1-F(r))}{1-F(c)}\right)(r-c)(rf(r)-(1-F(r))
\simeq G\left(\frac{r(1-F(r))}{1-F(c)}\right)(r-c) rf(r)
\;.
\end{gather*}
Now we note that using a Taylor expansion of $1-F(x)$ around $r$, we have
$$
x\simeq r+ \frac{F(x)-F(r)}{f(r)}\;.
$$
So we see that 
\begin{gather*}
\frac{1}{r(1-F(r))}
\Exp{\vValue_X(X)G\left(\frac{r(1-F(r))}{1-F(X)}\right)\indicator{c\leq X\leq r}}\\
\simeq rG\left(\frac{r(1-F(r))}{1-F(c)}\right) \frac{F(r)-F(c)}{1-F(r)}
\simeq rG\left(\frac{r(1-F(r))}{1-F(c)}\right) \left(\frac{1-F(c)}{1-F(r)}-1\right)\;.
\end{gather*}
If now we take $c_2$ such that $1-F(r)/(1-F(c_2))=1/3$, the reasoning above applies and we have 
$$
\frac{1}{r(1-F(r))}
\Exp{\vValue_X(X)G\left(\frac{r(1-F(r))}{1-F(X)}\right)\indicator{c_2\leq X\leq r}}
\geq  rG\left(r/3\right) >0\;.
$$
Because $\pi(r)=(1-F(r))/(1-F(x))$ is increasing, we have $c\leq c_2$, since $\eta=\pi(c)\leq \pi(c_2)=1/3$ . So  we have 
$$
\frac{1}{r(1-F(r))}
\Exp{\vValue_X(X)G\left(\frac{r(1-F(r))}{1-F(X)}\right)\indicator{c\leq X\leq c_2}}\geq 0\;.
$$
Of course the choice of 1/3 above is arbitrary and it could be replaced by any fixed number $s<1$ such that $G(sr)>0$.
We conclude that $\mathfrak{G}$ is positive in a neighborhood of $1$. \\
$\bullet$ \textbf{$\mathbf{G}$ locally constant near 0}
In this case we can pick $c$ such that 
$$
\Exp{\vValue_X(X)G\left(\frac{r(1-F(r))}{1-F(x)}\right)\indicator{X\leq c}}=0\;.
$$
If $c$ is such that $\vValue(c)>0$ our arguments above immediately carry through. 
In fact we can ensure that this is always true by picking such a $c$ and picking a corresponding $r$ as function of the ratio $(1-F(r))/(1-F(c))$ we would like. 

So we conclude that even in this case, 
$\mathfrak{G}$ is positive in a neighborhood of $1$
\end{proof}

\begin{theorem}
We consider the symmetric case and assume that bidders have a compactly supported and regular distribution. We assume for simplicity that the distribution is supported on [0,1]. 

Suppose this distribution has density that is continuous at 0 and 1 with $f(1)\neq 0$. 

Then there is a unique symmetric equilibrium in the class of shading functions defined in Definition \ref{dfn:thresholded_beta}. 

It is found by solving Equation \eqref{eq:keyEqThreshStratAllStrategic} to determine $r^*$ and bidding truthfully beyond $r^*$.
	
\end{theorem}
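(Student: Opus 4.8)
The plan is to assemble the pieces established in the preceding lemmas around one observation: a symmetric profile in which every bidder plays a thresholded strategy $\beta=\beta_r^\gamma$ in the sense of Definition~\ref{definition_thresholded_strategies} is a Nash equilibrium \emph{within that class} if and only if $\beta$ is a best response, within that class, to the competition distribution $G_\beta(x)=F^{K-1}(\beta^{-1}(x))$ it itself generates. So the strategy is first to characterise best responses and reduce the candidate equilibrium to a one-parameter family, then to turn the self-consistency requirement into a scalar equation, and finally to invoke the uniqueness result for that equation.

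First I would fix an arbitrary competitor profile $\beta=\beta_r^\gamma$ and apply Lemma~\ref{lemma:directionalDerivativesThresholdedStrategiesRight} with $G=G_\beta$: the only strategy in the thresholded class whose directional derivatives all vanish bids truthfully beyond some threshold, so any best response to $G_\beta$ --- and in particular any symmetric equilibrium strategy $\beta$ --- must be of the form $\beta_\rho^{\mathrm{id}}$, i.e. it shades so that $\psi_B(\beta(x))=0$ for $x<\rho$ and bids truthfully for $x\ge\rho$. This strategy is continuous and strictly increasing on $[0,1]$, equals the identity on $[\rho,1]$, and has range $[\,\rho(1-F(\rho)),1\,]$; hence $G_{\beta_\rho^{\mathrm{id}}}$ is continuous, vanishes identically on $[0,\rho(1-F(\rho))]$ (so it is locally constant near $0$), and coincides with $F^{K-1}$ near $1$, where its density is $(K-1)F^{K-2}(x)f(x)$, continuous and nonzero at $1$ because $f(1)\neq 0$. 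Together with $X$ being regular and compactly supported, these are exactly the hypotheses of Lemma~\ref{lemma:uniqueSolnInOneStrategicCase}, which then yields a \emph{unique} best response in the thresholded class to $G=G_{\beta_\rho^{\mathrm{id}}}$: bidding truthfully beyond the unique root $r^\star(\rho)\in(0,1)$ of the one-strategic equation~\eqref{eq:keyEqThreshStratOneStrategic} attached to that $G$.

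Next I would impose self-consistency. The profile $\beta_\rho^{\mathrm{id}}$ is a symmetric equilibrium in the class precisely when $\rho=r^\star(\rho)$. Substituting $G_\beta(x)=F^{K-1}(\beta^{-1}(x))$ and $g_\beta(\beta(x))=(K-1)F^{K-2}(x)f(x)/\beta'(x)$ with $\beta=\beta_\rho^{\mathrm{id}}$, $\beta'(x)=\rho(1-F(\rho))f(x)/(1-F(x))^2$ on $[0,\rho)$ and $\beta(\rho)=\rho$, into~\eqref{eq:keyEqThreshStratOneStrategic} turns the fixed-point condition into exactly Equation~\eqref{eq:keyEqThreshStratAllStrategic} --- this is the computation that closes the proof of Lemma~\ref{lemma:directionalDerivativesThresholdedStrategiesSymmetric}. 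Also, $\rho=0$ is ruled out, since $r^\star(0)$ is the root of~\eqref{eq:keyEqThreshStratOneStrategic} for $G=F^{K-1}$ and Lemma~\ref{lemma:uniqueSolnInOneStrategicCase} places it strictly inside $(0,1)$; so bidding truthfully everywhere is not an equilibrium. Finally, because $X$ is regular, $\psi_X$ crosses $0$ exactly once and is positive beyond, so Lemma~\ref{lemma:keyEqThreshAllStratHasUniqueSolution} gives Equation~\eqref{eq:keyEqThreshStratAllStrategic} a unique non-zero root $r^\star$. Hence $\beta_{r^\star}^{\mathrm{id}}$ is the unique symmetric Nash equilibrium in the class of Definition~\ref{dfn:thresholded_beta}, obtained by bidding truthfully beyond $r^\star$.

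I expect the main obstacle to be the existence step --- more precisely, checking that the competition distribution $G_{\beta_\rho^{\mathrm{id}}}$ induced by the candidate profile really meets the hypotheses of Lemma~\ref{lemma:uniqueSolnInOneStrategicCase} (in particular that it is locally constant, indeed zero, near $0$, which activates the ``$G$ locally constant near $0$'' branch of that lemma, and that its density behaves well at the top of the support), together with the bookkeeping that shows the one-strategic equation~\eqref{eq:keyEqThreshStratOneStrategic} specialised to this $G$ is literally the symmetric equation~\eqref{eq:keyEqThreshStratAllStrategic}. Once these are in hand, uniqueness is immediate from Lemma~\ref{lemma:keyEqThreshAllStratHasUniqueSolution} and the directional-derivative characterisation.
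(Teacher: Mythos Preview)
Your proposal is correct and follows essentially the same route as the paper: reduce candidate equilibria to the truthful-beyond-$r$ family via the directional-derivative characterisation, invoke Lemma~\ref{lemma:uniqueSolnInOneStrategicCase} for existence and uniqueness of the best response against the induced competition, and identify the fixed-point condition with Equation~\eqref{eq:keyEqThreshStratAllStrategic}, whose unique non-zero root is supplied by Lemma~\ref{lemma:keyEqThreshAllStratHasUniqueSolution}. Your write-up is in fact more careful than the paper's in one respect: you explicitly verify that the competition $G_{\beta_\rho^{\mathrm{id}}}$ is identically zero on $[0,\rho(1-F(\rho))]$ and hence falls under the ``$G$ locally constant near $0$'' branch of Lemma~\ref{lemma:uniqueSolnInOneStrategicCase}, a hypothesis the paper invokes without spelling out.
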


\begin{proof}
We already know that there is at most one solution since Equation \eqref{eq:keyEqThreshStratAllStrategic} has exactly one solution. 	

If all the bidders but one put themselves at this strategy, we know from Lemma \ref{lemma:uniqueSolnInOneStrategicCase}, which applies because of our assumptions on $f$, that the optimal strategy for bidder one is unique in the class we consider and consists in using a shading that is truthful beyond $r$. This $r$ is uniquely determined by Equation \eqref{eq:keyEqThreshStratOneStrategic} but given the shading used by the other players we know that the $r$ determined by Equation \eqref{eq:keyEqThreshStratAllStrategic} is a solution. Hence we have an equilibrium. 
\end{proof}
\textbf{Uniform[0,1] example} When $K=2$, the solution of Equation \eqref{eq:keyEqThreshStratAllStrategic} and hence the equilibrium is obtained at $r=3/4$. For $K=3$, $r=2/3$; $K=4$ gives $r=15/24=0.625$; $K=5$ gives $r=.6$.


With appropriate shading functions, the bidders can recover the utility they would get when the seller was not optimizing her mechanism to maximize her revenue. Nevertheless, this equilibrium can be weakly collusive since we restrict ourselves to the class of functions introduced in Definition \ref{dfn:thresholded_beta}. It is not obvious that the strategy exhibited in Theorem \ref{thm:revBuyerInThreshStrategy} is an equilibrium in a larger class of functions. However, as mentioned previously, from a practical standpoint, as of now there exists no other clear way to increase drastically bidders utility that is independent from a precise estimation of the competition. The fact that at symmetric equilibrium bidders recover the same utility as in a second price auction with no reserves arguably makes it  an even more natural class of bidding strategies to consider from the bidder standpoint.

\subsection{Equivalence of revenue}

\begin{theorem}\label{thm:revBuyerInThreshStrategy}
Suppose we are in a symmetric situation and all buyers use the symmetric optimal strategy described above. 

Then the revenue of the seller is the same as in a second price auction with no reserve price. The same is true of the revenue of the buyers. 	
	
\end{theorem}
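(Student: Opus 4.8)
The plan is to reduce both assertions to the single algebraic identity that characterizes the equilibrium threshold $r^*$: after the integration by parts carried out in the proof of Lemma~\ref{lemma:keyEqThreshAllStratHasUniqueSolution}, Equation~\eqref{eq:keyEqThreshStratAllStrategic} evaluated at $r^*$ is equivalent to $\mathfrak{R}(r^*)=\mathbb{E}\big(\psi_X(X)F^{K-1}(X)\mathbf{1}(X\le r^*)\big)=0$. Once this is in hand, both statements come out almost immediately, so the bulk of the work is organizing the ingredients cleanly.

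First I would fix notation for the equilibrium configuration: every bidder uses $\beta=\beta_{r^*}^{\gamma}$ of Definition~\ref{definition_thresholded_strategies} with $\gamma$ the identity, so $\beta$ is strictly increasing, equals the identity beyond $r^*$, and satisfies $\psi_B(\beta(x))=h_\beta(x)=0$ for $x<r^*$. The key preliminary point I would establish is that the welfare-benevolent seller leaves all bids in the auction. Since $r^*>\psi_X^{-1}(0)$ (Lemmas~\ref{lemma:atMostOneNonTrivialSoln} and~\ref{lemma:uniqueSolnInOneStrategicCase}), the perceived virtual value $\psi_B$ is nonnegative on the whole support of $F_B$ --- it vanishes below $\beta(r^*)$ and equals $\psi_X\ge 0$ above it --- hence $R_B'=-f_B\psi_B\le 0$, the monopoly revenue $R_B$ is nonincreasing, and its lowest maximizer (the one the welfare-benevolent seller selects) is the bottom of the support of $F_B$. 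Consequently every submitted bid clears the reserve, and, $\beta$ being strictly increasing, the good always goes to the highest-value bidder --- exactly the allocation of a second-price auction with no reserve.

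Next I would compute the seller's revenue. By symmetry the total revenue is $K$ times a single bidder's expected payment $M(\beta)$, and the integrated Myerson lemma (Lemma~1) gives $M(\beta)=\mathbb{E}\big(\psi_B(\beta(X))\,G_\beta(\beta(X))\big)$, where in the symmetric case $G_\beta(\beta(x))=F^{K-1}(x)$. Substituting $\psi_B(\beta(X))=\psi_X(X)\mathbf{1}(X\ge r^*)$ gives $M(\beta)=\mathbb{E}\big(\psi_X(X)F^{K-1}(X)\big)-\mathfrak{R}(r^*)$, which, using $\mathfrak{R}(r^*)=0$, equals $\mathbb{E}\big(\psi_X(X)F^{K-1}(X)\big)$; and this is precisely a bidder's expected payment in a second-price auction with no reserve under truthful bidding (Lemma~1 with $\beta$ the identity and $r=0$). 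So the seller's revenue coincides with that of the no-reserve second-price auction.

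Finally, for the buyers' utility I would invoke the welfare identity rather than recompute everything: the expected social welfare $\mathbb{E}(\max_i X_i)$ is the same under both mechanisms because both allocate efficiently (second paragraph), and welfare equals the seller's revenue plus the sum of the buyers' utilities; since the seller's revenue is the same, so is the aggregate buyer utility, and symmetry then forces each individual utility to agree. (Equivalently, one can substitute $\beta$ into the equilibrium utility $\mathbb{E}\big((X-\psi_X(X))F^{K-1}(X)\mathbf{1}(X\ge r^*)\big)+\mathbb{E}\big(XF^{K-1}(X)\mathbf{1}(X<r^*)\big)$ and collapse it with $\mathfrak{R}(r^*)=0$ to $\mathbb{E}\big((X-\psi_X(X))F^{K-1}(X)\big)$, the no-reserve second-price utility.) The one genuinely delicate step is the preliminary observation --- certifying that the monopoly price of $F_B$ sits at the very bottom of its support --- which is where regularity of $F$, the bound $r^*>\psi_X^{-1}(0)$, and the flatness of $\psi_B$ below $r^*$ are all used; everything else is bookkeeping around the equation $\mathfrak{R}(r^*)=0$.
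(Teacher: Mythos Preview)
Your proposal is correct and follows essentially the same approach as the paper: both arguments reduce to the single identity $\mathfrak{R}(r^*)=0$, derive the seller's per-buyer payment as $\mathbb{E}\big(\psi_X(X)F^{K-1}(X)\big)$ via Myerson's lemma, and for the buyer's side your parenthetical alternative is verbatim the paper's computation. The only differences are cosmetic improvements on your part: you justify explicitly why the welfare-benevolent seller pushes the reserve to the bottom of the support (the paper simply asserts ``the reserve value has been sent to 0''), and your primary route for the buyer's utility via the welfare identity $\mathbb{E}(\max_i X_i)=\text{seller revenue}+\text{total buyer utility}$ is a clean shortcut that the paper does not use, though it lands at the same formula.
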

Interestingly, the theorem shows that this shading strategy completely cancels the effect of the reserve price. This is a result akin to our result on the Myerson auction. 

\begin{proof}
The revenue of the seller per buyer is 
$$
\Exp{\vValue_X(X)F^{K-1}(X)\indicator{X\geq r^*}}, \text{ with }\mathfrak{R}(r^*)=0\;.
$$	
Hence it is also 
$$
\Exp{\vValue_X(X)F^{K-1}(X)\indicator{X\geq r^*}}+\mathfrak{R}(r^*)=\Exp{\vValue_X(X)F^{K-1}(X)}\;.
$$
This is exactly the revenue of the seller in a second price auction with no reserve price. 

From the buyer standoint, his/her revenue is, since all buyers are using the same increasing strategy and the reserve value has been sent to 0, 
$$
\Exp{(X-\vValue_B(\beta(X;r^*)))F^{K-1}(X)}=\Exp{X F^{K-1}(X)\indicator{X\leq r^*}}+
\Exp{(X-\vValue_X(X)) F^{K-1}(X)\indicator{X> r^*}}\;.
$$
We know however that $\mathfrak{R}(r^*)=0$ and therefore
$$
\Exp{X F^{K-1}(X)\indicator{X\leq r^*}}=\Exp{(X-\vValue_X(X)) F^{K-1}(X)\indicator{X\leq r^*}}\;.
$$
Summing things up we get that his/her payoff
$$
\Exp{(X-\vValue_X(X)) F^{K-1}(X)}
$$
as in a second price auction with no reserve. 
\end{proof}
\section{Proofs for Subsections \ref{subsec:PhaseTransitionThresholding} and \ref{subsec:ExistenceNashEquilibriumIn2Stage}}
\label{sec:ProofsSec3Thresholding}
\paragraph{Proof of Theorem \ref{thm:TwoStepProcessCommitment}}
We recall the statement of the theorem.
\begin{theorem}
We call ${\mathcal G}_\alpha=\alpha G_1+(1-\alpha) G_2$. 
If the buyer uses the thresholding strategy $\tbeta$ of Definition \ref{dfn:thresholded_beta}  and commits to it, we have for their utility, if the seller is welfare benevolent~:
if $x_1\geq \psi^{-1}(0)$, 
\begin{equation}
U(x_1)=\Exp{X{\mathcal G}_\alpha\left(\frac{x_1(1-F(x_1))}{1-F(X)}\right)\indicator{X\leq x_1}}+\Exp{(X-\psi(X)){\mathcal G}_\alpha(X)\indicator{X\geq x_1}}\;.
\end{equation}
if $x_1<\psi^{-1}(0)$, 
\begin{equation}
U(x_1)=\alpha\Exp{X G_1\left(\frac{x_1(1-F(x_1))}{1-F(X)}\right)\indicator{X\leq x_1}}+\Exp{(X-\psi(X))\left[\alpha G_1(X)\indicator{X\geq x_1}+(1-\alpha) G_2(X)\indicator{X\geq \psi^{-1}(0)}\right]}\;.
\end{equation}

This utility has (in general) a discontinuity at $x_1=\psi^{-1}(0)$. For $x_1<\psi^{-1}(0)$, we have $U(0)\geq U(x_1)$. For $x_1>\psi^{-1}(0)$, the first order condition are the same as in \cite{nedelec2018thresholding,tang2016manipulate}, i.e. 
\begin{equation*}
{\mathcal G}_\alpha(r)=\Exp{\frac{X}{1-F(X)}g_\alpha\left(\frac{r(1-F(r))}{1-F(X)}\right)\indicator{X\leq r}}\;.
\end{equation*}
where the distribution of the competition is now ${\mathcal G}_\alpha=\alpha G_1+(1-\alpha) G_2$ and $g_\alpha$ is its density. Call $x_1^*(\alpha)$ the unique solution of this problem. 

Hence, the optimal threshold is $\argmax_{x_1\in \{0,x_1^*(\alpha)\}} U(x_1)$. An optimal threshold at $x_1=0$ corresponds to bidding truthfully. 

\end{theorem}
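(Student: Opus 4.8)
The plan is to push everything through the integrated version of Myerson's lemma (Lemma~1, Appendix~\ref{appendix_lemma1}) after determining, in each regime of $x_1$, the reserve a welfare-benevolent seller actually charges. Write $R=x_1(1-F(x_1))$. By Lemma~\ref{definition_psi}, the thresholding strategy $\tbeta$ has $h_{\tbeta}(x)=\psi_B(\tbeta(x))=0$ on $[0,x_1]$ (where $\tbeta(x)=R/(1-F(x))$) and $h_{\tbeta}(x)=\psi(x)$ on $(x_1,b)$ (where $\tbeta(x)=x$). First I would compute the revenue curve of the induced bid distribution: a short pushforward calculation gives $\tilde F_B(r)=0$ for $r\le R$, $\tilde F_B(r)=1-R/r$ for $r\in[R,x_1]$, and $\tilde F_B(r)=F(r)$ for $r>x_1$, hence $R_B(r)=r$ on $[0,R]$, $R_B(r)=R$ on $[R,x_1]$, and $R_B(r)=R_X(r)$ on $(x_1,b)$. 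Since $F$ is regular, $R_X'(r)=-f(r)\psi(r)$, so $R_X$ increases up to $\psi^{-1}(0)$ and decreases afterwards. Thus if $x_1\ge\psi^{-1}(0)$, $\max_r R_B=R$ is attained exactly on $[R,x_1]$ and the welfare-benevolent price is $r^*_\beta=R$, which excludes no bidder almost surely (the bid support is $[R,b]$ with no atom at $R$); if $x_1<\psi^{-1}(0)$, then $\max_r R_B=R_X(\psi^{-1}(0))$ is attained only at $\psi^{-1}(0)$, so $r^*_\beta=\psi^{-1}(0)$.

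For each phase $i$, Lemma~1 gives the bidder's utility as $U_i=\lE_X[(X-h_{\tbeta}(X))G_i(\tbeta(X))\indicator{\tbeta(X)\ge r_i}]$, with the phase-1 reserve folded into $G_1$ (so $r_1$ is innocuous) and $r_2=r^*_\beta$. Substituting $h_{\tbeta}$, $\tbeta$ and the reserves above and forming $\alpha U_1+(1-\alpha)U_2$ produces the two announced formulas. When $x_1\ge\psi^{-1}(0)$ the phase-2 reserve is also innocuous, every $X$ can win in both phases, and $\mathcal G_\alpha=\alpha G_1+(1-\alpha)G_2$ appears throughout, giving~(1). When $x_1<\psi^{-1}(0)$, all phase-2 bids $\tbeta(X)$ with $X\le x_1$ lie in $[R,x_1]\subset[0,\psi^{-1}(0))$ and are rejected by the reserve, so in phase 2 the bidder wins only on $\{X\ge\psi^{-1}(0)\}$, where $\tbeta$ coincides with the identity and $h_{\tbeta}=\psi$; this gives exactly the $x_1$-independent phase-2 term of~(2), the phase-1 term being unchanged.

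The two qualitative claims then follow quickly. For the discontinuity at $x_1=\psi^{-1}(0)$: subtracting the left limit of~(2) (obtained by dominated convergence) from the value of~(1) at $x_1=\psi^{-1}(0)$, the phase-1 pieces and the $(1-\alpha)$-weighted phase-2 piece cancel (the surviving indicators differ only on a null set), and what remains is $(1-\alpha)\lE_X[X\,G_2(\psi^{-1}(0)(1-F(\psi^{-1}(0)))/(1-F(X)))\indicator{X\le\psi^{-1}(0)}]$, strictly positive whenever $\alpha<1$ and the phase-2 competition has mass below $\psi^{-1}(0)$ --- hence generically a positive jump, corresponding to the seller giving up its reserve. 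For $U(0)\ge U(x_1)$ when $x_1<\psi^{-1}(0)$: since the phase-2 term of~(2) is independent of $x_1$, $U(x_1)-U(0)=\alpha(U_1(x_1)-U_1(0))$, and $U_1$ is maximized by truthful bidding --- the phase-1 reserve and competition being exogenous, truthfulness is per-realization optimal in a second-price auction, or equivalently one applies Lemma~\ref{lem:variation_from_truthful} with $\beta=\mathrm{id}$ and $\rho=\tbeta-\mathrm{id}\ge0$; since $\tbeta=\mathrm{id}$ at $x_1=0$, this gives $U_1(x_1)\le U_1(0)$.

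Finally, for the optimal threshold: on $[0,\psi^{-1}(0))$ the supremum of $U$ equals $U(0)$ by the previous point, and on $[\psi^{-1}(0),b)$, $U$ is the smooth function~(1), whose derivative by Theorem~\ref{thm:double_thresholded_derivatives} with $x_0=0$ is $f(x_1)\psi(x_1)(\mathcal G_\alpha(x_1)-\lE_X[\tfrac{X}{1-F(X)}g_\alpha(\tfrac{x_1(1-F(x_1))}{1-F(X)})\indicator{X\le x_1}])$; since $\psi(x_1)\neq0$ beyond $\psi^{-1}(0)$, the critical points are exactly the roots of $\mathcal G_\alpha(r)=\lE_X[\tfrac{X}{1-F(X)}g_\alpha(\tfrac{r(1-F(r))}{1-F(X)})\indicator{X\le r}]$, the equation of \cite{nedelec2018thresholding,tang2016manipulate} with competition $\mathcal G_\alpha$. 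Re-running the uniqueness argument of Lemmas~\ref{lemma:atMostOneNonTrivialSoln}--\ref{lemma:uniqueSolnInOneStrategicCase} for the mixture cdf $\mathcal G_\alpha$ gives a unique root $x_1^*(\alpha)>\psi^{-1}(0)$, which is the maximizer of~(1) on $[\psi^{-1}(0),b)$ because $U\to0$ as $x_1\to b$ (Lemma~\ref{lem:x_1_bounded}). Comparing the two branches, the overall optimal threshold is $\argmax_{x_1\in\{0,x_1^*(\alpha)\}}U(x_1)$, and $x_1=0$ forces $R=0$, hence $\tbeta=\mathrm{id}$, i.e. truthful bidding. The main obstacle is the reserve-price bookkeeping of the first two paragraphs --- pinning down $\tilde F_B$, the shape of $R_B$ and the welfare-benevolent tie-break exactly --- together with checking in the last step that $x_1^*(\alpha)$ is genuinely the maximizer of the upper branch and lies above $\psi^{-1}(0)$, which needs the one-phase argument redone for $\mathcal G_\alpha$.
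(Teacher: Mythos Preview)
Your proposal is correct and follows essentially the same route as the paper's proof: compute the induced virtual value and revenue curve of $\tbeta$, read off the welfare-benevolent reserve in each regime of $x_1$, plug into the Myerson-type utility formula, and then invoke the one-phase first-order/uniqueness lemmas with competition $\mathcal G_\alpha$. Your version is in fact more explicit than the paper's --- you spell out $\tilde F_B$ and $R_B$, compute the jump at $\psi^{-1}(0)$, justify $U(0)\ge U(x_1)$ via Lemma~\ref{lem:variation_from_truthful}, and note that existence of $x_1^*(\alpha)$ needs Lemma~\ref{lemma:uniqueSolnInOneStrategicCase} and not just Lemma~\ref{lemma:atMostOneNonTrivialSoln} --- whereas the paper's argument is terser on each of these points.
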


\begin{proof}[Proof of Theorem \ref{thm:TwoStepProcessCommitment}]

We first note that for the thresholding strategy at level $x_1$ we have
$$
\psi_B(\beta(x))=x\indicator{x\geq x_1} \text{ and } \beta(x)=\frac{x_1(1-F(x_1))}{1-F(x)}\indicator{x\leq x_1}+x\indicator{x>x_1}\;.
$$
Two situations are possible: if the bidder thresholds at $x_1>\psi^{-1}(0)$, then the welfare benevolent seller has an incentive to take all the bids in the second phase as their revenue curve is non-increasing in the value (because $\psi_B(\beta(x))\geq 0$ for all $x$). Hence the reserve price in the second phase is $x_1(1-F(x_1))$ and we have for the utility of the bidder in this case 
$$
U(x_1)=\Exp{X{\mathcal G}_\alpha\left(\frac{x_1(1-F(x_1))}{1-F(X)}\right)\indicator{X\leq x_1}}+\Exp{(X-\psi(X)){\mathcal G}_\alpha(X)\indicator{X\geq x_1}}\;,
$$
as announced in the text. 

On the other hand if $x_1<\psi^{-1}(0)$, the revenue of the seller is constant on $[0,x_1]$, increasing on $[x_1,\psi^{-1}(0)]$ and decreasing beyond $\psi^{-1}(0)$. Hence the optimal reserve price is $\psi^{-1}(0)$ regardless of $x_1<\psi^{-1}(0s)$. In this case we get 
$$
U(x_1)=\alpha\Exp{X G_1\left(\frac{x_1(1-F(x_1))}{1-F(X)}\right)\indicator{X\leq x_1}}+\Exp{(X-\psi(X))\left[\alpha G_1(X)\indicator{X\geq x_1}+(1-\alpha) G_2(X)\indicator{X\geq \psi^{-1}(0)}\right]}\;.
$$
The function $U(x_1)$ is clearly discontinuous at $\psi^{-1}(0)$. 

The fact that the optimal $x_1$ in $[0,\psi^{-1}(0))$ is 0 comes from the fact that the buyer derives no benefit from overbidding below $x_1$ in terms of reserve price in the second stage, so the auctions are effectively not optimized on past bids. Hence bidding truthfully is preferable in both stages. This corresponds to picking $x_1=0$. The claim about first order conditions follows from Appendix \ref{sec:proofNashAlphaZero}. Existence and uniqueness of $x_1^*(\alpha)$ follows from Lemma \ref{lemma:atMostOneNonTrivialSoln}.
\end{proof}

%

\paragraph{Proof of Lemma \ref{lemma:phaseTransition}}
\begin{proof}[Proof of Lemma \ref{lemma:phaseTransition}]
When $G_1=G_2=G$, we have ${\mathcal G}_\alpha=G$. Hence $U(r)$ does not depend on $\alpha$, when $r>\psi^{-1}(0)$.  Hence the optimal utility for the seller is independent of $\alpha$; let us call it $U^*(G)$.

On the other hand, we have 
$$
U(0)=\alpha\Exp{(X-\psi(X))G(X)}+(1-\alpha)\Exp{(X-\psi(X))G(X)\indicator{X\geq \psi^{-1}(0)}}\;.
$$
Obviously, $\Exp{(X-\psi(X))G(X)}> \Exp{(X-\psi(X))G(X)\indicator{X\geq \psi^{-1}(0)}}$. Hence $U(0)$ is increasing in $\alpha$. In the rest of the proof we denote $U(0)$ by $U(0;\alpha)$ to stress its dependence on $\alpha$. 

The results for $\alpha=0$ correspond to \cite{nedelec2018thresholding}.  So we know that as $\alpha\tendsto 0$ $U(0;\alpha)<U^*(G)$. Indeed, thresholding at the monopoly price $r=\psi^{-1}(0)$ beats bidding truthful in that case. And optimal thresholding beats thresholding at the monopoly price.  On the other hand, $U(0;\alpha=1)>U^*(G)$ since truthful bidding is optimal when there is no exploitation phase. Since $U(0;\alpha)$ is an increasing function of $\alpha$, and is obviously continuous, we see that there exists a unique $\alpha_c$ such that if $\alpha<\alpha_c$, it is preferable for the buyer to shade her bid and above it, it is preferable to bid truthful. 

\end{proof}

\paragraph{Proof of Theorem \ref{thm:ExistenceNashEqui2Stage}}
\begin{proof}[Proof of Theorem \ref{thm:ExistenceNashEqui2Stage}]
It is clear that each buyer's best response is either to bid truthful or to threshold optimally in the class of strategies we consider. 	

The theorem follows from Lemma \ref{lemma:phaseTransition}.	\textbf{1) Part 1.} For the first part, we suppose that all players except 1 plays the optimal thresholded strategy characterized by $r^*$ the unique solution of the equation 
$$
\frac{K-1}{r(1-F(r))}\Exp{XF^{K-2}(X)(1-F(X))\indicator{X\leq r}}=F^{K-1}(r)\;.
$$
This gives rise to the competition distribution 
$$
G_\beta(x;r^*)=F^{K-1}(\beta^{-1}(x))=\left(1-\frac{r^*(1-F(r^*))}{x}\right)^{K-1}\indicator{r^*(1-F(r^*))\leq x \leq r^*}+F^{K-1}(x)\indicator{x\geq r^*}\;.
$$
According to Lemma \ref{lemma:phaseTransition}, for this distribution $G_\beta(x;r^*)$, there exists $\alpha_{c,\text{thresh}}$ such that if $\alpha<\alpha_{c,\text{thresh}}$, the best response of the buyer in the class we consider is to threshold at $r>\psi^{-1}(0)$. Our results on Nash equilibria in the case $\alpha=0$ apply. By construction the optimal $r$ for this buyer is $r^*$ and we have the Nash equilibrium. 

Of course, if $\alpha>\alpha_{c,\text{thresh}}$, the best response of the buyer is to bid truthfully and hence there cannot be a Nash equilibrium in thresholding above $\psi^{-1}(0)$. 

\textbf{2) Part 2.} If all the players except one bid truthfully, Lemma \ref{lemma:phaseTransition} applies with $G=F^{K-1}$ and says that there exists $\alpha_{c,\text{truthful}}$, such that if $\alpha>\alpha_{c,\text{truthful}}$ the best response of the last player is to bid truthful. Hence there is a Nash equilibrium in the class we consider. On the other hand, if $\alpha<\alpha_{c,\text{truthful}}$, the best response of the last player is to threshold above $\psi^{-1}(0)$ and hence there is no Nash equilibrium.

\end{proof}

\subsection{Illustration of Theorem \ref{thm:TwoStepProcessCommitment}}

Figure \ref{fig:2StageNoReservesCommitment} on p. \pageref{fig:2StageNoReservesCommitment} provides an illustration of Theorem \ref{thm:TwoStepProcessCommitment}: for three different $\alpha$'s we plot the utility of the seller. 
\begin{figure}[t!]
    \centering
    \begin{subfigure}[t]{0.3\textwidth}
        \centering
        \includegraphics[width=\textwidth]{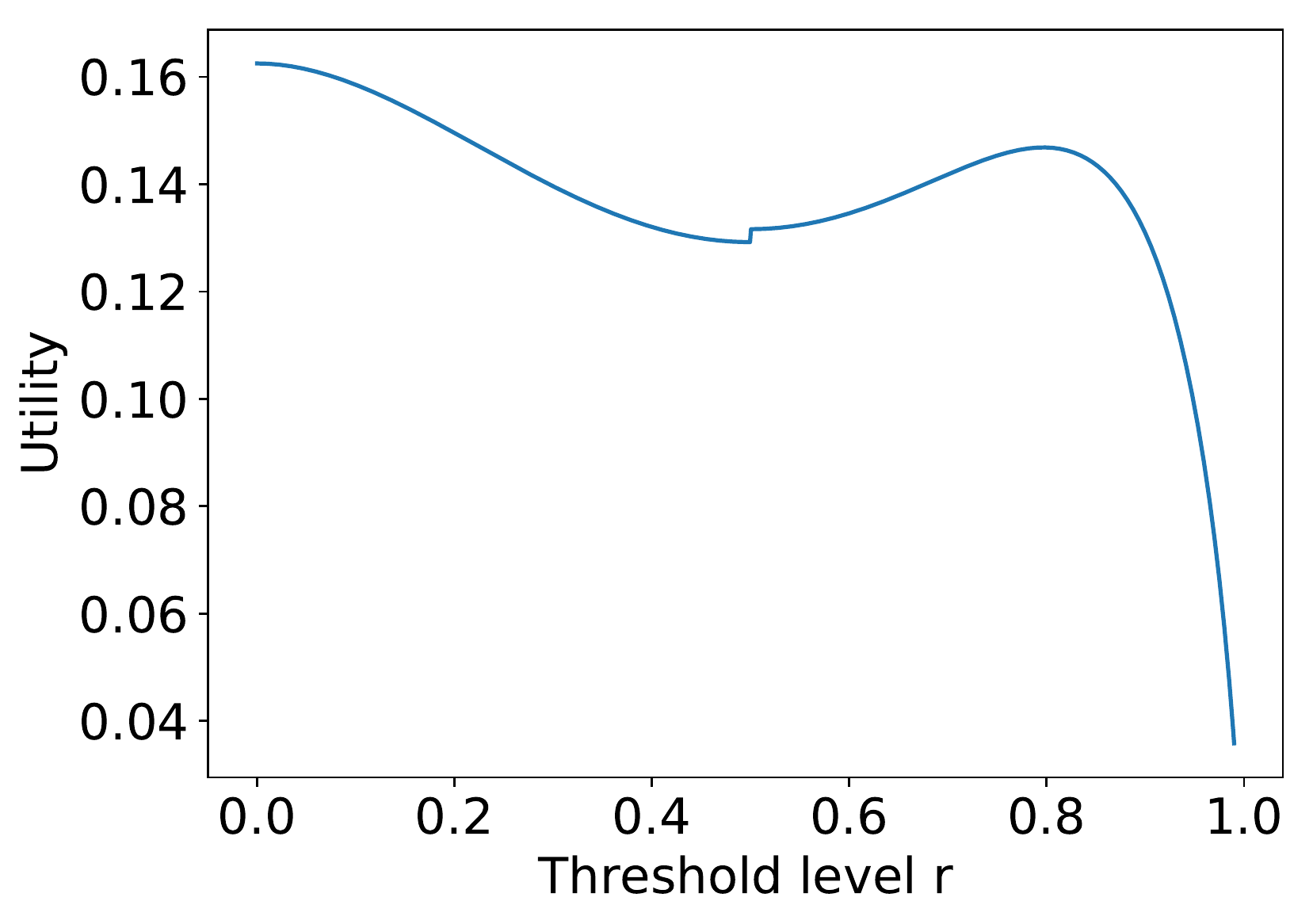}
        \caption{$\alpha=.95$}
    \end{subfigure}%
    ~
    \begin{subfigure}[t]{0.3\textwidth}
        \centering
        \includegraphics[width=\textwidth]{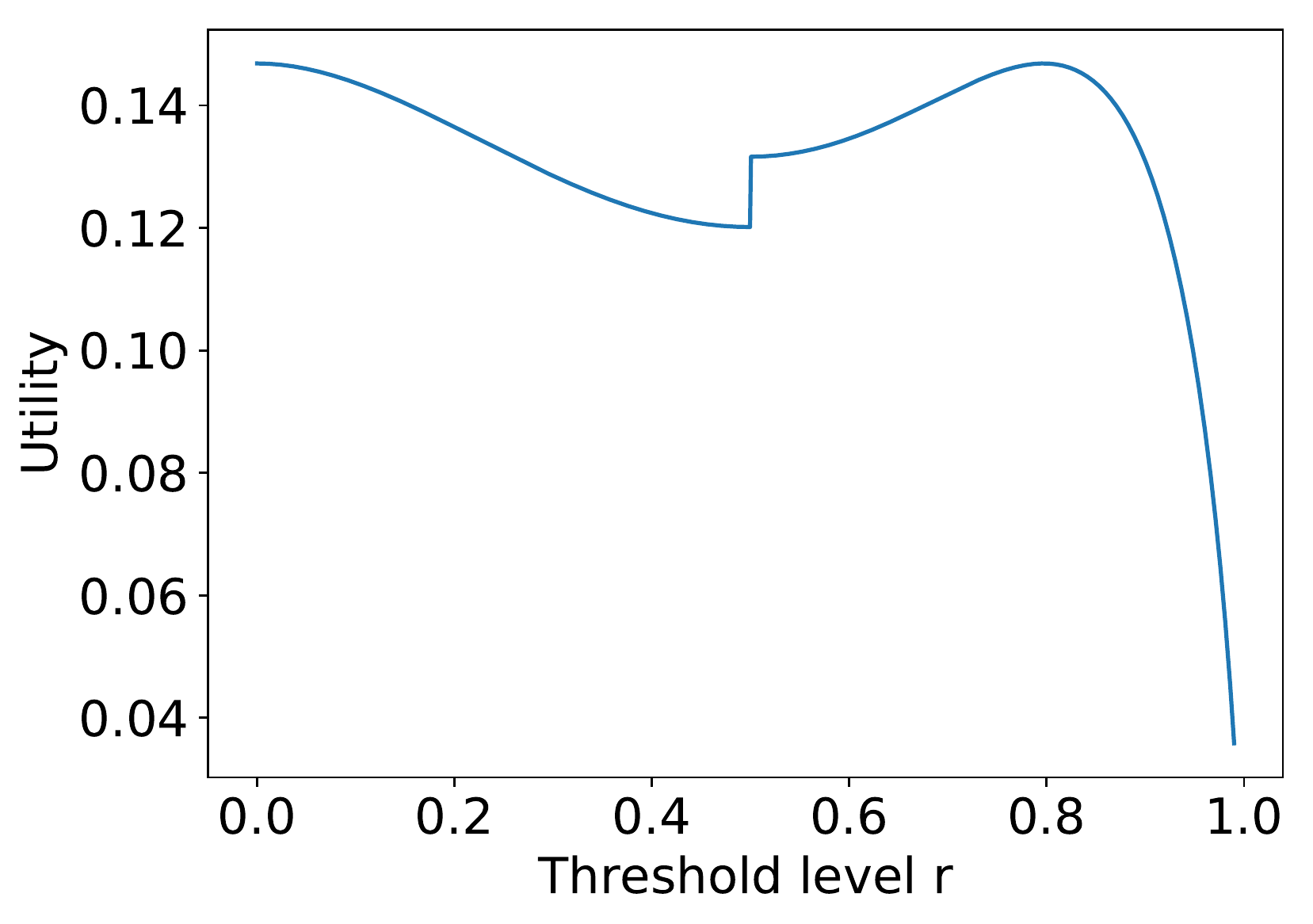}
        \caption{$\alpha=.76$}
    \end{subfigure}
	    ~
    \begin{subfigure}[t]{0.3\textwidth}
        \centering
        \includegraphics[width=\textwidth]{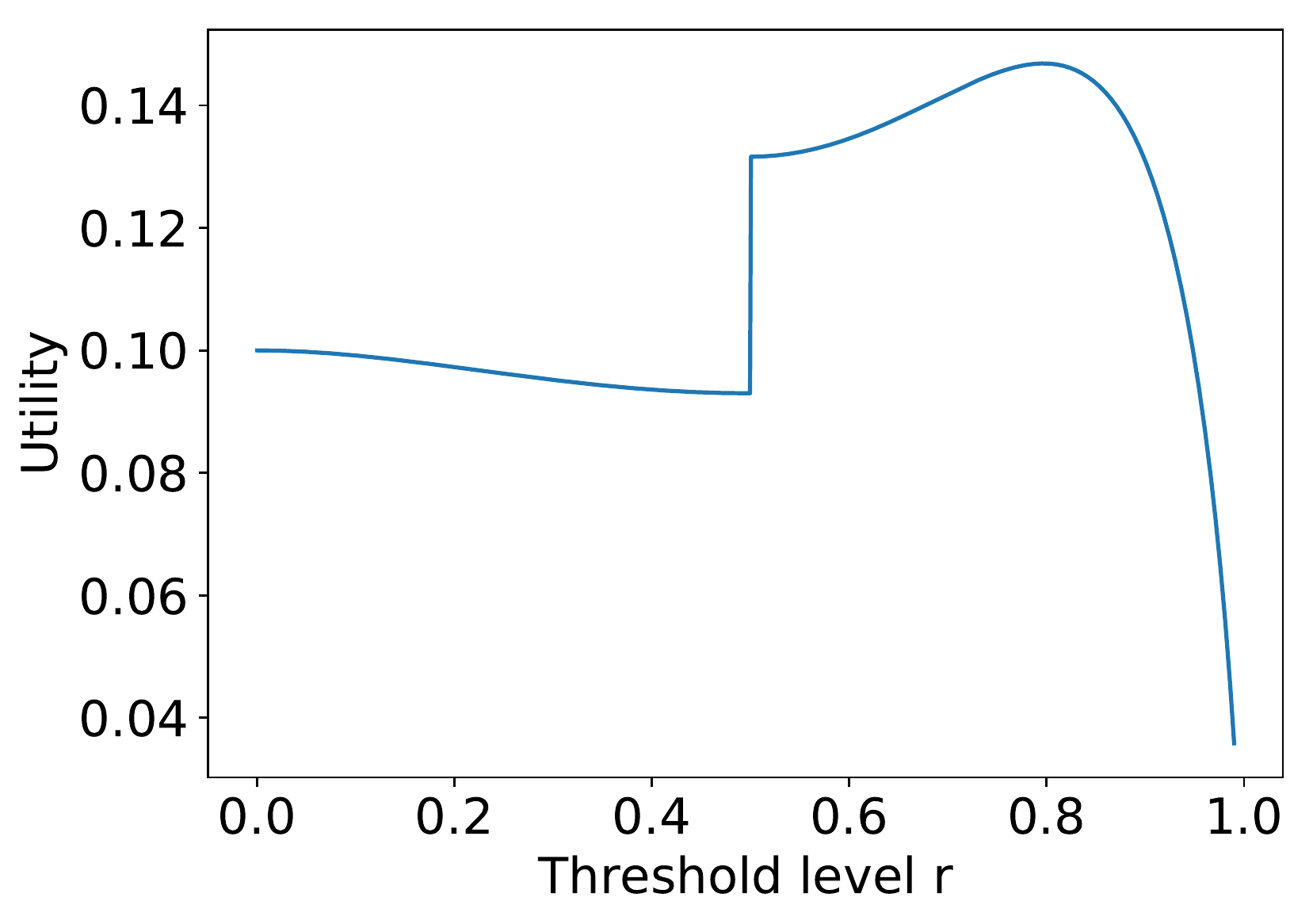}
        \caption{$\alpha=.2$}
    \end{subfigure}
    \caption{\textbf{Utility of the strategic buyer in 2 stage game for various threshold levels $\mathbf{r}$'s, zero reserves}. We consider a setup where we have two bidders, both have Unif[0,1] value distribution. One is strategic.  There is no reserve price in the first stage. Different $\alpha$'s are displayed.  At $\alpha\simeq .762$, the truthful strategy is essentially equivalent to optimal thresholding. For smaller $\alpha$'s, the optimal thresholding is preferable (right). For higher $\alpha$, truthful bidding is preferred. This illustrates the results of Theorem \ref{thm:TwoStepProcessCommitment}}  \label{fig:2StageNoReservesCommitment}
\end{figure}

\section{Proof of results in Section 4.1}
\label{appendix_section5}
\subsection{Proof of Lemma \ref{lemma:impactOfMisestimationOnShading} and Corollary \ref{coro:impactOfMisestimationOnShadingThStrat}}
\begin{lemma}
\label{lemma:impactOfMisestimationOnShading} 
Suppose that the buyer uses a strategy $\beta$ under her value distribution $F$. Suppose the seller thinks that the value distribution of the buyer is $G$. 
Call $\lambda_F$ and $\lambda_G$ the hazard rate functions of the two distributions
Then the seller computes the virtual value function of the buyer under $G$, denoted $\psi_{B,G}$, as 
$$
\psi_{B,G}(\beta(x))=\psi_{B,F}(\beta(x))-\beta'(x)\left(\frac{1}{\lambda_G(x)}-\frac{1}{\lambda_F(x)}\right)\;.
$$	
\end{lemma}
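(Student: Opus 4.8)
The plan is to apply the ODE identity of Lemma~\ref{definition_psi} twice, once with the buyer's true value distribution $F$ and once with the distribution $G$ the seller believes the buyer has, and then subtract.

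The first thing to pin down is what $\psi_{B,G}$ means. Since the shading map $\beta$ is known in this Stackelberg setting but the seller ascribes value distribution $G$ to the buyer, the bid distribution she imputes to the buyer is the push-forward $F_{B,G}\triangleq G\circ\beta^{-1}$, and $\psi_{B,G}$ is by definition its virtual value function; similarly $\psi_{B,F}$ is the virtual value function of the true bid distribution $F_B=F\circ\beta^{-1}$.

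With that in hand the computation is immediate. Lemma~\ref{definition_psi} applied verbatim (with $F,f$) gives, wherever $\beta$ is increasing and differentiable and $f>0$,
\[
\psi_{B,F}(\beta(x))=\beta(x)-\beta'(x)\frac{1-F(x)}{f(x)}=\beta(x)-\frac{\beta'(x)}{\lambda_F(x)},
\]
using $\lambda_F=f/(1-F)$. The same statement with $G,g$ in place of $F,f$ — legitimate because the lemma's hypotheses concern only $\beta$ and the positivity of the density, both unaffected — yields
\[
\psi_{B,G}(\beta(x))=\beta(x)-\beta'(x)\frac{1-G(x)}{g(x)}=\beta(x)-\frac{\beta'(x)}{\lambda_G(x)}.
\]
Subtracting the first display from the second, the term $\beta(x)$ cancels, leaving
\[
\psi_{B,G}(\beta(x))-\psi_{B,F}(\beta(x))=-\beta'(x)\Bigl(\frac{1}{\lambda_G(x)}-\frac{1}{\lambda_F(x)}\Bigr),
\]
which is the claimed identity. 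As an aside, writing $\psi_F(x)=x-1/\lambda_F(x)$ and $\psi_G(x)=x-1/\lambda_G(x)$ shows $1/\lambda_G(x)-1/\lambda_F(x)=\psi_F(x)-\psi_G(x)$, the form used in the remark following the statement.

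There is no genuine difficulty here: the proof is a two-line corollary of Lemma~\ref{definition_psi}. The only step deserving a sentence of care is the definitional one — identifying the seller's ``perceived'' virtual value with the virtual value of the push-forward of her believed value distribution through the known shading $\beta$ — together with recording the mild regularity conditions ($\beta$ increasing and differentiable, $f,g>0$ on the relevant range) under which both invocations of the lemma are valid.
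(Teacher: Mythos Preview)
Your proof is correct and follows the same approach as the paper: apply the identity $\psi_{B,\cdot}(\beta(x))=\beta(x)-\beta'(x)(1-\cdot(x))/\cdot'(x)$ from Lemma~\ref{definition_psi} once with $F$ and once with $G$, then subtract so that the $\beta(x)$ term cancels. Your discussion of what $\psi_{B,G}$ means and of the required regularity is more explicit than the paper's version, but the argument is identical.
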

\begin{corollary}\label{coro:impactOfMisestimationOnShadingThStrat}
If the buyer uses the strategy $\newThreshNotation(x)$ defined as
$$
\newThreshNotation(x)=\left(\frac{(r-\epsilon)(1-F(r))}{1-F(x)}+\epsilon\right)\indicator{x\leq r}+x\indicator{x>r}\;,
$$
we have, for $x\neq r$,
$\psi_{B,F}(\newThreshNotation(x))=\epsilon\indicator{x\leq r}+\psi_{F}(x)\indicator{x>r}\;.$
In particular, we have for $x\neq r$
$$
\left|\psi_{B,F}(\newThreshNotation(x))-\psi_{B,G}(\newThreshNotation(x)) \right|\leq \left|\psi_F(x)-\psi_G(x)\right|
 \left[(r-\epsilon)\indicator{x\leq r}+\indicator{x>r}\right]\;.
$$
If for all $x$, $\psi_{B,F}(\newThreshNotation)\geq \epsilon$ and $|\psi_F(x)-\psi_G(x)|\leq \delta$, we have 
$\psi_{B,G}(\newThreshNotation(x))\geq \epsilon -\delta\max((r-\epsilon),1)\;.
$
\end{corollary}
\begin{proof}
As we have seen before we have 
$$
\psi_{B,F}(\beta(x))=\beta(x)-\beta'(x)\frac{1-F(x)}{f(x)}\;.
$$
By construction, we have 
$$
\beta(x)-\beta'(x)\frac{1-F(x)}{f(x)}=0 \text{ for } x\leq r\;.
$$
If the seller perceives the behavior of the buyer under the distribution $G$, we have 
$$
\psi_{B,G}(\beta(x))=\beta(x)-\beta'(x)\frac{1-G(x)}{g(x)}\;.
$$
Hence, we have 
$$
\left|\psi_{B,G}(\beta(x))-\psi_{B,F}(\beta(x))\right|=\left|\beta'(x)\right|
\left|\frac{1-F(x)}{f(x)}-\frac{1-G(x)}{g(x)}\right|\;.
$$
Recall the hazard function $\lambda_F(x)=f(x)/(1-F(x))$. With this notation, we simply have 
$$
\left|\psi_{B,G}(\beta(x))-\psi_{B,F}(\beta(x))\right|=\left|\beta'(x)\right|\left|\frac{1}{\lambda_F(x)}-\frac{1}{\lambda_G(x)}\right|\;.
$$
The corollary follows by noting that when $x\leq r$, 
$$
\left|\beta_{r,\eps}'(x)\frac{1-F(x)}{f(x)}\right|\leq (r-\eps)\frac{1-F(r)}{1-F(x)}
$$
\end{proof}
Hence, a natural way to quantify the proximity of distributions in this context is of course in terms of their virtual value functions. Furthermore, if the buyer uses a shading function such that, under her strategy and with her value distribution, the perceived virtual value is positive, as long as the seller computes the virtual value using a nearby distribution, she will also perceive a positive virtual value and hence have no incentive to put a reserve price above the lowest bid.  In particular, if $\delta$ comes from an approximation error that the buyer can predict or measure, she can also adjust her $\epsilon$ so as to make sure that the seller perceives a positive virtual value for all $x$. 
\subsection{Proof of Theorem \ref{thm:epsAndMonopolyPriceByERM}}
\label{app:subsec:proofERMRobustness}
\begin{customthm}{5}
Suppose the buyer has a continuous and increasing value distribution $F$, supported on $[0,b]$, $b\leq \infty$, with the property that if $r\geq y\geq x$, $F(y)-F(x)\geq \gamma_F (y-x)$, where $\gamma_F>0$. Suppose finally that $\sup_{t\geq r}t(1-F(t))=r(1-F(r))$. 

Suppose the buyer uses the strategy $\newThreshNotation$ described above and samples $n$ values $\{x_i\}_{i=1}^n$ i.i.d according to the distribution $F$ and bids accordingly in second price auctions. Call $x_{(n)}=\max_{1\leq i \leq n}x_i$. In this case the (population) reserve value $x^*$ is equal to 0.

Assume that the seller uses empirical risk minimization to determine the monopoly price in a (lazy) second price auction, using these $n$ samples. 

Call $\hat{x}_n^*$ the reserve value determined by the seller using ERM. 

We have, if $C_n(\delta)=n^{-1/2}\sqrt{\log(2/\delta)/2}$ and $\epsilon>x_{(n)} C_n(\delta)/F(r)$ with probability at least $1-\delta_1$, 
$$
\hat{x}_n^*<\frac{2rC_n(\delta)}{\epsilon\gamma_F} \text{ with probability at least } 1-(\delta+\delta_1) \;.
$$

In particular, if $\epsilon$ is replaced by a sequence $\epsilon_n$ such that \\$n^{1/2}\epsilon_n min(1,1/x_{(n)}) \rightarrow \infty$ in probability, 
$\hat{x}_n^*$ goes to 0 in probability like $n^{-1/2}\max(1,x_{(n)})/\epsilon_n$. 
\end{customthm}
\textbf{Examples : } Our theorem applies for value distributions that are bounded,  with $\epsilon_n$ of order $n^{-1/2+\eta}$, $\eta>0$ and fixed. If the value distribution is log-normal$(\mu,\sigma)$ truncated away from 0 so all values are greater than a very small threshold $t$, standard results on the maximum of i.i.d $\mathcal{N}(\mu,\sigma)$ random variables guarantee that $x_{(n)}\leq \exp(\mu+\sigma \sqrt{2\log(n)})$ with probability going to 1. In that case too, picking $\epsilon_n$ of order $n^{-1/2+\eta}$, $\eta>0$ and fixed, guarantees that the reserve value computed by the seller by ERM will converge to the population reserve value, which is of course 0. 

\textbf{Comment : } The requirement on $\gamma_F$, which essentially means that the density $f$ is bounded away from 0 could also be weakened with more technical work to make this requirement hold only around 0, at least for the convergence in probability result. Similarly one could handle situations, like the log-normal case, where $\gamma_F$ is close to 0 at 0 by refining slightly the first part of the argument given in the proof. The formal proof is in Appendix \ref{appendix_section5}.
\begin{proof}

$\bullet$ \textbf{Preliminaries}\\
Notations~: We use the standard notation for order statistics $b_{(1)}\leq b_{(2)}\leq \ldots\leq b_{(n)}$ to denote our $n$ increasingly ordered bids. 
We denote as usual by $\hat{F}_n$ the empirical cumulative distribution function obtained from a sample of $n$ i.i.d observations drawn from a population distribution $F$. 

Setting the monopoly price by ERM amounts to finding, if $\hat{B}_n$ is the empirical cdf of the bids, 
$$
b^*_n=\argmax_t t (1-\hat{B}_n(t))
$$
We note in particular than 
$$
b_n^*\leq \max_{1\leq i\leq n}b_i=b_{(n)}\;,
$$
since $(1-\hat{B}_n(t))=0$ for $t> b_{(n)}$. 

Because $(1-\hat{B}_n(t))$ is piecewise constant and the function $t\mapsto t$ is increasing, on $[b_{(i)},b_{(i+1)})$ the function $t(1-\hat{B}_n(t))$ reaches its supremum at $b_{(i+1)}^-$. 
$$
b_n^*=\argmax_t t (1-\hat{B}_n(t))=\argmax_{1\leq i \leq n-1} b_{(i+1)}^- \left(1-\frac{i}{n}\right)\;.
$$
Since our shading function $\beta_{r,\eps}$ is increasing and if $x_{(i)}$ are our ordered values, we have, if $\hat{F}_n$ is the empirical cdf of our value distribution,  
$$
\argmax_{1\leq i \leq n-1} b_{(i+1)}^- \left(1-\frac{i}{n}\right)=\argmax_{1\leq i \leq n-1} \beta_{r,\eps}(u_{(i+1)}^-) \left(1-\frac{i}{n}\right)=\argmax_u \beta_{r,\eps}(u)(1-\hat{F}_n(u))\;.
$$
The last equality comes again from the fact that $(1-\hat{F}_n(u))$ is piecewise constant and $\beta_{r,\eps}(u)$ is increasing. So in our analysis we can act as if the seller had perfect information of our shading function $\beta_{r,\eps}$.

In what follows we focus on reserve values and denote 
\begin{gather*}
\hat{x}_{r,n}^*=\argmax_{0\leq x\leq r} \beta_{r,\eps}(x)(1-\hat{F}_n(x))\;,
\hat{x}_n^*=\argmax_{0\leq x} \beta_{r,\eps}(x)(1-\hat{F}_n(x))\;,\\
x^*=\argmax \beta_{r,\eps}(x)(1-F(x))
\end{gather*}
The arguments we gave above imply that $\hat{x}_{r,n}^*\leq x_{(n)}$. We will otherwise study the continuous version of the problem. We also note that by construction, $x^*=0$, though we keep it in the proof as it makes it clearer. 

%

We recall one main result of \cite{MassartTightConstantDKWAoP90} on the Dvoretzky-Kiefer-Wolfowitz (DKW) inequality: if $C_n(\delta)=n^{-1/2}\sqrt{\log(2/\delta)/2}$, 
$$
P(\sup_x |\hat{F}_n(x)-F(x)|>C_n(\delta))\leq \delta\;.
$$
In what follows, we therefore assume that we have a uniform approximation 
$$
\forall x\;, \; |\hat{F}_n(x)-F(x)|\leq C_n(\delta)\;,
$$
since it holds with probability $1-\delta$. In what follows we write $C_n$ instead of $C_n(\delta)$ for the sake of clarity. Using the fact that $\beta_{r,\eps}$ is increasing, this immediately implies that with probability at least $1-\delta$, for any $c>0$
$$
\forall x \in [0,c]\;, \; |\beta_{r,\eps}(x)(1-\hat{F}_n(x))-\beta_{r,\eps}(x)(1-F(x))|\leq \beta_{r,\eps}(c) C_n\;.
$$

$\bullet$ {\sf$\hat{x}_{r,n}^*=\argmax_{y\leq r} \beta_{r,\eps}(y)(1-\hat{F}_n(y))$ cannot be too far from $\mathsf{x^*}$}\\
Now for our construction of $\beta_{r,\eps}(x)$, we have by construction that
$$
\frac{\partial }{\partial u}\left[\beta(u) (1-F(u))\right]=-\eps f(u) \text{ when } x\leq r\;.
$$
In particular, it means that when $x,y\leq r$
$$
\beta_{r,\eps}(x)(1-F(x))-\beta_{r,\eps}(y)(1-F(y))=-\eps (F(x)-F(y))\;.
$$
Also $x^*=0$ since $\beta_{r,\eps}(1-F)$ is decreasing on $[0,r]$, as we have just seen that its derivative is negative.  Here we used the fact that $F$ is increasing. 

If $r>y\geq x^*+tC_n/\eps$, we have, using the previous inequality and the fact that  $\beta_{r,\eps}(1-F)$ is decreasing on $[0,r]$,
$$
\beta_{r,\eps}(y)(1-F(y))\leq \beta_{r,\eps}(x^*)(1-F(x^*))-\eps (F(x^*+tC_n/\eps)-F(x^*))\;.
$$
Since we assumed that $F(y)-F(x)\geq \gamma_F (y-x)$, we have 
$$
-\eps (F(x^*+tC_n/\eps)-F(x^*))\leq -tC_n \gamma_F\;.
$$
Since $\beta_{r,\eps}$ is increasing, we have $\sup_{0\leq x\leq r}\beta_{r,\eps}(x)\leq \beta_{r,\eps}(r)=r$. 
Picking $t>2r/\gamma_F$, it is clear that for $r>y\geq x+tC_n/\eps$, 
\begin{gather*}
\max_{r\geq u\geq x+tC_n/\eps} \beta_{r,\eps}(u)(1-\hat{F}_n(u))\leq 
\max_{r\geq u\geq x+tC_n/\eps} \beta_{r,\eps}(u)(1-F(u))+ \max_{r\geq u\geq x+tC_n/\eps} \beta_{r,\eps}(u) C_n
\\
\leq 
\beta_{r,\eps}(x^*)(1-F(x^*))+(r-t\gamma_F)C_n< \beta_{r,\eps}(x^*)(1-F(x^*))-rC_n\leq \beta_{r,\eps}(x^*)(1-\hat{F}_n(x^*))\;.
\end{gather*}
We conclude that $\hat{x}_{r,n}^*$ cannot be greater than $x+2rC_n/(\eps\gamma_F)$ and therefore
$$
\hat{x}_{r,n}^*-\hat{x}<\frac{2rC_n}{\eps\gamma_F}. 
$$
$\bullet$ \textbf{Dealing with $\max_{y>r} \beta_{r,\eps}(y)(1-\hat{F}_n(y))$}\\
Recall that $\max_{x} \beta_{r,\eps}(x)(1-F(x))=\beta_{r,\eps}(0)(1-F(0))=r(1-F(r))+\eps F(r)$. 
We now assume that $\max_{y\geq r} y(1-F(y))=r(1-F(r))$. This is in particular the case for regular distributions, which are commonly assumed in auction theory. 

To show that the argmax cannot be in $[r,b]$ with pre-specified probability we simply show that the estimated value of the seller revenue at reserve value 0 is higher than $\max_{y>r} \beta_{r,\eps}(y)(1-\hat{F}_n(y))$. Of course, 
$$
\max_{y>r} \beta_{r,\eps}(y)(1-\hat{F}_n(y))=\max_{x_{(n)}\geq y>r} \beta_{r,\eps}(y)(1-\hat{F}_n(y))\;.
$$

Recall that $\beta_{r,\eps}(0)=r(1-F(r))+\eps F(r)$. Under our assumptions, we have 
\begin{gather*}
\beta_{r,\eps}(0)(1-\hat{F}_n(0))=\beta_{r,\eps}(0)=r(1-F(r))+\eps F(r) \text{ and }\\
\max_{x_{(n)}\geq y\geq r} \beta_{r,\eps}(y)(1-\hat{F}_n(y))\\
\leq \max_{x_{(n)}\geq y\geq r} \beta_{r,\eps}(y)(1-F(y))+C_n \max_{x_{(n)}\geq y\geq r}\beta_{r,\eps}(y)\leq r(1-F(r))+C_n x_{(n)}
\end{gather*}
So as long as $\eps> x_{(n)}C_n/F(r)$, the result we seek holds. By assumption this property holds with probability $1-\delta_1$. 

The statement of the theorem holds when both parts of the proof hold. Since they hold with probability at least $1-\delta$ and $1-\delta_1$ the intersection event holds with probability at least $1-\delta-\delta_1$, as announced.  
\end{proof}

$\bullet$ \textbf{Asymptotic statement/Convergence in probability issue}\\
This is a straightforward application of the previous result and we give no further details.
\section{Proof of results of Section 4.2}
\label{proof_Section6}
This theorem works for non-regular value distributions and in the asymmetric case when the bidders have different value distributions.
\begin{customthm}{6}[Thresholding at the monopoly price]
\label{thresholding_monopoly_price}
Consider the one-strategic setting in a lazy second price auction with $F_{X_i}$ the value distribution of the strategic bidder $i$ with a seller computing the reserve prices to maximize her revenue.  Suppose $\beta_r$ is an increasing strategy with associated reserve value $r>0$. Suppose $\alpha = 0$, i.e. there is only an exploitation phase. 
Then there exists $\tilde{\beta}_r$  which does not depend on G such that  : 
\textbf{2)} $U_i(\tilde{\beta}_r)\geq U_i(\beta_r)$, $U_i$ being the utility of the strategic bidder.
\textbf{3)} $R_i(\tilde{\beta}_r)\geq R_i(\beta_r)$, $R_i$ being the payment of bidder $i$ to the seller. 
The following continuous functions fulfill these conditions for $\epsilon\geq 0$ small enough:
$$
\tilde{\beta}^{(\epsilon)}_r(x)=\left(\frac{[\beta_r(r)-\epsilon](1-F_{X_i}(r))}{1-F_{X_i}(x)}+\epsilon\right)\textbf{1}_{x<r}+\beta_r(x)\textbf{1}_{x\geq r}
$$
In the two-stage game, there is a critical value $\alpha_c$ for which, if $\alpha<\alpha_c$, it is preferable for the bidder to threshold at the monopoly price and if $\alpha>\alpha_c$, it is preferable for the bidder to bid truthfully. 	
\end{customthm}
\begin{proof}[Proof: Important special case of $\beta_r(x)=x$;]
We assumed that the seller computes the reserve price to maximize her revenue. Hence $r = \argmax x(1-F_{X_i}(x))$. In this case, $\int_{0}^{r} \psi_{i}(x_i)G(x_i)f_i(x_i)dx_i \leq 0$ (otherwise the reserve price would be lower and since the payment of bidder $i$ is equal to $\int_{r}^{+\infty} \psi_{i}(x_i)G(x_i)f_i(x_i)dx_i $. 

$\tilde{\beta}_r^{(\epsilon)}$ defined in Theorem \ref{thm:settingReserveValToZeroImprovesPerformance} verifies the ODE defined in Lemma \ref{lemma:keyODEs} such that $h_{\tilde{\beta}_r^{(\epsilon)}}(x) = \psi_{B_{\tilde{\beta}_r^{(\epsilon)}}}(\tilde{\beta_r}^{(\epsilon)}(x))=\epsilon$ for $x \in [0,r]$ and $h_{\tilde{\beta}^{(\epsilon)}}(x) = \psi_{X_i}(x)$ for $x \in [r,+\infty]$. $\tilde{\beta}_r^{(\epsilon)}$ is trivially increasing. 

Hence, the virtual value of the distribution induced by $\tilde{\beta}_r^{(\epsilon)}$ is non-negative on $[0,r]$ and the new reserve value is equal to zero. The new reserve price is therefore equal to the minimum bid of bidder $i$ and
\begin{equation*}
R_i(\tilde{\beta}_r^{(\epsilon)})\ =  R_i(\beta_r) + \mathbb{E}_{X_i \sim F_{i}}\bigg(\epsilon G_i(\tilde{\beta}_r^{(\epsilon)}(X_i))\textbf{1}(X_i \leq r)\bigg)\geq R_i(\beta_r)\;.
\end{equation*}
 The new bidder's utility is 
\begin{equation*} 
U_i(\tilde{\beta}_r^{(\epsilon)}) = U_i(\beta_r) + \mathbb{E}_{X_i \sim F_{i}}\bigg((X_i-\epsilon)G_i(\tilde{\beta}_r^{(\epsilon)}(X_i))\textbf{1}(X_i \leq r)\bigg)
\end{equation*}
For $\epsilon = 0$ we have clearly $U(\tilde{\beta}_r^{(\epsilon)})  \geq U(\beta_r)$ and $R_i(\tilde{\beta}_r^{(\epsilon)})\ =  R_i(\beta_r)$. Outside pathological cases, it is a strict inequality and by continuity with respect  to $\eps$, i.e. assuming $G$ continuous, it is true in a neighborhood of zero so for some $\eps>0$.
\end{proof}
We now handle rigorously the general case:
\begin{proof}
The reserve value $r>0$ is given. 
Consider 
$$
\tilde{\beta}_r(x)=
\begin{cases}
t_r(x)& \text{ if } x\leq r\\
\beta_r(x) & 	\text{ if } x > r
\end{cases}
$$
To make things simple we require $t_r(r)=\beta_r(r^+)$, so we have continuity. 
Note that beyond $r$ the seller revenue is unaffected. 
If the seller sets the reserve value at $r_0$ the extra benefit compared to setting it at $r$ is 
$$
\Exp{\vValue_{t_r}(t_r(X)G(t_r(X))\indicator{r_0\leq X <r}}\;.
$$
Hence, as long as $\vValue_{t_r}(x)\geq 0$, the seller has an incentive to lower the reserve value. The extra gain to the buyer is 
$$
\Exp{(X-\vValue_{t_r}(t_r(X)))G(t_r(X))\indicator{r_0\leq X <r}}\;.
$$
Now, if we take 
$$
t_r(x)=\frac{t_r(0)}{1-F(x)}\;,
$$
it is easy to verify that 
$$
\vValue_{t_r}(t_r(x))=t_r(x)-t_r'(x) \frac{1-F(x)}{f(x)}=0\;.
$$
So in this limit case, there is no change in buyer's payment and when the reserve price is moved by the seller to any value on $[0,r]$. If we assume that the seller is welfare benevolent, she will set the reserve value to 0.  To have continuity of the bid function, we just require that 
$$
\frac{t_r(0)}{1-F(r)}=\beta_r(r^+)\;.
$$
Since there is no extra cost for the buyer, it is clear that his/her payoff is increased with this strategy. 
Taking $t_r^{(\eps)}$ such that 
$$
\vValue_{t_r^{(\eps)}}(t_r^{(\eps)}(x))=\eps\;,
$$
gives a strict incentive to the seller to move the reserve value to 0, (so the assumption that s/he is welfare benevolent is not required) even if it is slightly suboptimal for the buyer. 
Note that we explained in Lemma \ref{lemma:keyODEs} how to construct such a $\vValue$. In particular, 
$$
t_r^{\eps}=\frac{C_\eps}{1-F(x)}+\eps, \text{ with } \frac{C_\eps}{1-F(r)}+\eps=\beta_r(r^+)
$$
works.
Taking limits proves the result, i.e. for $\eps$ small enough the Lemma holds, since everything is continuous in $\eps$. 

To extend to the two stage process, we can use exactly the same proof as in Lemma \ref{lemma:phaseTransition}.
\end{proof}
\textbf{Comment} We note that the flexibility afforded by $\eps$ is two-fold: when $\eps>0$, the extra seller revenue is a strictly decreasing function of the reserve price; hence even if for some reason reserve price movements are required to be small, the seller will have an incentive to make such move. The other reason is more related to estimation issues: if the reserve price is determined by empirical risk minimization, and hence affected by even small sampling noise, having $\eps$ big enough will guarantee that the mean extra gain of the seller will be above this sampling noise. Of course, the average cost for the bidder can be interpreted to just be $\eps$ at each value under the current reserve price and hence may not be a too hefty price to bear. 

For the sake of clarity, we first consider the case where the strategic bidder is only optimizing her utility in the second stage. 
We consider the robust-optimization framework \cite{aghassi2006robust}. We show that thresholding at the monopoly price is the best-response in the worst case of the competition when the number of players is not known to the strategic bidder. 

We assume that the bidders know that they all have the same value distribution but they do not know the number of bidders K. The problem is now to find the optimal thresholding parameter in the worst case : 
\begin{equation*}
r^{*} = \argmax_{r} min_{K \in \mathbb{N}} U(\beta_r,K)
\end{equation*}
We call this strategy the best robust-optimization strategy.

\begin{theorem}
\label{thresholded_monopoly_price_alpha=0}
Consider the one-strategic setting in a lazy second price auction with $F_{X_i}$ regular. Assume the symmetric setting where all the other bidders have the same value distributions and assume the non-strategic bidders are bidding truthfully. Consider the setting where $\alpha =1$. In the class of thresholded  strategies, the best robust-optimization strategy is to threshold at the monopoly price and bid truthfully after. 
\end{theorem}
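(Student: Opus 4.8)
The plan is to collapse the robust problem to a one-dimensional optimization over the threshold $r$ and then to compare $r=\psi^{-1}(0)$ (the monopoly price, written $r_M$ below) with every other admissible $r$. First I would use the one-strategic directional-derivative computation of Lemma~\ref{lemma:directionalDerivativesThresholdedStrategiesRight}: for a fixed number of bidders $K$ the competition is $G=F^{K-1}$, and within the parametrized family $\beta_r^\gamma$ of Definition~\ref{definition_thresholded_strategies} the utility can be made stationary in the $\gamma$-direction (in every direction $\rho$) only by taking $\gamma$ equal to the identity on $[r,\infty)$, which is moreover the global optimum there. Since this holds for every $K$ it holds for the worst case, so the problem reduces to $\max_{r>0}\min_{K}U(\tilde\beta_r,K)$, where $\tilde\beta_r$ thresholds at $r$ and bids truthfully afterwards, so that the induced bid distribution has virtual value $0$ on the values $[0,r]$ and $\psi_X$ beyond.

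Next I would run the revenue-curve dichotomy. When a value $x<r$ bids $\tilde\beta_r(x)=r(1-F(r))/(1-F(x))$, the monopoly revenue $R_B(b)=b(1-\tilde F_B(b))$ is constant and equal to $r(1-F(r))$ over that whole range of bids, whereas over bids produced by values $\ge r$ it equals $b(1-F(b))$. Hence, if $r\ge r_M$, the bid distribution has nonnegative virtual value everywhere, $R_B$ is non-increasing on the support of the bids, and the (welfare-benevolent) seller sets the reserve value to $0$ --- exactly as in the proof of Theorem~\ref{thm:settingReserveValToZeroImprovesPerformance}, the $\epsilon>0$ version removing any reliance on welfare-benevolence. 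If instead $r<r_M$, then $\max_b R_B(b)=r_M(1-F(r_M))>r(1-F(r))$ is attained only at $b=r_M$, so the seller restores the monopoly reserve; below $r_M$ the strategic bidder never clears it, so $\tilde\beta_r$ yields exactly the truthful-with-monopoly-reserve utility, i.e.\ $U(\tilde\beta_r,K)=U(\beta_{tr},K)$ for every $K$.

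Then I would conclude. For $r<r_M$, Theorem~\ref{thm:settingReserveValToZeroImprovesPerformance} gives $U(\tilde\beta_{r_M},K)\ge U(\beta_{tr},K)=U(\tilde\beta_r,K)$ for every $K$, hence $\min_K U(\tilde\beta_{r_M},K)\ge\min_K U(\tilde\beta_r,K)$. For $r\ge r_M$ the reserve value is $0$, so by the integrated version of Myerson's lemma (Appendix~\ref{appendix_lemma1}) $U(\tilde\beta_r,K)=\mathbb{E}_X\bigl[(X-\psi_{B}(\tilde\beta_r(X)))\,F^{K-1}(\tilde\beta_r(X))\bigr]$; since $\tilde\beta_r(x)<\sup(\mathrm{supp}\,F)$ for a.e.\ $x$ and the competition is a maximum of $K-1$ i.i.d.\ draws, $F^{K-1}(\tilde\beta_r(X))\to0$ a.s.\ as $K\to\infty$, and dominated convergence yields $U(\tilde\beta_r,K)\to0$, so $\min_K U(\tilde\beta_r,K)\le0$; on the other hand $U(\tilde\beta_{r_M},K)\ge U(\beta_{tr},K)\ge0$ for all $K$, so $\min_K U(\tilde\beta_{r_M},K)=0$. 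In all cases $\min_K U(\tilde\beta_r,K)\le\min_K U(\tilde\beta_{r_M},K)$, so $r_M$ attains $\max_{r}\min_{K}U(\tilde\beta_r,K)$ and thresholding at the monopoly price and bidding truthfully afterwards is a best robust-optimization strategy in this class.

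The main obstacle is the case $r>r_M$: one must rule out that some threshold strictly above the monopoly price is strictly better in the worst case, even though it is the one-strategic optimum for some finite $K$ (by Lemma~\ref{lemma:atMostOneNonTrivialSoln}, the one-strategic optimal threshold $r^*(K)$ is $>r_M$). The resolution is that the adversary over $K$ effectively drives $K\to\infty$, where the competition swamps every thresholded strategy and all their utilities collapse to $0$; thresholding exactly at $r_M$ is the smallest threshold that still removes the reserve, so it keeps the reserve-elimination gain for every $K$ while never underperforming truthful, whereas larger thresholds incur over-/under-bidding losses that become decisive for large $K$. A secondary point is the l\`adc\`ag/c\`adl\`ag bookkeeping for $\tilde F_B$ and the welfare-benevolence tie-breaking in the revenue-curve step, handled exactly as in the proof of Theorem~\ref{thm:settingReserveValToZeroImprovesPerformance} via the $\epsilon$-perturbation.
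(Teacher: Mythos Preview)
Your reduction to a one-parameter problem in $r$ via Lemma~\ref{lemma:directionalDerivativesThresholdedStrategiesRight}, and your use of Theorem~\ref{thm:settingReserveValToZeroImprovesPerformance} to handle $r<r_M=\psi^{-1}(0)$, match the paper. For $r>r_M$, however, your route diverges from the paper's and leaves a gap.

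The paper does not pass to the limit $K\to\infty$. It compares the two integrands on $[\psi^{-1}(0),r]$ directly: there $\tilde\beta_r(x)=r(1-F(r))/(1-F(x))<x$ (regularity), so $F(\tilde\beta_r(x))/F(x)<1$ and $\bigl[F(\tilde\beta_r(x))/F(x)\bigr]^{K-1}\to 0$, while $(x-\psi(x))/x\in(0,1)$ is a fixed factor. Hence for every $r>r_M$ there is a finite $K$ at which the truthful contribution $(x-\psi(x))F^{K-1}(x)$ strictly dominates $xF^{K-1}(\tilde\beta_r(x))$ on a subinterval $[a,r]$, so lowering the threshold strictly improves utility at that $K$ and $r$ cannot be worst-case optimal.

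Your argument instead uses only $U(\tilde\beta_r,K)\to 0$ and $U(\tilde\beta_r,K)\ge 0$ to conclude $\inf_K U(\tilde\beta_r,K)=0$ for $r\ge r_M$. But the same computation applies to $r_M$ itself, and by your own $r<r_M$ step it applies to truthful bidding too: for \emph{every} threshold $r$ one has $\inf_K U(\tilde\beta_r,K)=0$. So ``$r_M$ attains the max--min'' is true but vacuous---so does every other $r$---and nothing in your argument singles out the monopoly price. The passage to $K\to\infty$ discards exactly the finite-$K$ comparison needed to separate $r_M$ from larger thresholds; to recover the theorem's content you need the paper's pointwise rate comparison on $[\psi^{-1}(0),r]$ (or an equivalent finite-$K$ argument) showing that for each $r>r_M$ some adversarial $K$ makes $\tilde\beta_r$ strictly dominated within the class.
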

\begin{proof} 
Based on Lemma \ref{lemma:directionalDerivativesThresholdedStrategiesRight}, we know that given a competition distribution $G$, the best response in the class of thresholded strategies is to bid truthfully above a certain threshold r. 

Given Lemma \ref{thresholding_monopoly_price}, we know that for any given competition distribution G, the optimal threshold r verifies $r\geq \psi^{-1}(0)$. 

For any $r >   \psi^{-1}(0)$, we show that there exists $K_{lim}$ such that $\forall K \geq K_{lim}$, it is better to bid truthfully on $[a,r]$ with $\psi^{-1}(0)<a<r$.
\begin{equation*}
U(\beta_{tr}) - U(\beta_r) = \int_{\psi^{-1}(0)}^r (x - \psi(x))F(x)^{K-1}f(x)dx - \int_{\psi^{-1}(0)}^r x \bigg(F\bigg(\frac{(1-F(r))r}{1-F(x)}\bigg)\bigg)^{K-1}f(x)dx
\end{equation*}
Since F is regular and by definition of the monopoly price, 
$$\forall x \in [\psi^{-1}(0),r], \frac{(1-F(r))r}{1-F(x)} \leq x$$
(In the case of equality $\psi(x)=0$). 
By definition of the virtual value, $\frac{x-\psi(x)}{x} \in )0,1($. Hence, if F is increasing (defined on all its support), for all $r > \psi^{-1}(0)$ there exists $a \in )\psi^{-1}(0),r($and $K_{lim}$ such that $\forall K \geq K_{lim}$, and 
$$\forall x \in [a,r], \frac{x - \psi(x)}{x} \geq \frac{\bigg(F\bigg(\frac{(1-F(r))r}{1-F(x)}\bigg)\bigg)^{K-1}}{F(x)^{K-1}}.$$
Hence thresholding at the monopoly price is optimal in the worst case of the competition distribution.
\end{proof} 
\section{Proof of results of Section 4.3}
\subsection{Thresholding at the monopoly price in the Myerson auction}
\label{appendix_myerson}
Here we ask what happens when one player is strategic, the others are truthful and she thresholds her virtual value at her monopoly price in the Myerson auction. 
\begin{customlemma}{3}
Consider the one-strategic setting. Assume all bidders have the same value distribution, $F_X$ and that $F_X$ is regular. Assume that bidder $i$ is strategic that the $K-1$ other bidders bid truthfully. Let us denote by $\beta_{truth}$ the truthful strategy and $\beta_{thresh}$ the thresholded strategy at the monopoly price. The utility of the truthful bidder in the Myerson auction $U_i^{Myerson}$ and in the lazy second price auction $U_i^{Lazy}$ satisfy
\begin{equation*} 
U_i^{\text{Myerson}}(\beta_{\text{thresh}}) - U_i^{\text{Myerson}}(\beta_{\text{truth}}) \geq U_i^{\text{Lazy}}(\beta_{\text{thresh}}) - U_i^{\text{Lazy}}(\beta_{\text{truth}})
\end{equation*}
\end{customlemma}
\begin{proof}
By definition the thresholded strategy at monopoly price $\beta_{\text{thresh}}$ verifies
$$
\psi_{B_i}(b_i)=\begin{cases} \psi_{X_i}(x_i) \text { if } x_i\geq \psi_{X_i}^{-1}(0)\\
\eps=0^+ \text{ if } x_i< \psi_{X_i}^{-1}(0)\;.
\end{cases}
$$
The revenue/utility of the buyer in the Myerson auction is 
$$
U_i^{\text{Myerson}}(\beta)=\Exp{(X_i-\psi_{B_i}(B_i))\indicator{\psi_{B_i}(B_i)\geq \max_{j\neq i}(0,\psi_{B_j}(B_j))}}\;.
$$
If we assume that all players except the i-th are truthful and the $i$-th player employs the strategy described above, we get that 
\begin{gather*}
U_i^{\text{Myerson}}(\beta_{\text{thresh}})=\Exp{(X_i-\psi_{B_i}(B_i))\indicator{\psi_{B_i}(B_i)\geq \max_{j\neq i}(0,\psi_{B_j}(B_j))}}\\
=\Exp{(X_i-\psi_{X_i}(X_i))\indicator{\psi_{X_i}(X_i)\geq \max_{j\neq i}(0,\psi_{X_j}(X_j))}\indicator{X_i\geq \psi_{X_i}^{-1}(0)}}\\
+\Exp{(X_i-\eps)\indicator{\eps\geq \max_{j\neq i}\psi_{X_j}(X_j)}\indicator{X_i<\psi_{X_i}^{-1}(0)}}
\end{gather*}
Note that here we've just split the integral into two according to whether $X_i$ was greater of less than the monopoly price and adjusted the definitions of $\psi_{B_i}(b_i)$ accordingly. 
We conclude immediately that 
\begin{gather*}
U_i^{\text{Myerson}}(\beta_{\text{thresh}})=U_i^{\text{Myerson}}(\beta_{\text{truth}})
+\Exp{(X_i-\eps)\indicator{\eps\geq \max_{j\neq i}\psi_{X_j}(X_j)}\indicator{X_i<\psi_{X_i}^{-1}(0)}}\;.
\end{gather*}
Therefore the extra utility derived from our shading is just 
$$
U_i^{\text{Myerson}}(\beta_{\text{thresh}}) - U_i^{\text{Myerson}}(\beta_{\text{truth}}) = \Exp{(X_i-\eps)\indicator{\eps\geq \max_{j\neq i}\psi_{X_j}(X_j)}\indicator{X_i<\psi_{X_i}^{-1}(0)}}
$$
If $X_j$'s are independent of $X_i$, we have 
$$
U_i^{\text{Myerson}}(\beta_{\text{thresh}})(\epsilon) - U_i^{\text{Myerson}}(\beta_{\text{truth}})(\epsilon) = \Exp{(X_i-\eps)P(\eps\geq \max_{j\neq i}\psi_{X_j}(X_j))\indicator{X_i<\psi_{X_i}^{-1}(0)}}
$$
When the $X_j$'s are independent of each other and we pick $\eps=0$ we get 
$$
\lim_{\eps\tendsto 0}U_i^{\text{Myerson}}(\beta_{\text{thresh}})(\epsilon) - U_i^{\text{Myerson}}(\beta_{\text{truth}})(\epsilon)=\left[\prod_{j\neq i}P(X_j\leq \psi_j^{-1}(0))\right]\Exp{X_i\indicator{X_i<\psi_{X_i}^{-1}(0)}}
$$
Going back to our work on second price auctions, the gain from thresholding in a continuous manner at 0 was 
$$
U_i^{\text{Lazy}}(\beta_{\text{thresh}}) - U_i^{\text{Lazy}}(\beta_{\text{truth}})=\Exp{XG(t_0(X))\indicator{X\leq \psi^{-1}(0)}}\;.
$$
In this case, $t_0(x)=\psi^{-1}(0) (1-F(\psi^{-1}(0)))/(1-F(x))$, $G(x)=F^{K-1}(x)$ using symmetry and independence. 
As we've seen above, 
$$
U_i^{\text{Myerson}}(\beta_{\text{thresh}}) - U_i^{\text{Myerson}}(\beta_{\text{truth}})=\Exp{X\indicator{X\leq \psi^{-1}(0)}}F^{K-1}(\psi^{-1}(0))=\Exp{X\indicator{X\leq \psi^{-1}(0)}}G(\psi^{-1}(0))\;.
$$
Now since $t_0(x)\leq \psi^{-1}(0)$, we see that in the symmetric case, for the 1-strategic player we have 
$$
\text{extra gain}_{\text{Myerson}}\geq \text{extra gain}_{\text{second price}}\;, 
$$
since $G(\psi^{-1}(0))\geq G(t_0(x))$ when $x\leq \psi^{-1}(0)$. 

Of course, in the symmetric case, the truthful revenue is the same in the Myerson and 2nd price lazy auction. As such the relative gain is also higher in the Myerson auction than in the second price auction.
\end{proof}
\paragraph{Example: $K$ symmetric bidders, unif[0,1] distribution, 1 strategic} In this case, $\psi_i^{-1}(0)=1/2$, $\int_0^{1/2}xf(x)dx=\int_0^{1/2}x dx=1/8$. And $\prod_{j\neq i}P(X_j\leq \psi_j^{-1}(0))=2^{-(K-1)}$. The truthful revenue/utility is the same as in a 2nd price auction with reserve at 1/2. Elementary computations show that this utility is $\int_{1/2}^1 x^{K-1}-x^K dx=(1-(K+2)/2^{K+1})/(K(K+1))$. 
\paragraph{Numerics for $K=2$} When $K=2$, the utility is 1/12 in the truthful case. And the extra utility is 1/16, i.e. $75\%$ of the utility in the truthful case.  Hence the utility of the shaded strategy is $7/48$. We note that the gain is even larger than for a 2nd price auction with monopoly reserve where the extra utility was $57\%$
\paragraph{Remarks about the asymmetric case} In the asymmetric case, it becomes harder to make a relative comparison of gains. That is because the truthful revenue in Myerson and 2nd price auctions are different. Furthemore, for the extra gain, even if $X_j$'s are independent, we have to compare 
$$
\Exp{X_i\indicator{X_i\leq \psi_i^{-1}(0)}\prod_{j\neq i}F_j(\psi_j^{-1}(0))} \text{ and }\Exp{X_i\indicator{X_i\leq \psi_i^{-1}(0)}\prod_{j\neq i}F_j\left(\frac{\psi_i^{-1}(0)(1-F_i(\psi_i^{-1}(0)))}{1-F_i(X_i)}\right)}
$$ 
In general, the comparison seem like it could go either way. An exception is the case $\psi_j^{-1}(0)\geq \psi_i^{-1}(0)$ for all $j\neq i$: then the extra revenue in the Myerson auction is greater than the extra revenue in the second price auction. 
 \subsection{Thresholding at the monopoly price in the eager second price auction with monopoly price}
 \label{appendix_eager}
\begin{customlemma}{4}
Consider the one-strategic setting with $F_{X_i}$ the value distribution of the strategic bidder $i$ that we assume regular. The K-1 other bidders have the same value distribution than bidder $i$ and bid truthfully. Lets not $\beta_{truth}$ the truthful strategy and $\beta_{thresh}$ the thresholded strategy at the monopoly price. The utility of the truthful bidder in the Myerson auction $U_i^{Myerson}$ and in the lazy second price auction $U_i^{Lazy}$ verifies
\begin{equation*} 
U_i^{Eager}(\beta_{thresh}) - U_i^{Eager}(\beta_{truth}) \geq U_i^{Lazy}(\beta_{thresh}) - U_i^{Lazy}(\beta_{truth})
\end{equation*}
\end{customlemma}
\begin{proof} 
The proof is very similar to the previous one. We recall that in the eager version of the second price auction, the winner of the auction is the bidder with the highest bid among bidders who clear their reserve price and she pays the maximum between the second highest bid and her reserve price.  

To complete the proof, we need to remark that in the symmetric case, the utility of the strategic bidder in the eager second price auction is 
$$
U_i^{\text{Eager}}(\beta)=\Exp{(X_i-\psi_{B_i}(B_i))G_i(\beta(X_i))\indicator{B_i\geq \max_{j\neq i}(psi_{B_j}^{-1}(0))}}\;.
$$
with $G_i$ the distribution of the highest bid of the competition above their reserve price. As we are in the symmetric case and all the bidders different than $i$ are bidding truthfully: 
$$
\forall x \in [0,\psi_X^{-1}(0)], G_i(x) = F_X^{K-1}(\psi_{X}^{-1}(0))\;.
$$
Hence, 
$$
U_i^{\text{Eager}}(\beta_{\text{thresh}}) - U_i^{\text{Lazy}}(\beta_{\text{lazy}}) = \mathbb{E}\bigg(X\big(F_X^{K-1}(\psi_{F_X}^{-1}(0)) - F_X^{K-1}(x)\big)\textbf{1}_{[X\leq \psi_{F_X}^{-1}(0)]}\bigg) \geq 0 
$$
\end{proof}

\end{document}